\newcommand{%
    \input{}%
}[1]{%
    \input{#1}%
}
\tikzset{vertexSec4/.style={draw=black, fill=white, auto=left,circle,minimum size=14pt,inner sep=0pt}} 
\tikzset{vertexSpSec4/.style={draw=blue, very thick, fill=white, auto=left,circle,minimum size=14pt,inner sep=0pt}} 
\newcommand{\convexpath}[2]{
  [   
  create hullcoords/.code={
    \global\edef\namelist{#1}
    \foreach [count=\counter] \nodename in \namelist {
      \global\edef\numberofnodes{\counter}
      \coordinate (hullcoord\counter) at (\nodename);
    }
    \coordinate (hullcoord0) at (hullcoord\numberofnodes);
    \pgfmathtruncatemacro\lastnumber{\numberofnodes+1}
    \coordinate (hullcoord\lastnumber) at (hullcoord1);
  },
  create hullcoords
  ]
  ($(hullcoord1)!#2!-90:(hullcoord0)$)
  \foreach [
  evaluate=\currentnode as \previousnode using \currentnode-1,
  evaluate=\currentnode as \nextnode using \currentnode+1
  ] \currentnode in {1,...,\numberofnodes} {
    let \p1 = ($(hullcoord\currentnode) - (hullcoord\previousnode)$),
    \n1 = {atan2(\y1,\x1) + 90},
    \p2 = ($(hullcoord\nextnode) - (hullcoord\currentnode)$),
    \n2 = {atan2(\y2,\x2) + 90},
    \n{delta} = {Mod(\n2-\n1,360) - 360}
    in 
    {arc [start angle=\n1, delta angle=\n{delta}, radius=#2]}
    -- ($(hullcoord\nextnode)!#2!-90:(hullcoord\currentnode)$) 
  }
}
\newcommand{\yes}{{\sf YES}}
\newcommand{\no}{{\sf NO}}
\newcommand{\ini}{{\sf s}} 
\newcommand{\tar}{{\sf t}} 
\newcommand{\diam}[1]{{\sf diam}(#1)} 
\newcommand{\dg}[2]{d_{#1}(#2)} 
\newcommand{\neig}[2]{N_{#1}(#2)} 
\newcommand{\inc}[2]{\delta_{#1}(#2)} 
\newcommand{\spt}{T} 
\newcommand{\thr}{d} 
\newcommand{\unip}[3]{{#1}[#2,#3]} 
\newcommand{\inner}[1]{{\sf center}(#1)} 
\Crefname{algorithm}{Algorithm}{Algorithms}
\Crefname{section}{Sect.}{Sects.}
\Crefname{lemma}{Lemma}{Lemmas}
\Crefname{claim}{Claim}{Claims}
\Crefname{theorem}{Theorem}{Theorems}
\Crefname{proposition}{Proposition}{Propositions}
\crefname{enumi}{}{}
\crefname{caseenumi}{}{}
\crefname{caseenumii}{}{}
\crefname{enumcondi}{}{}
\Crefname{figure}{Fig.}{Figs.}
\Crefname{prop}{Prop.}{Props.}
\newcommand{\reconf}[3]{#1 \overset{#3}{\longleftrightarrow} #2}
\newcommand{\vlabel}[2]{{\sf label}_{#2}(#1)}
\newcommand{\distorig}[3]{\bar{\ell}_{#3}(#1, #2)}
\newcommand{\dist}[3]{\ell_{#3}(#1, #2)}
\newcommand{\f}[1]{f\left(#1\right)}
\newcommand{\UniC}[1]{C_{#1}}
\newcommand{\ecc}[2]{\epsilon_{#2}\left(#1\right)}
\newcommand{\uvec}[1]{\chi_{#1}}
\newcommand{\idx}[1]{i(#1)}
\newcommand{\GSTRwithsmallDaim}{\textsc{RST with Small Diameter}}
\newcommand{\recgraph}{\mathcal{G}}
\newcommand{\newweight}[1]{\ell(#1)}
\newcommand{\subT}[2]{#1_{#2}}
\newcommand{\weight}[1]{L(#1)}
\newcommand{\largedeg}[1]{{\sf large}({#1}) }
\newcommand{\set}[1]{\left\{#1\right\}}
\newcommand{\inset}[2]{\left\{#1 \;\middle|\; #2\right\}}
\newcommand{\len}[1]{\left|#1\right|}
\newcommand{\size}[1]{\len{#1}}
\newcommand{\tuple}[1]{\left(#1\right)}
\newcommand{\problemtitle}[1]{\gdef\@problemtitle{#1}}
\newcommand{\probleminput}[1]{\gdef\@probleminput{#1}}
\newcommand{\problemoutput}[1]{\gdef\@problemoutput{#1}}
  \par\addvspace{.5\baselineskip}
  \par\addvspace{.5\baselineskip}
\title{Reconfiguration of Spanning Trees with Degree Constraint or Diameter Constraint}
\titlerunning{Spanning tree reconfiguration} 
\author{Nicolas Bousquet}{CNRS, LIRIS, Universit\'e de Lyon, France
}{nicolas.bousquet@liris.cnrs.fr}{https://orcid.org/0000-0003-0170-0503}{This work was supported by ANR project GrR (ANR-18-CE40-0032).}
\author{Takehiro Ito}{Graduate School of Information Sciences, Tohoku University, Japan}{takehiro@tohoku.ac.jp}{https://orcid.org/0000-0002-9912-6898}{Partially supported by JSPS KAKENHI Grant Numbers JP18H04091, JP19K11814 and JP20H05793, Japan.}
\author{Yusuke Kobayashi}{Research Institute for Mathematical Sciences, Kyoto University, Japan}{yusuke@kurims.kyoto-u.ac.jp}{https://orcid.org/0000-0001-9478-7307}{Partially supported by JSPS KAKENHI Grant Numbers 18H05291, JP20K11692, and 20H05795, Japan.}
\author{Haruka Mizuta}{Graduate School of Information Sciences, Tohoku University, Japan}{haruka.mizuta.s4@dc.tohoku.ac.jp}{}{}
\author{Paul Ouvrard}{Universit\'e de Bordeaux, France
}{paul.ouvrard@u-bordeaux.fr}{}{This work was supported by ANR project GrR
(ANR-18-CE40-0032).}
\author{Akira Suzuki}{Graduate School of Information Sciences, Tohoku University, Japan}{akira@tohoku.ac.jp}{https://orcid.org/0000-0002-5212-0202}{Partially supported by JSPS KAKENHI Grant Numbers JP18H04091, JP20K11666 and JP20H05794, Japan.}
\author{Kunihiro Wasa}{Toyohashi University of Technology, Japan}{wasa@cs.tut.ac.jp}{https://orcid.org/0000-0001-9822-6283}{Partially supported by JST CREST Grant Numbers JPMJCR18K3 and JPMJCR1401, and JSPS KAKENHI Grant Numbers 19K20350 and JP20H05793, Japan.}
\authorrunning{N.\,Bousquet et al.} 
\keywords{combinatorial reconfiguration, spanning trees, PSPACE, polynomial-time algorithms} 
\begin{document}

\maketitle

\begin{abstract}
We investigate the complexity of finding a transformation from a given spanning tree in a graph to another given spanning tree in the same graph via a sequence of edge flips. The exchange property of the matroid bases immediately yields that such a transformation always exists if we have no constraints on spanning trees. In this paper, we wish to find a transformation which passes through only spanning trees satisfying some constraint. Our focus is bounding either the maximum degree or the diameter of spanning trees, and we give the following results. The problem with a lower bound on maximum degree is solvable in polynomial time, while the problem with an upper bound on maximum degree is PSPACE-complete. The problem with a lower bound on diameter is NP-hard, while the problem with an upper bound on diameter is solvable in polynomial time. 
\end{abstract}

\newpage

\section{Introduction}

Given an instance of some combinatorial search problem and two of its
feasible solutions, a \emph{reconfiguration problem} asks whether one solution can be transformed into the other in a step-by-step fashion, such that each intermediate solution is also feasible.
Reconfiguration problems capture dynamic situations, where some
solution is in place and we would like to move to a desired alternative
solution without becoming infeasible. A systematic study of the complexity of reconfiguration problems was initiated in~\cite{Ito11}. Recently the topic has gained a lot of attention in the context of CSP and graph problems, such as the independent set problem, the matching problem, and the dominating set problem. 
For an overview of recent results on reconfiguration problems, the reader is referred to the surveys of van den Heuvel~\cite{vHeuvel13} and Nishimura~\cite{Nishimura18}.

In this paper, our reference problem is the spanning tree problem. Let $G=(V,E)$ be a connected graph on $n$ vertices. A \emph{spanning tree} of $G$ is a subgraph of $G$ which is a tree (connected acyclic subgraph) and includes all the vertices in $G$. 
Spanning trees naturally arise in various situations such as routing or discrete geometry. 
In order to define a valid step-by-step transformation, an adjacency relation on the set of feasible solutions is needed. 
 Let $T_1$ and $T_2$ be two spanning trees of $G$. 
 We say that $T_1$ and $T_2$ are \emph{adjacent} by an \emph{edge flip} if there exist $e_1 \in E(T_1)$ and $e_2 \in E(T_2)$ such that $E(T_2) = (E(T_1) \setminus \{e_1\} ) \cup \{e_2\}$. 
For two spanning trees $T_\ini$ and $T_\tar$ of $G$, a \emph{reconfiguration sequence} (or simply a \emph{transformation}) from $T_\ini$ to $T_\tar$ is a sequence of spanning trees $\langle T_0 :=T_\ini,T_1,\ldots,T_\ell:=T_\tar \rangle$ such that two consecutive spanning trees are adjacent. 
Ito et al.~\cite{Ito11} remarked that any spanning tree can be transformed into any other via a sequence of edge flips, which easily follows from the exchange property of the matroid bases.
 
 In practice, we often need that spanning trees satisfy some additional desirable properties. Even if finding a spanning tree can be done in polynomial time, the problem becomes often NP-complete when additional constraints are added.
 In this paper, we consider spanning tree reconfiguration with additional constraints.
 More formally, we study the following questions: 1) does a transformation always exist when we add some constraints on the spanning trees all along the transformation? 
2)  If not, is it possible to decide efficiently if such a transformation exists? This question was already studied for spanning trees with restrictions on the number of leaves~\cite{BousquetIKMOSW20} or vertex modification between Steiner trees~\cite{DBLP:conf/mfcs/MizutaHIZ19} for instance.
If the answer to the first question is positive, it means that we can sample uniformly at random constrained spanning trees via a simple Monte Carlo Markov Chain. When the answer is negative, we might still want to find a transformation if possible between a fixed pair of solutions, for instance for updating a routing protocol in a network step by step without breaking the network and not over-requesting nodes during the transformation. 

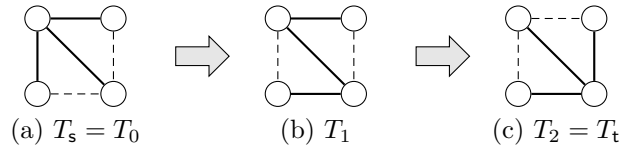
\begin{figure}[tb]
      \centering
    \begin{tikzpicture}

    \begin{scope}
    \node[circle, draw] (v1) at (0, 1) {}; 
    \node[circle, draw] (v2) at (1, 1) {}; 
    \node[circle, draw] (v3) at (0, 0) {}; 
    \node[circle, draw] (v4) at (1, 0) {}; 
    \node at (0.5,-0.5) {(a) $T_\ini = T_0$};
    \draw (v1) -- (v2) [thick];
    \draw (v1) -- (v3) [thick];
    \draw (v1) -- (v4) [thick];
    \draw (v2) -- (v4) [densely dashed];
    \draw (v3) -- (v4) [densely dashed];
    \end{scope}

    \begin{scope}[xshift=6em]
        \node[single arrow,draw=black,fill=black!10,minimum height=2em,single arrow head extend=0.4em] (arr) at (0, 0.5) {};
    \end{scope}

    \begin{scope}[xshift=9em]
    \node[circle, draw] (v1) at (0, 1) {}; 
    \node[circle, draw] (v2) at (1, 1) {}; 
    \node[circle, draw] (v3) at (0, 0) {}; 
    \node[circle, draw] (v4) at (1, 0) {}; 
    \draw (v1) -- (v2) [thick];
    \draw (v1) -- (v3) [densely dashed];
    \draw (v1) -- (v4) [thick];
    \draw (v2) -- (v4) [densely dashed];
    \draw (v3) -- (v4) [thick];
    \node at (0.5,-0.5) {(b) $T_1$};
    \end{scope}
    
    \begin{scope}[xshift=15em]
        \node[single arrow,draw=black,fill=black!10,minimum height=2em,single arrow head extend=0.4em] (arr) at (0, 0.5) {};
    \end{scope}

    \begin{scope}[xshift=18em]
    \node[circle, draw] (v1) at (0, 1) {}; 
    \node[circle, draw] (v2) at (1, 1) {}; 
    \node[circle, draw] (v3) at (0, 0) {}; 
    \node[circle, draw] (v4) at (1, 0) {}; 
    \draw (v1) -- (v2) [densely dashed];
    \draw (v1) -- (v3) [densely dashed];
    \draw (v1) -- (v4) [thick];
    \draw (v2) -- (v4) [thick];
    \draw (v3) -- (v4) [thick];
    \node at (0.5,-0.5) {(c) $T_2 = T_\tar$};
    \end{scope}

\end{tikzpicture}%

      \caption{A reconfiguration sequence from $T_\ini$ to $T_\tar$ (with no constraint on spanning trees). 
      There is no reconfiguration sequence from $T_\ini$ to $T_\tar$ if we restrict spanning trees either with maximum degree at least three or with diameter at most two.}
      \label{fig:example}
\end{figure}

In this paper, we study {\sc Reconfiguration of Spanning Trees (RST)} with degree constraints or with diameter constraints (See \figurename~\ref{fig:example}.)
We first describe the problem with degree constraints. 
\begin{problem}
\problemtitle{RST with Small (resp.~Large) Maximum Degree}
\probleminput{A graph $G$, a positive integer $d$, and two spanning trees $\spt_\ini$ and $\spt_\tar$ in $G$ with maximum degree at most (resp.~at least) $d$. }
\problemoutput{Is there a reconfiguration sequence from $\spt_\ini$ to $\spt_\tar$ such that any spanning tree in the sequence is of maximum degree at most (resp.~at least) $d$?}
\end{problem}

\noindent
Bounding the maximum degree of spanning trees has applications for routing problems 
when we send data (i.e., a flow) along a spanning tree in a communication network. 
In this setting, the degree of a node is a measure of its load, and hence it is natural to bound the maximum degree in the spanning tree. 
In a complex dynamic networks, we want to reconfigure spanning trees on the fly to keep this property on the dynamic setting, which motivates us to study the reconfiguration problem. 

The problem of finding a spanning tree with degree bounds is studied also from the theoretical point of view.  
Notice that spanning trees with bounds on the maximum degree include Hamiltonian paths that are spanning trees of maximum degree two.
This implies that finding a spanning tree with maximum degree at most $d$ is NP-hard. 
For restricted graph classes, this search problem is investigated in~\cite{Czumaj:Strothmann:ESA:1997}. 
It is shown in~\cite{FURER1994409} that if we relax the degree bound by one, then the search problem can be solved in polynomial time. 
Its optimization variants are also studied in \cite{DBLP:conf/focs/Goemans06,10.1145/2629366}. 

We also study the problem with diameter constraints, which is formally stated as follows.  

\begin{problem}
\problemtitle{RST with Small (resp.~Large) Diameter}
\probleminput{A graph $G$, a positive integer $d$, and two spanning trees $\spt_\ini$ and $\spt_\tar$ in $G$ with diameter at most (resp.~at least) $d$. }
\problemoutput{Is there a reconfiguration sequence from $\spt_\ini$ to $\spt_\tar$ such that any spanning tree in the sequence is of diameter at most (resp.~at least) $d$?}
\end{problem}

\noindent
Spanning trees with largest possible diameter are Hamiltonian paths which receive a considerable attention.
Spanning trees with upper bound on the diameter are for instance desirable in high-speed networks like optical networks since they minimize the worst-case propagation delay to all the nodes of the graphs, see e.g.~\cite{ItalianoR98}. 
We can find a spanning tree with minimum diameter in polynomial time~\cite{10.1016/0020-0190(94)00183-Y}, and 
some related problems have been studied in the literature~\cite{HASSIN2004343,DBLP:journals/algorithmica/SpriggsKBSS04}



The problem of updating minimum spanning trees to maintain a valid spanning tree in dynamic networks is an important problem that received a considerable attention in the last decades, see for instance~\cite{BarjonCCJN14,HuangFW15}.
In this situation, the graph is dynamic and is dynamically updated at each time step. The solution at time $t$, which might not be a solution anymore at time $t+1$ (e.g. if edges of the spanning has been deleted from the graph), has to be modified with as few modifications as possible into a valid solution as good as possible. Spanning tree reconfiguration lies between the static situation (since the graph is fixed) and the dynamic situation (since the solution has to be modified).

\subsection*{Our Results}

    The contribution of this paper is to study the computational complexity of \textsc{RST with Small ({\rm or} Large) Maximum Degree} and {\sc RST with Small ({\rm or} Large) Diameter}. 
    
	\begin{restatable}{theorem}{lmd}
		\label{thm:lmd_P}
		{\sc RST with Large Maximum Degree} can be decided in polynomial time.
	\end{restatable}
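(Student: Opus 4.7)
The plan is to decompose the feasible spanning trees by a ``witness'' vertex. Let $H = \{v \in V : \deg_G(v) \geq d\}$ be the set of potential witnesses, and for each $v \in H$ let $\mathcal{T}_v$ denote the set of spanning trees $T$ of $G$ with $\deg_T(v) \geq d$. The feasible spanning trees are exactly those in $\bigcup_{v \in H} \mathcal{T}_v$, and for any feasible $T$, the witness set $W(T) = \{v \in H : \deg_T(v) \geq d\}$ is nonempty.

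The first step is an internal connectivity lemma: for every $v \in H$, any two trees in $\mathcal{T}_v$ are reconfigurable while staying in $\mathcal{T}_v$. The key observation is that whenever $\deg_T(v) < \deg_G(v)$, there is a flip strictly increasing $\deg_T(v)$: pick any $G$-edge $\{v, u\} \notin T$; its fundamental cycle in $T$ has length at least three, so it contains an edge not incident to $v$ (at most two cycle-edges can touch $v$ in a simple graph); swapping produces a tree with $\deg(v)$ larger by one. Iterating, every tree in $\mathcal{T}_v$ is connected to a canonical tree $T_v^\star$ with $\deg(v) = \deg_G(v)$. The canonical trees with this maximum degree are in bijection with tuples of spanning trees of the connected components of $G - v$, and any two such tuples are connected by within-component flips that never alter $\deg(v)$.

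Next I would build an auxiliary graph $\mathcal{H}$ on vertex set $H$ by placing an edge $\{u, v\}$ whenever $\mathcal{T}_u \cap \mathcal{T}_v \neq \emptyset$, i.e.\ whenever there exists a spanning tree with both $\deg(u), \deg(v) \geq d$. This existence test is a matroid-intersection question and is decidable in polynomial time. The characterization to aim for is: $T_\ini$ and $T_\tar$ are reconfigurable in the feasible space iff some $v_\ini \in W(T_\ini)$ and $v_\tar \in W(T_\tar)$ lie in the same connected component of $\mathcal{H}$. The forward direction is constructive: along a path $v_\ini = u_0, u_1, \ldots, u_k = v_\tar$ in $\mathcal{H}$, use the first step to reconfigure within each $\mathcal{T}_{u_i}$ and use a dual-witness tree in $\mathcal{T}_{u_i} \cap \mathcal{T}_{u_{i+1}}$ to switch the active witness.

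The main obstacle is the converse direction, namely that the witnesses along any feasible reconfiguration sequence can be tracked to form a walk in $\mathcal{H}$. The dangerous case is a single flip $T_i \to T_{i+1}$ whose witness sets are disjoint singletons $\{v\}$ and $\{w\}$: no intermediate tree of the sequence directly certifies the edge $\{v, w\}$ of $\mathcal{H}$. Such a flip forces $\deg_{T_i}(v) = \deg_{T_{i+1}}(w) = d$, and its fundamental cycle has length four through $v$ and $w$, exposing two common $G$-neighbors of $v$ and $w$. The plan is to argue structurally that this configuration either already implies existence of a dual-witness spanning tree (so $\{v,w\}$ is in $\mathcal{H}$ anyway, built for instance by using $d$ star-edges at $v$ and $d$ star-edges at $w$ while avoiding the forbidden $4$-cycle), or can be absorbed by a local detour. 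This case analysis is what I expect to be the hardest step. Once it is settled, the algorithm --- compute $H$, build $\mathcal{H}$, compute its connected components, and test whether $W(T_\ini)$ and $W(T_\tar)$ share a component --- clearly runs in polynomial time.
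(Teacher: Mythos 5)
Your first step (internal connectivity of $\mathcal{T}_v$ for a fixed witness $v$) is essentially the paper's Lemma~\ref{lmd_same} and is sound, up to one inaccurate detail: spanning trees with $\deg_T(v)=\deg_G(v)$ are not in bijection with tuples of spanning trees of the components of $G-v$, but with spanning forests of $G-v$ having exactly one neighbour of $v$ per component; connectivity among them still follows (they all contain the full star at $v$, so Lemma~\ref{lem:known} applies), so this part is fixable. The genuine gap is in your auxiliary graph $\mathcal{H}$ and the deferred ``dangerous case''. You put an edge $\{u,v\}$ only when a \emph{single} spanning tree has both $\deg(u)\ge d$ and $\deg(v)\ge d$, and you hope that a flip between two trees whose witness sets are disjoint singletons $\{v\},\{w\}$ either forces such a dual-witness tree or can be bypassed by a detour. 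Neither holds, and the resulting characterization is false. Take $d=3$ and $G$ on $\{v,w,a,b,c,e\}$ with edges $va,vb,vc,wb,wc,we$. Only $v$ and $w$ can ever reach degree $3$, and no spanning tree has $\deg(v)\ge 3$ and $\deg(w)\ge 3$ simultaneously (that would require all six edges, which contain the cycle $v\,b\,w\,c\,v$), so your $\mathcal{H}$ has no edges. Yet $T_\ini=\{va,vb,vc,cw,we\}$ and $T_\tar=\{va,vb,wb,wc,we\}$ both have maximum degree $3$ and differ by the single flip removing $vc$ and adding $wb$, so this is a {\yes}-instance on which your algorithm answers {\no}. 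There is no detour either: the witness must change in one flip, through a pair of adjacent trees each carrying only one large-degree vertex.

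The repair is exactly where your approach diverges from the paper: the edge relation of the auxiliary graph must be defined on \emph{pairs of adjacent trees}, i.e., $uv$ is an edge whenever there exist spanning trees $T\leftrightarrow T'$ (differing in at most one flip, possibly equal) with $\deg_T(u)\ge d$ and $\deg_{T'}(v)\ge d$. With this weaker relation the converse direction of the characterization becomes immediate (pick any witness in each tree of a feasible sequence and you get a walk in the auxiliary graph), and the whole difficulty moves to testing the relation in polynomial time. The paper does this by an explicit combinatorial criterion (Lemma~\ref{lem:lmd_edge}): $uv$ is an edge iff $|\neig{G}{u}|\ge d$, $|\neig{G}{v}|\ge d$, and $|\neig{G}{u}\cup\neig{G}{v}|\ge 2d-1$ if $uv\in E(G)$ and $\ge 2d-2$ otherwise --- note the threshold is one lower than what a single dual-witness tree would require, which is precisely the slack your definition loses. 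Your matroid-intersection test for a dual-witness tree, even if implemented correctly, tests the wrong relation; as the example shows, the case analysis you flag as ``the hardest step'' is not merely hard but cannot be completed under your definition of $\mathcal{H}$.
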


\noindent
    Our proof for \cref{thm:lmd_P} is in two steps. First we show that if there exists a vertex that has degree at least $d$ in both $T_\ini$ and $T_\tar$, then there is a reconfiguration sequence between them. Then, for two vertices $u$ and $v$, we prove that we can decide in polynomial time if there exists a pair of adjacent spanning trees $T$ and $T'$ such that $u$ has degree at least $d$ in $T$ and $v$ has degree at least $d$ in $T'$. These results together will imply \cref{thm:lmd_P}.

     
    While the existence of a spanning tree with maximum degree at least $d$ can be decided in polynomial time, it is NP-complete to find a spanning tree of maximum degree at most $2$ (that is a Hamiltonian path). 
    A similar behavior holds for RST with degree constraints. 
	\begin{restatable}{theorem}{smd}
		\label{thm:smd_hard}
		For every $d \ge 3$, {\sc RST with Small Maximum Degree} is PSPACE-complete.
	\end{restatable}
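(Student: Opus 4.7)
The proof has two parts: membership in PSPACE and PSPACE-hardness.

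For PSPACE-membership, any candidate configuration is a spanning tree of the input graph $G$, storable in $\Order{n}$ space, and checking whether two trees differ by a single edge flip and both satisfy the degree bound is polynomial. A nondeterministic algorithm that guesses the sequence of flips one at a time therefore uses only polynomial space, and Savitch's theorem gives membership in PSPACE.

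For hardness, I would reduce from Nondeterministic Constraint Logic (NCL) of Hearn and Demaine, which is PSPACE-complete even on planar AND/OR graphs of maximum degree three. Given an NCL instance (an AND/OR graph $H$ with initial and target valid orientations), the plan is to build a graph $G$ together with two spanning trees $\spt_\ini, \spt_\tar$ of maximum degree at most $d$ such that $\spt_\ini$ reconfigures to $\spt_\tar$ under the small-degree constraint if and only if the NCL instance is a yes-instance. Each NCL edge is represented by a small subgraph containing two ``choice'' edges, exactly one of which lies in the spanning tree and encodes the current orientation of that NCL edge. Each NCL vertex becomes a local gadget with a distinguished ``load'' vertex whose spanning-tree degree counts the number of edges oriented inward; the cap $d$ at this vertex enforces the corresponding AND/OR constraint. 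A single NCL orientation flip is then simulated by a short sequence of spanning-tree edge flips that swaps the two choice edges in an edge gadget, and one verifies conversely that every valid flip in $G$ either realizes such a swap or is an internal re-routing inside a gadget that leaves the encoded NCL orientation unchanged.

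The main obstacle will be controlling the correspondence in both directions. On one side, one must ensure that every spanning tree of $G$ with maximum degree at most $d$ decomposes into locally canonical per-gadget states, so that no ``junk'' configurations outside the set of valid NCL orientations are reachable. On the other side, one must rule out ``shortcut'' flips that would allow an illegal NCL move by temporarily using an alternative routing within the same degree budget; typically this is achieved by designing edge gadgets so that any change of a load vertex's degree must be mediated by exactly the intended swap, which is where most of the technical engineering sits. This is the hardest part of the construction, and standard NCL reductions for reconfiguration problems already illustrate how delicate the choice of gadget connectivity has to be.

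Once the reduction is established for $d = 3$, the extension to arbitrary $d \ge 3$ is handled by a padding trick: at each gadget vertex whose degree budget is critical, one attaches a rigid auxiliary substructure (for instance, a bundle of pendant paths anchored through additional helper vertices) engineered so that, in every spanning tree of maximum degree at most $d$, exactly $d-3$ of its edges must be incident to that vertex. This shifts the effective budget at every critical vertex uniformly while preserving the bijection between feasible local states and the canonical states of the $d = 3$ construction, and therefore transfers the hardness proof to all $d \ge 3$.
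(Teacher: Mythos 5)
Your overall strategy (membership via a nondeterministic polynomial-space walk plus Savitch, hardness via a reduction from NCL reconfiguration of Hearn and Demaine) is exactly the paper's route, and your padding idea for general $d\ge 3$ (pendant structures that eat up $d-b(v)$ units of degree at each gadget vertex) is essentially the paper's device of attaching $d-b(v)$ pendant leaves to every gadget vertex. However, the proposal stops at the point where the real work begins, and the one concrete mechanism you do sketch is flawed. You propose that each NCL vertex gets a single ``load'' vertex whose spanning-tree degree \emph{counts the edges oriented inward}, with the cap $d$ enforcing the AND/OR constraint. The NCL constraint is a \emph{lower} bound (incoming weight at least two), and for an AND vertex it is a weighted condition (the weight-2 edge inward, or both weight-1 edges inward), so it cannot be expressed as a cardinality cap at one vertex. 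The paper instead enforces the constraints indirectly: each gadget vertex $v$ carries a budget $b(v)\in\{1,2,3\}$ realized by pendant leaves, an {\sc or} vertex is a claw whose center $r_u$ (budget $1$) must attach to exactly one arm, forcing that arm's NCL edge inward, and an {\sc and} vertex is a dedicated seven-edge gadget on $r_u,w_u,x_u,y_u,v_{u,e_0},v_{u,e_1},v_{u,e_2}$ in which connectivity plus the budgets force the tree to contain one of two specific routings, exactly matching ``$e_0$ inward'' or ``$e_1$ and $e_2$ inward.'' In addition, the paper needs a global auxiliary tree $T^*$ whose leaves are designated gadget vertices, and a structural lemma showing that every degree-bounded spanning tree must contain all of $E(T^*)$, all pendant edges, and exactly one of the two choice edges per NCL edge; this is what rules out the ``junk'' configurations and shortcut reroutings you mention but do not eliminate.

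The second missing piece is the forward simulation: it is not automatic that adjacent NCL orientations yield reconfigurable trees, because flipping one NCL edge can force the {\sc and} gadget at its endpoint to switch between its two internal routings, which changes several tree edges at once. The paper proves a separate lemma (by induction on $|E(T_1)\setminus E(T_2)|$, with a case analysis over {\sc or} gadgets, {\sc and} gadgets differing in one edge, and {\sc and} gadgets differing in two edges) showing that any two degree-bounded trees encoding adjacent configurations are connected by a sequence of single flips that never violates the bound and never passes through an invalid orientation. Your proposal acknowledges both obstacles (``no junk configurations,'' ``no shortcut flips'') but explicitly defers them, so as written it is a plan for a reduction rather than a proof; the gadget design and these two correctness lemmas are precisely the content of the theorem.
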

	
	\noindent
	The proof for \cref{thm:smd_hard} consists of a reduction from NCL (Nondeterministic Constraint Logic), known to be PSPACE-complete~\cite{HD05}. 
	This result is tight in the following sense: if at least one of $T_\ini$ and $T_\tar$ has maximum degree at most $d-1$, then the problem becomes polynomial-time solvable (shown in Theorem~\ref{thm:smd_d-1}).
    It is worth noting that this behavior is similar to the result for the search problem shown in~\cite{FURER1994409}; 
    while finding a spanning tree with maximum degree at most $d$ is NP-hard, 
    if we relax the degree bound by one, then the problem can be solved in polynomial time. 

	\smallskip

In the second part of the paper, we study {\sc RST with Small {\rm or} Large Diameter}. 
	\begin{restatable}{theorem}{ldiam}
		\label{thm:ldiam_hard}
		{\sc RST with Large Diameter} is NP-hard even restricted to planar graphs.
	\end{restatable}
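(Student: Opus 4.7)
I plan to reduce from the Hamiltonian $(s,t)$-Path problem on planar graphs, which is NP-complete. Given a planar instance $(H,s,t)$ with $n := |V(H)|$, my goal is to construct, in polynomial time, a planar graph $G$, an integer $d$, and two spanning trees $T_\ini, T_\tar$ of $G$ with $\diam{T_\ini}, \diam{T_\tar} \ge d$, such that a reconfiguration sequence from $T_\ini$ to $T_\tar$ (through spanning trees of diameter $\ge d$) exists if and only if $H$ admits a Hamiltonian $s$-$t$ path.

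The core idea is to attach two pendant paths $A, B$ of length $L := n$ at $s$ and $t$, respectively. Since the tips of $A$ and $B$ are degree-one vertices of $G$, by a simple induction every spanning tree of $G$ must contain all edges of $A \cup B$; hence the longest $A$-to-$B$ path in any spanning tree has length exactly $2L + \ell$, where $\ell$ is the length of the unique $s$-to-$t$ path in the tree's restriction to $H$. Setting $d := 2L + (n-1)$ therefore forces any spanning tree with diameter $\ge d$ to satisfy $\ell \ge n-1$, which can happen only if the induced $s$-$t$ path in $H$ visits every vertex of $H$, i.e., is a Hamiltonian $s$-$t$ path. To allow $T_\ini$ and $T_\tar$ to be defined in polynomial time without presupposing Hamiltonicity of $H$, I plan to attach a small \emph{switching gadget} at $s$ and $t$ admitting two distinct spanning-tree configurations, each providing an alternative long internal path of length $d$ within the gadget; both configurations extend to spanning trees of $G$ with diameter exactly $d$, and these are chosen as $T_\ini$ and $T_\tar$.

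For the forward direction, when $H$ admits a Hamiltonian $s$-$t$ path $P$, I would exhibit an explicit transformation: starting from $T_\ini$, perform edge flips in the $H$-part to arrive at a tree whose $s$-to-$t$ route is $P$ while keeping the gadget configuration unchanged (every intermediate tree retains diameter $\ge d$ via the gadget); then perform the single flip switching the gadget configuration, at a moment when the diameter is also guaranteed by $P$ through the pendant paths; finally undo the $H$-side modifications to reach $T_\tar$. The reverse direction is the crux: any edge-flip sequence from $T_\ini$ to $T_\tar$ must at some step change the gadget's configuration, and at that step the gadget's internal long path is no longer available, so the diameter constraint forces the intermediate spanning tree to realize its diameter via the pendant paths combined with a length-$(n-1)$ $s$-$t$ path through $H$, i.e., a Hamiltonian $s$-$t$ path.

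The main obstacle is designing the switching gadget so that (a) its composition with $H$ remains planar; (b) it admits exactly two ``stable'' configurations that provide the designated trees $T_\ini, T_\tar$; and (c) no intermediate configuration allows a spanning tree of diameter $\ge d$ unless the $H$-part contains a Hamiltonian $s$-$t$ path. Verifying this combinatorial property of the gadget is the technical heart of the proof; the remaining pieces (polynomial-time constructibility of $T_\ini$ and $T_\tar$, planarity of $G$, the diameter computations, and the two directions of the equivalence) are then straightforward checks.
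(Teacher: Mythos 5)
There is a genuine gap: the entire difficulty of this reduction lies in the ``switching gadget,'' and your proposal leaves it undesigned, explicitly deferring exactly the property (your condition (c)) that makes the equivalence work. Moreover, the shape you sketch for it is problematic. If the gadget itself contains an internal path of length $d$ in each of its two stable configurations, then a single edge flip cannot make all long paths vanish: removing one gadget edge leaves long subpaths, the added edge creates new ones, and these remnants can combine with the pendant paths $A$ and $B$ (or with each other) to keep some pair of vertices at distance at least $d$ at the switching moment without any Hamiltonian path in $H$. You also never establish \emph{where} the diameter of an intermediate tree is realized; with $L=n$ and a gadget of size roughly $d\approx 3n$, the diameter need not be attained between the tips of $A$ and $B$, so the identity ``diameter $=2L+\ell$'' that your reverse direction relies on is unjustified. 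Your forward direction and the use of pendant paths are fine, but as written the crux of the hardness proof is missing.

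The paper resolves exactly this tension by separating the two roles you try to combine. The long ``alternative'' route is a path $P_y$ of length $n'-2$ joining $s'$ and $t'$ that is \emph{never} removed, while the switching part is a constant-size diamond $D$ on $\{t',t_1,t_2,t_3\}$ attached at $t'$: its two stable spanning configurations give a $t'$--$t_3$ path of length $3$, and any configuration reachable by one flip gives length exactly $2$. The pendant paths have length $3n'$ each and $d=7n'+1$, and a separate lemma shows that for \emph{every} spanning tree the diameter equals the distance between the two pendant tips (the pendant paths are long enough to dominate everything else). Then at the first flip inside $D$ the arithmetic $3n'+\ell(s',t')+2+3n'\ge 7n'+1$ forces $\ell(s',t')\ge n'-1>|E(P_y)|$, so the $s'$--$t'$ route must be a Hamiltonian path in $G'$; conversely, a Hamiltonian path supplies the slack of one unit needed to flip the diamond. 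If you want to salvage your plan, you should replace your large gadget by such a constant-size switch plus a permanent near-Hamiltonian shortcut, and prove the analogue of the paper's lemma pinning the diameter to the pendant tips.
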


\noindent
    The proof for \cref{thm:ldiam_hard} consists of a reduction from the {\sc Hamiltonian Path} problem, which is not a reconfiguration problem but the original search problem. 
    We note that since the length of a reconfiguration sequence is not necessarily bounded by a polynomial in the input size, it is unclear whether {\sc RST with Large Diameter} belongs to the class NP.
    In a similar way to {\sc RST with Small Maximum Degree}, we conjecture that {\sc RST with Large Diameter} is PSPACE-complete. 

    Finally, the main technical result of the paper is the following positive result.
	\begin{restatable}{theorem}{sdiam}
		\label{thm:sdiam_P}
		{\sc RST with Small Diameter} is polynomial-time solvable. 
	\end{restatable}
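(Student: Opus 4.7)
The plan is to exploit the rigid structure imposed by the diameter bound, reducing the reconfiguration question to a polynomial-time-computable connectivity question in an auxiliary graph. Throughout, let $r = \lfloor d/2 \rfloor$.

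\textbf{Step 1: Enumeration of centers.} Every tree of diameter at most $d$ admits a unique center: a single vertex if $d$ is even, and a single edge if $d$ is odd, from which every vertex of the tree lies within distance $r$ (within $r$ of one endpoint of the center edge and $r+1$ of the other). First I would enumerate all candidate centers in $G$ and discard those that are clearly infeasible, e.g., those from which some vertex of $G$ is too far. Writing $\mathcal{T}_c$ for the set of spanning trees of $G$ with diameter at most $d$ and center $c$, the collection of feasible centers has size $\Order{n^2}$.

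\textbf{Step 2: Canonical representative and reconfiguration to it.} For each feasible center $c$, pick a canonical element $T^c \in \mathcal{T}_c$, say the BFS tree from $c$ under a fixed tie-breaking rule for parents. The crucial lemma is that every $T \in \mathcal{T}_c$ can be reconfigured to $T^c$ by a sequence of edge flips all lying in $\mathcal{T}_c$. I would prove this by an induction that processes vertices on which $T$ and $T^c$ disagree in decreasing order of depth from $c$, flipping one parent edge at a time to match $T^c$. Because $v$ is processed only after all disagreements below it have been resolved, the flip at $v$ does not disturb the depths of vertices below it; and because the canonical parent lies at depth at most $r-1$, the diameter bound is preserved at every intermediate step.

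\textbf{Step 3: Auxiliary graph on centers.} Define a graph $H$ whose vertices are the feasible centers, with an edge between $c_1$ and $c_2$ exactly when there exist edge-adjacent spanning trees $T_1 \in \mathcal{T}_{c_1}$ and $T_2 \in \mathcal{T}_{c_2}$. This condition is polynomial-time checkable since a single flip can shift the center by only a bounded amount, so only $\Order{n}$ candidates for $c_2$ need be examined for each $c_1$, and for each pair the existence of a compatible flip reduces to local conditions in $G$. By Step 2, the answer to \textsc{RST with Small Diameter} is \yes if and only if $\inner{\spt_\ini}$ and $\inner{\spt_\tar}$ belong to the same connected component of $H$, which is decided in polynomial time.

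\textbf{Main obstacle.} Step 2 is the technical heart of the argument. A naive flip of the parent of $v$ from its current parent $p_T(v)$ to its canonical parent $p_{T^c}(v)$ can fail in two ways: if $p_{T^c}(v)$ lies in the subtree of $T$ rooted at $v$, the flip disconnects the tree; and if $p_{T^c}(v)$ sits deeper in $T$ than $p_T(v)$, the subtree rooted at $v$ may exceed depth $r$ after the flip, breaking either the diameter bound or the center invariant. Resolving both requires a carefully chosen flip order together with occasional intermediate ``detour'' parents, and certifying that every intermediate spanning tree remains in $\mathcal{T}_c$ is where I expect the bulk of the effort to concentrate.
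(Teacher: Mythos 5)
Your Steps 1 and 2 do match the first half of the paper's argument: the paper also characterizes small-diameter trees by center points (vertices or edge midpoints) and shows, essentially by your BFS-tree argument, that any two spanning trees sharing a center point are reconfigurable through trees of diameter at most $d$ (\cref{lem:share:inner:vertex}); contrary to your ``main obstacle'' assessment, this step is comparatively easy, since flipping the offending vertex of minimum depth onto its BFS parent never increases any distance from the center, so no detour parents or delicate orderings are needed.

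The genuine gap is in Step 3. You assert that adjacency of two centers $c_1, c_2$ in the auxiliary graph ``reduces to local conditions in $G$'' and is therefore polynomial-time checkable. It does not: deciding whether there exist edge-adjacent spanning trees $T_1, T_2$ of diameter at most $d$ centered at $c_1$ and $c_2$ is equivalent to deciding whether some spanning pseudotree $Q$ (a spanning tree plus at most one edge) has \emph{both} $c_1$ and $c_2$ as center points, i.e., every vertex of $G$ within distance $d/2$ of both in $Q$. This is a global constrained spanning-structure existence question, and the paper explicitly states that it does not know how to decide it in polynomial time; the issue is not the number of candidate pairs (all $\Order{(|V|+|E|)^2}$ pairs are tried anyway) but the per-pair test. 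The paper's actual technical contribution is a workaround: it restricts attention to ``good'' triples $(r_1,r_2,Q)$, defined via a perturbation of edge lengths so that the label $\vlabel{v}{r_1,r_2,Q}$ behaves like a shortest-path distance in an augmented graph, shows these can be found in polynomial time by guessing $\Order{1}$ split edges and computing shortest path trees (\cref{prop:reconf:good}), and then proves—by an involved induction with a long case analysis (\cref{prop:getting:better})—that the resulting subgraph $\recgraph'$ has the same reachability as the full auxiliary graph $\recgraph$ (\cref{thm:good:sequence}). Without either a proof that your adjacency test is decidable in polynomial time or a substitute for this ``good triple'' machinery, your Step 3, and hence the proposal, does not go through.
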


\noindent

The proof for \cref{thm:sdiam_P} follows a similar scheme to \cref{thm:lmd_P}. First we show that all the spanning trees with the same ``center'' can be transformed into any other. Therefore, it suffices to consider the transformation of the centers. However, for two vertices $u$ and $v$, it is hard to determine whether there exists a pair of adjacent spanning trees $T$ and $T'$ such that $u$ and $v$ are centers of $T$ and $T'$, respectively. Indeed, we do not know whether it can be done in polynomial time.  
The core of the proof is to focus on only ``good'' pairs of centers for which the existence of a desired pair of spanning trees can be tested in polynomial time (see \cref{prop:reconf:good}). 
A key ingredient of our proof consists in proving that if there is a reconfiguration sequence between the spanning trees, then there exists a sequence of centers from the initial center to the final center in which any consecutive centers form a good pair (see \cref{thm:good:sequence}).




\subsection*{Organization}
The rest of this paper is organized as follows. 
We first give some preliminaries in Section~\ref{sec:preliminaries}. 
Next, Sections~\ref{sec:lmd} and \ref{sec:small_deg} are devoted to {\sc RST with Large Maximum Degree} (Theorem~\ref{thm:lmd_P}) and {\sc RST with Small Maximum Degree} (Theorem~\ref{thm:smd_hard}), respectively. 
Then, Sections~\ref{sec:diam_large} and \ref{sec:small_diam} are devoted to {\sc RST with Large Diameter} (Theorem~\ref{thm:ldiam_hard}) and {\sc RST with Small Diameter} (Theorem~\ref{thm:sdiam_P}), respectively. 
Some technical parts in the proof of Theorem~\ref{thm:sdiam_P} are deferred to Sections~\ref{sec:findgoodtriple} and \ref{sec:goodsequence}.
Finally, we conclude this paper by giving some remarks in Section~\ref{sec:conclusion}. 


\section{Preliminaries}
\label{sec:preliminaries}

Throughout this paper, we consider graphs that are simple and loopless.
Let $G = (V, E)$ be a graph.   
For a vertex $v \in V$, we denote by $\dg{G}{v}$ the \emph{degree} of $v$ in $G$, by $\neig{G}{v}$ the (open) \emph{neighborhood} of $v$ in $G$, and by $\inc{G}{v}$ the set of edges incident to $v$ in $G$.
Since $G$ is simple, $\dg{G}{v} = |\neig{G}{v}| = |\inc{G}{v}|$. 
For a tree $T$, a vertex $v$ is a \emph{leaf} if its degree is one, and is an \emph{internal node} otherwise. 
A \emph{branching node} is a vertex of degree at least three. 

For a subgraph $H$ of $G$ and an $F \subseteq E$, we denote by
$H - F$ the graph $(V(H), E(H) \setminus F)$ and by $H + F$ the graph $(V(H), E(H) \cup F)$. 
To avoid cumbersome notation, if $e \in E$, $H -\{e\}$ and $H + \{e\}$ will be denoted by $H-e$ and $H+e$, respectively. 

For $u, v \in V$, the \emph{distance} $\distorig{u}{v}{G}$ between $u$ and $v$ is defined as the minimum number of edges in a shortest $u$-$v$ path. 
For $v \in V$, the \emph{eccentricity} $\ecc{v}{G}$ of $v$ in $G$ is the maximum distance between $v$ and any vertex in $G$, that is, $\ecc{v}{G} := \max\inset{\distorig{v}{u}{G}}{u \in V}$. 
The \emph{diameter} $\diam{G}$ of $G$ is the maximum eccentricity among $V$. 
That is, $\diam{G} := \max\inset{\ecc{v}{G}}{v \in V} = \max\inset{\distorig{u}{v}{G}}{u, v \in V}$.

For two spanning trees $T$ and $T'$, we denote $T \leftrightarrow T'$ if $|E(T) \setminus E(T')|=|E(T') \setminus E(T)|\le 1$, that is, either $T=T'$ or $T$ and $T'$ are adjacent.  
We say that $\spt_\ini$ is {\em reconfigurable} to $\spt_\tar$ if there exists a reconfiguration sequence from $\spt_\ini$ to $\spt_\tar$ such that any spanning tree in the sequence satisfies a given degree/diameter constraint. 
When we have no degree/diameter constraints,  
since spanning trees form a base family of a matroid, 
the exchange property of the matroid bases ensures that 
there always exists a reconfiguration sequence between any pair of spanning trees. 
		\begin{lemma}[see Proposition~1 in \cite{Ito11}]\label{lem:known}
			Let $G$ be a graph and $\spt$ and $\spt^\prime$ be  two spanning trees of $G$. There exists a reconfiguration sequence $\langle \spt = \spt_0,\spt_1,\ldots,\spt_\ell=\spt^\prime \rangle$ between $\spt$ and $\spt^\prime$ such that for all $i \in \{ 0,1,\ldots,\ell \}$, the spanning tree $\spt_i$ contains all the edges in $E(\spt) \cap E(\spt^\prime)$.
		\end{lemma}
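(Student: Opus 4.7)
The plan is to prove the lemma by induction on $k := |E(\spt) \setminus E(\spt^\prime)|$. Since $\spt$ and $\spt^\prime$ are both spanning trees of $G$, they have the same number of edges, so $|E(\spt) \setminus E(\spt^\prime)| = |E(\spt^\prime) \setminus E(\spt)|$, and the base case $k = 0$ is immediate (take $\ell = 0$). All the work is in producing one good swap that reduces $k$ and does not disturb the common edges $F := E(\spt) \cap E(\spt^\prime)$.

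For the inductive step I would proceed as follows. First, pick an arbitrary edge $e \in E(\spt^\prime) \setminus E(\spt)$; such an edge exists because $k \ge 1$. Adding $e$ to $\spt$ creates a unique cycle $C$ in $\spt + e$ (the fundamental cycle of $e$ with respect to $\spt$). Since $\spt^\prime$ is acyclic, $C$ cannot be entirely contained in $E(\spt^\prime)$, so there is some edge $f \in E(C) \setminus E(\spt^\prime)$. By construction $f \in E(\spt) \setminus E(\spt^\prime)$, and in particular $f \notin F$. Then $\spt_1 := \spt - f + e$ is a spanning tree, and $\spt \leftrightarrow \spt_1$.

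Next I would verify the two properties needed to close the induction. First, $F \subseteq E(\spt_1)$: we have $F \subseteq E(\spt)$, and the only edge removed from $\spt$ is $f \notin F$, while the added edge $e$ does not affect the inclusion of $F$. Second, $|E(\spt_1) \setminus E(\spt^\prime)| = k - 1$: we removed $f \in E(\spt) \setminus E(\spt^\prime)$ from the symmetric difference side and added $e \in E(\spt^\prime)$, which does not lie in $E(\spt_1) \setminus E(\spt^\prime)$. Applying the inductive hypothesis to $\spt_1$ and $\spt^\prime$ yields a reconfiguration sequence from $\spt_1$ to $\spt^\prime$ along which every tree contains $E(\spt_1) \cap E(\spt^\prime) \supseteq F$. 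Prepending $\spt$ to this sequence gives the desired transformation from $\spt$ to $\spt^\prime$.

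The only delicate point is the choice of the edge to remove: one must use the fundamental cycle of $e$ with respect to $\spt$ (not $\spt^\prime$) and then exploit acyclicity of $\spt^\prime$ to guarantee the existence of an edge of $C$ outside $\spt^\prime$; this is precisely what ensures that the removed edge lies in $E(\spt) \setminus E(\spt^\prime)$ and therefore outside the protected set $F$. This is the standard matroid symmetric exchange argument specialized to graphic matroids, so I do not expect any genuine obstacle beyond stating this choice carefully.
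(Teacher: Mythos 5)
Your proof is correct: the fundamental-cycle exchange argument (add $e \in E(\spt')\setminus E(\spt)$, remove some $f$ on the created cycle with $f \notin E(\spt')$, hence $f \notin E(\spt)\cap E(\spt')$, and induct on $|E(\spt)\setminus E(\spt')|$) is exactly the standard matroid-exchange proof that the paper defers to via Proposition~1 of Ito et al., and it is the same swap pattern the paper itself reuses, e.g., in the proof of Lemma~\ref{lmd_same}. No gaps; nothing further needed.
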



\section{Large Maximum Degree (Proof of Theorem \ref{thm:lmd_P})}
\label{sec:lmd}

In this section, we prove \cref{thm:lmd_P}, which we restate here. 

\lmd*

Let $(G,\thr,\spt_\ini,\spt_\tar)$ be an instance of {\sc RST with Large Maximum Degree}.
For a spanning tree $\spt$ in $G$, let $\largedeg{\spt} \subseteq V$ be the 
set of all the vertices of degree at least $d$ in $\spt$, that is, $\largedeg{\spt} := \{v\in V \mid  \dg{T}{v} \ge d\}$. 
Note that $\spt$ has maximum degree at least $\thr$ if and only if $\largedeg{\spt} \not= \emptyset$. 
The following lemma is easy but is essential to prove \cref{thm:lmd_P}. 

\begin{lemma}\label{lmd_same}
      Let $T_1$ and $T_2$ be spanning trees in $G$ with maximum degree at least $d$.
      If there exists a vertex $u \in \largedeg{T_1} \cap \largedeg{T_2}$, then $T_1$ is reconfigurable to $T_2$. 
\end{lemma}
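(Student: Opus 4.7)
The plan is to route the reconfiguration through a carefully chosen intermediate spanning tree $T^*$ that agrees with $T_1$ on every edge incident to $u$ and also agrees with $T_2$ on every edge incident to $u$. Concretely, I would pick $T^*$ to be any spanning tree of $G$ that contains every edge of $\inc{T_1}{u} \cup \inc{T_2}{u}$. Such a $T^*$ exists: the set $\inc{T_1}{u} \cup \inc{T_2}{u}$ consists of edges all incident to the single vertex $u$, so it is a star, hence a forest in $G$; because $G$ is connected (it has spanning trees), any forest can be extended to a spanning tree.

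The main step is then a double application of \cref{lem:known}. Applied to $T_1$ and $T^*$, it produces a reconfiguration sequence from $T_1$ to $T^*$ in which every intermediate spanning tree contains $E(T_1) \cap E(T^*)$. By the construction of $T^*$ we have $\inc{T_1}{u} \subseteq E(T_1) \cap E(T^*)$, so every tree along this sequence has at least $|\inc{T_1}{u}| = \dg{T_1}{u} \ge d$ edges at $u$, and therefore has maximum degree at least $d$. Applying \cref{lem:known} to $T^*$ and $T_2$ symmetrically gives a reconfiguration sequence from $T^*$ to $T_2$ along which every tree contains $\inc{T_2}{u}$ and thus has degree at least $d$ at $u$. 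Concatenating the two sequences yields a reconfiguration sequence from $T_1$ to $T_2$ whose members all have maximum degree at least $d$, as required.

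The only thing to check carefully is the existence claim for $T^*$ (which is immediate once one observes the star structure of $\inc{T_1}{u} \cup \inc{T_2}{u}$); the rest is essentially a bookkeeping argument that transfers the ``preserve the common edges'' property of \cref{lem:known} into the degree constraint. I do not foresee a genuine obstacle here; the entire point is to exploit the fact that the large-degree constraint at $u$ is local and can be guaranteed by pinning down a sufficient set of edges incident to $u$ throughout the transformation.
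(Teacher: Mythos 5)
Your proof is correct, and it takes a genuinely different route from the one in the paper. The paper argues by induction on $\thr - |\inc{T_1}{u} \cap \inc{T_2}{u}|$: as long as the overlap at $u$ is below $\thr$, it picks an edge $e \in \inc{T_2}{u} \setminus \inc{T_1}{u}$, flips it into $T_1$ against an edge of the resulting cycle not in $T_2$ (which keeps $\dg{}{u} \ge \thr$ and increases the overlap by one), and only once the overlap reaches $\thr$ does it invoke \cref{lem:known} to finish. You instead short-circuit the induction by constructing one intermediate spanning tree $T^*$ containing the star $\inc{T_1}{u} \cup \inc{T_2}{u}$ (which is indeed acyclic since $G$ is simple, so it extends to a spanning tree) and applying \cref{lem:known} twice, once to $(T_1, T^*)$ pinning $\inc{T_1}{u}$ and once to $(T^*, T_2)$ pinning $\inc{T_2}{u}$; the degree of $u$ stays at least $\thr$ throughout, and $T^*$ itself satisfies the constraint, so concatenation works. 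Your argument is shorter and avoids the induction and the explicit cycle-based flip; the paper's version buys a slightly more hands-on description of the flips near $u$ (it only ever adds edges of $\inc{T_2}{u}$ to $T_1$), but both are constructive and of comparable algorithmic content, since extending a forest to a spanning tree and the sequence of \cref{lem:known} are both computable in polynomial time.
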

\begin{proof}
We show that $T_1$ is reconfigurable to $T_2$ by induction on $\thr - |\inc{T_1}{u} \cap \inc{T_2}{u}|$. 
		
Suppose that $\thr - |\inc{T_1}{u} \cap \inc{T_2}{u}| \le 0$ holds. 
By \cref{lem:known}, there exists a reconfiguration sequence from $T_1$ to $T_2$ in which all the spanning trees contain $\inc{T_1}{u} \cap \inc{T_2}{u}$. 
This shows that, for any spanning tree $T'$ in the sequence, $|\inc{T'}{u}| \ge |\inc{T_1}{u} \cap \inc{T_2}{u}| \ge \thr$. 
Hence, $T_1$ is reconfigurable to $T_2$. 
 
Suppose that $\thr - |\inc{T_1}{u} \cap \inc{T_2}{u}| \ge 1$ holds. 
Since $|\inc{T_2}{u}| \ge \thr$ and $|\inc{T_1}{u} \cap \inc{T_2}{u}| \le d-1$, 
there exists an edge $e \in \inc{T_2}{u} \setminus \inc{T_1}{u}$. 
Since $T_1+e$ contains a unique cycle $C$ and $T_2$ contains no cycle, there exists an edge $f \in E(C) \setminus E(T_2)$. 
Then, we have that 
$f \in E(T_1) \setminus E(T_2)$ and 
$T'_1 := T_1 + e - f$ is a spanning tree in $G$. 
Observe that $|\inc{T'_1}{u}| \ge |\inc{T_1}{u} \cup \{e\}| - 1 \ge  |\inc{T_1}{u}| \ge \thr$, 
which shows that $u \in \largedeg{T'_1}$. 
We also see that 
$\thr - |\inc{T'_1}{u} \cap \inc{T_2}{u}| = \thr - |\inc{T_1}{u} \cap \inc{T_2}{u}| - 1$. 
Therefore, by the induction hypothesis, $T'_1$ is reconfigurable to $T_2$. 
This shows that $T_1$ is reconfigurable to $T_2$ as $T_1$ and $T'_1$ are adjacent. 
\end{proof}

Our algorithm is based on testing the reachability in an auxiliary graph $\recgraph$, which is defined as follows.
The vertex set of $\recgraph$ is defined as $V$, 
where each vertex $v$ in $V(\recgraph)$ corresponds to the set of spanning trees $T$ with $v \in \largedeg{T}$. 
For any pair $u, v$ of distinct vertices in $V(\recgraph)$,
there is an edge $uv \in E(\recgraph)$ if and only if
there exist spanning trees $\spt$ and $\spt'$ such that 
$u \in \largedeg{\spt}$, $v \in \largedeg{\spt'}$, and
$T \leftrightarrow T'$ (possibly $T=T'$).
Then, by definition of the auxiliary graph and Lemma~\ref{lmd_same}, we have the following lemma.

\begin{lemma}
\label{lem:lmd_reconf}
Let $\spt_\ini$ and $\spt_\tar$ be spanning trees with maximum degree at least $d$. 
Then, $T_\ini$ is reconfigurable to $T_\tar$ if and only if $\recgraph$ contains a path from $\largedeg{\spt_\ini}$ to $\largedeg{\spt_\tar}$. 
\end{lemma}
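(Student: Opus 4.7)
The plan is to prove both directions by combining the definition of $\recgraph$ with \cref{lmd_same}.

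For the forward direction, I would start from a reconfiguration sequence $\langle T_\ini = T_0, T_1, \ldots, T_\ell = T_\tar \rangle$ in which every $T_i$ has maximum degree at least $\thr$, so $\largedeg{T_i}$ is nonempty. For each $i$, pick an arbitrary $v_i \in \largedeg{T_i}$. Since $T_i \leftrightarrow T_{i+1}$, the definition of $\recgraph$ forces $v_i = v_{i+1}$ or $v_i v_{i+1} \in E(\recgraph)$. So the sequence $v_0, v_1, \ldots, v_\ell$, after removing consecutive repetitions, is a walk in $\recgraph$ from $v_0 \in \largedeg{T_\ini}$ to $v_\ell \in \largedeg{T_\tar}$, which contains the desired path.

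For the backward direction, I would take a path $v_0, v_1, \ldots, v_k$ in $\recgraph$ with $v_0 \in \largedeg{T_\ini}$ and $v_k \in \largedeg{T_\tar}$. If $k = 0$, then $v_0 \in \largedeg{T_\ini} \cap \largedeg{T_\tar}$ and \cref{lmd_same} immediately gives that $T_\ini$ is reconfigurable to $T_\tar$. Otherwise, for each edge $v_i v_{i+1}$ the definition of $\recgraph$ provides spanning trees $S_i$ and $S'_i$ with $v_i \in \largedeg{S_i}$, $v_{i+1} \in \largedeg{S'_i}$, and $S_i \leftrightarrow S'_i$. I would then chain transformations: use \cref{lmd_same} to go from $T_\ini$ to $S_0$ (common vertex $v_0$), then use the single step $S_0 \leftrightarrow S'_0$ (or none, if equal) which stays within trees of maximum degree at least $\thr$ by construction, then \cref{lmd_same} again to go from $S'_0$ to $S_1$ (common vertex $v_1$), and so on, finally reaching $T_\tar$ from $S'_{k-1}$ via the common vertex $v_k$.

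Concatenating these subsequences produces the desired reconfiguration from $T_\ini$ to $T_\tar$ in which every intermediate spanning tree has maximum degree at least $\thr$. I do not anticipate any serious obstacle: the proof is essentially bookkeeping once \cref{lmd_same} and the definition of $\recgraph$ are in place, with the only minor point being the separate treatment of the trivial case $k = 0$ and the care needed to ensure that the single-edge-flip moves $S_i \leftrightarrow S'_i$ are themselves legal moves (which they are, because both $S_i$ and $S'_i$ have maximum degree at least $\thr$ by their defining property).
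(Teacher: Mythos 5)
Your proposal is correct and follows essentially the same argument as the paper: the forward direction picks a witness vertex $v_i \in \largedeg{T_i}$ for each tree in the sequence and invokes the definition of $\recgraph$, and the backward direction chains the edge-witness trees together using \cref{lmd_same} at each shared large-degree vertex, exactly as in the paper's proof (your $S_i, S'_i$ are the paper's $T^+_i, T^-_{i+1}$).
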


\begin{proof}
    We first show the ``only if'' part. 
    Suppose that there exists a reconfiguration sequence $\langle T_\ini = T_0, T_1, \dots, T_k = T_\tar \rangle$ from $T_\ini$ to $T_\tar$, where 
    $T_i$ is a spanning tree of maximum degree at least $d$ for any $i \in \{0, 1, \dots , k\}$ and
    $T_i$ and $T_{i+1}$ are adjacent for any $i  \in \{0, 1, \dots, k-1\}$.
    For each $i$, let $v_i$ be a vertex in $\largedeg{T_i}$. 
    By the definition of $\recgraph$, for $i \in \{0, 1, \dots, k-1\}$, we have either $v_i = v_{i+1}$ or $\recgraph$ contains an edge $v_i v_{i+1}$. 
    Since $v_0 \in \largedeg{T_\ini}$ and $v_{k} \in \largedeg{T_\tar}$, $\recgraph$ contains a path  
    from $\largedeg{T_\ini}$ to $\largedeg{T_\tar}$. 
    
    To show the ``if'' part, suppose that $\recgraph$ contains a path $(v_0, v_1, \dots , v_k)$ from $\largedeg{T_\ini}$ to $\largedeg{T_\tar}$. 
    For $i \in \{0, 1, \dots, k-1\}$, $v_i v_{i+1} \in E(\recgraph)$ means that 
    there exist two spanning trees $T^+_i$ and $T^-_{i+1}$ such that $v_i \in \largedeg{T^+_i}$, $v_{i+1} \in \largedeg{T^-_{i+1}}$, and $T^+_i \leftrightarrow T^-_{i+1}$. 
    Let $T^-_0 := T_\ini$ and $T^+_k := T_\tar$. Then, for $i \in \{0, 1, \dots, k\}$, since $v_i \in \largedeg{T^-_i} \cap \largedeg{T^+_i}$, 
    $T^-_i$ is reconfigurable to $T^+_i$ by \cref{lmd_same}. This together with $T^+_i \leftrightarrow T^-_{i+1}$ shows that $T_\ini$ is reconfigurable to $T_\tar$. 
\end{proof}

By this lemma, we can solve {\sc RST with Large Maximum Degree} 
by detecting a path from $\largedeg{\spt_\ini}$ to $\largedeg{\spt_\tar}$ in $\recgraph$
(see \cref{alg:lmd} for a pseudocode of our algorithm).
Our remaining task is 
to construct the auxiliary graph $\recgraph$ in polynomial time which is possible by the following lemma. 

\begin{algorithm}[t]
    \KwInput{A graph $G$ and two spanning trees $T_\ini$ and $T_\tar$ in $G$ with max.~degree $\ge d$.}
  \KwOutput{ Is $T_\ini$ reconfigurable to $T_\tar$?} 
Compute  $\largedeg{T_\ini}$ and $\largedeg{T_\tar}$, and construct $\recgraph$\; 
    \lIf{there is a path between $\largedeg{T_\ini}$ and $\largedeg{T_\tar}$ in $\recgraph$}{\Return{\yes}}
    \lElse{\Return{\no}}
  \caption{Algorithm for {\sc RST with Large Maximum Degree}}\label{alg:lmd}
\end{algorithm}

\begin{lemma}
\label{lem:lmd_edge}
     For two distinct vertices $u, v \in V$, there exists an edge $uv \in E(\recgraph)$ if and only if $|\neig{G}{u}| \ge d$, $|\neig{G}{v}| \ge d$, and 
	\begin{equation}\label{eq:lmd_01}
		|\neig{G}{u} \cup \neig{G}{v}| \geq
		\begin{cases}
			2\thr - 1 & \mbox{if $uv \in E(G)$,} \\
			2\thr - 2 & \mbox{otherwise}.
 		\end{cases}
	\end{equation}
\end{lemma}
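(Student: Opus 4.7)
The plan is to prove both directions of the equivalence. The degree bounds $|\neig{G}{u}|, |\neig{G}{v}| \ge \thr$ are immediate, since $|\neig{G}{u}| \ge \dg{T}{u} \ge \thr$ whenever $u \in \largedeg{T}$ for a spanning tree $T$ of $G$. The substance lies in the union condition.

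For the forward direction, let $T, T'$ be spanning trees of $G$ witnessing $uv \in E(\recgraph)$, and set $T^+ := (V, E(T) \cup E(T'))$. Either $T = T'$ and $T^+ = T$, or $T^+$ has one more edge than $T$; in either case $T^+$ contains at most one cycle, and both $u$ and $v$ have degree at least $\thr$ in $T^+$. If $uv \in E(T^+)$, then any common neighbor of $u$ and $v$ distinct from themselves yields a triangle through the edge $uv$; since $T^+$ has at most one cycle, at most one such common neighbor exists, and therefore $|\neig{G}{u} \cup \neig{G}{v}| \ge |\neig{T^+}{u} \cup \neig{T^+}{v}| \ge 2\thr - 1$. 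If $uv \notin E(T^+)$, then two distinct common neighbors would produce a $4$-cycle, so at most two exist and $|\neig{G}{u} \cup \neig{G}{v}| \ge 2\thr - 2$; this is exactly the required bound when $uv \notin E(G)$. In the remaining sub-case $uv \in E(G) \setminus E(T^+)$, I would sharpen the argument by working inside $T$ alone: an edge flip shifts each degree by at most one, so $\dg{T}{v} \ge \dg{T'}{v} - 1 \ge \thr - 1$, and since $u \not\sim v$ in the tree $T$ they share at most one common neighbor in $T$, giving $|\neig{T}{u} \cup \neig{T}{v}| \ge 2\thr - 2$. Because $u, v \notin \neig{T}{u} \cup \neig{T}{v}$ while $\{u, v\} \subseteq \neig{G}{u} \cup \neig{G}{v}$ (from $uv \in E(G)$), this gives $|\neig{G}{u} \cup \neig{G}{v}| \ge 2\thr$, which is stronger than needed.

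For the backward direction, I construct the witnesses $T, T'$ explicitly. Set $A := \neig{G}{u} \setminus \{v\}$ and $B := \neig{G}{v} \setminus \{u\}$. When $uv \in E(G)$, the union hypothesis gives $|A \cup B| \ge 2\thr - 3$, so I can choose $S'_u \subseteq A$ and $S'_v \subseteq B$ of size $\thr - 1$ each with $|S'_u \cap S'_v| \le 1$. Form $H_0 := \{uv\} \cup \{uw : w \in S'_u\} \cup \{vw : w \in S'_v\}$. If $S'_u \cap S'_v = \emptyset$, then $H_0$ is a tree in which both $u$ and $v$ have degree $\thr$; extend it to a spanning tree $T$ of $G$ and take $T' := T$. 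If $S'_u \cap S'_v = \{w\}$, then $H_0$ contains the single cycle $uvw$; extend $H_0 - vw$ to a spanning tree $T$ (which is possible without ever adding $vw$, since $v$ and $w$ are already connected in $H_0 - vw$), and set $T' := T - uw + vw$, a spanning tree adjacent to $T$ with $\dg{T'}{v} \ge \thr$. The case $uv \notin E(G)$ is parallel, using $S_u \subseteq A$ and $S_v \subseteq B$ of size $\thr$ (and no $uv$ edge); here $|A \cup B| \ge 2\thr - 2$ forces $|S_u \cap S_v| \le 2$, and the boundary case $S_u \cap S_v = \{w_1, w_2\}$ is handled by resolving the unique $4$-cycle $u w_1 v w_2 u$ of $H_0$ via the flip $T' := T - uw_1 + vw_1$.

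The main obstacle is this tight regime in the sufficiency direction: when $|\neig{G}{u} \cup \neig{G}{v}|$ attains the stated lower bound exactly, the two ``stars'' at $u$ and $v$ are forced to overlap in a common neighbor, so no single spanning tree can make both $\dg{T}{u}$ and $\dg{T}{v}$ reach $\thr$. The key observation is that this unavoidable overlap creates exactly one short cycle in $H_0$ (a triangle if $uv \in E(G)$ and a $4$-cycle otherwise), and this cycle admits a well-chosen edge flip that transfers one degree unit from $u$ to $v$ while keeping the tree property.
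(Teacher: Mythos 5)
Your proof is correct and follows essentially the same route as the paper: necessity by counting common neighbors (the paper bounds them inside the tree $T$ together with $\dg{T}{v}\ge \dg{T'}{v}-1$, you inside the pseudotree $T\cup T'$ using its at most one cycle), and sufficiency by building two almost-disjoint stars at $u$ and $v$, extending to a spanning tree, and repairing $v$'s degree with a single edge flip. The differences are only cosmetic --- the paper keeps $v$'s star one edge smaller and flips in an arbitrary extra $G$-edge at $v$, while you take both stars at full size and flip along the forced triangle or $4$-cycle --- and the steps you leave implicit (the counting argument for choosing $S_u,S_v$ with small overlap, and the intermediate case of a single shared neighbor when $uv\notin E(G)$) are routine.
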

\begin{proof}
We first prove the ``only-if'' direction.
Suppose that $\recgraph$ contains an edge $uv$, that is, 
there exist spanning trees $\spt$ and $\spt'$ such that $u \in \largedeg{\spt}$, $v \in \largedeg{\spt'}$, and
$T \leftrightarrow T'$ (possibly $T=T'$). 
Then, $|\neig{G}{u}| \ge d$ and $|\neig{G}{v}| \ge d$ are obvious. 
Since $T$ contains no cycle, we know that $\neig{T}{u}$ and $\neig{T}{v}$ contain at most one common vertex. 
Then, we obtain
		\begin{eqnarray}
			|\neig{G}{u} \cup \neig{G}{v}| & \geq & |\neig{T}{u} \cup \neig{T}{v}|  \notag \\
			& \geq & |\neig{\spt}{u}| + |\neig{\spt}{v}| - 1         \qquad \qquad \ \  \mbox{(by $|\neig{\spt}{u} \cap \neig{\spt}{v}| \le 1$)}  \notag \\
			& \geq & |\neig{\spt}{u}| + (|\neig{\spt'}{v}| -1) - 1  \qquad \mbox{(by $|E(T') \setminus E(T)| \le 1$)} \notag \\
			& \geq & 2d - 2. \label{eq:lmd_02}
		\end{eqnarray}
Similarly, if $uv \in E(G) \setminus E(T)$, then we obtain
\begin{equation}
\label{eq:lmd_03}
|\neig{G}{u} \cup \neig{G}{v}| \geq  |\neig{T}{u} \cup \neig{T}{v} \cup \{u, v\}| \geq  |\neig{\spt}{u}| + |\neig{\spt}{v}| + 1 \ge 2d. 
\end{equation}
If $uv \in E(T)$, then $\neig{T}{u} \cap \neig{T}{v} = \emptyset$ holds, and hence 
we obtain
\begin{equation}
\label{eq:lmd_04}
|\neig{G}{u} \cup \neig{G}{v}| \geq  |\neig{T}{u} \cup \neig{T}{v}| =  |\neig{\spt}{u}| + |\neig{\spt}{v}| \ge 2d - 1. 
\end{equation}
By (\ref{eq:lmd_02}), (\ref{eq:lmd_03}), and (\ref{eq:lmd_04}), we obtain (\ref{eq:lmd_01}). 

\begin{figure}[t!]
      \newcommand{\picscale}{.7}
      \centering
      \begin{minipage}[t]{0.45\hsize}
            \centering
            \scalebox{\picscale}{%
    \begin{tikzpicture}[mynode/.style = {circle, fill=black, inner sep=0pt, minimum size=5pt }]
    \node[mynode, label={90:$u$}] (u) at (5.2, 1) {}; 
    \node[mynode, label={-90:$v$}]  (v) at (4.2, -1) {}; 
    \foreach \x in {0, 1, ..., 7}
        {
            \node[mynode] (v\x) at (\x, 0) {};
        }

    \foreach \x in {3, 4, ..., 7}
    {
        \draw (v\x) -- (u); 
    }

    \foreach \x in {0, 1, ..., 5}
    {
        \draw (v\x) -- (v); 
    }

    \draw[blue] \convexpath{v7,v3}{6pt}; 
    \draw[green] \convexpath{v0,v5}{8pt}; 
\end{tikzpicture}%
}
            \caption{Case when $uv \not\in E(G)$.}
            \label{fig:301}
      \end{minipage}
      \begin{minipage}[t]{0.45\hsize}
            \centering
            \scalebox{\picscale}{%
    \begin{tikzpicture}[mynode/.style = {circle, fill=black, inner sep=0pt, minimum size=5pt }]

    \node[mynode] (u1) at (0, 0) {}; 
    \node[mynode] (u2) at (0.5, 1) {}; 
    \node[mynode] (u3) at (1, 1.5) {}; 
    \node[mynode, label={2:$v$}] (v) at (1.5, 1) {}; 
    \node[mynode] (u4) at (1, 0) {}; 

    \node (u1i) at (0.25, 0.15) {}; 
    \node (u2i) at (0.7, 1) {}; 
    \node (u3i) at (1, 1.3) {}; 
    \node (vi) at (1.3, 1) {}; 
    \node (u4i) at (0.85, 0.15) {}; 

    \node[text=blue] (c) at (0.8, 0.7) {$C$}; 

    \draw (u1) -- (u2);
    \draw[red] (u2) -- node[pos=0.8, left, text=red]{$e'$} (u3);
    \draw (u3) -- (v);
    \draw[dashed, red] (v) -- node[midway, right, text=red]{$e$} (u4);
    \draw (u4) -- (u1);

    \draw (u1) -- +(-90:2em); 
    \draw (u1) -- +(-135:2em); 
    \draw (u2) -- +(170:2em); 
    \draw (u3) -- +(160:2em); 
    \draw (v) -- +(40:2em); 
    \draw (v) -- +(-20:2em); 
    \draw (u4) -- +(-100:2em); 

    \draw[blue] \convexpath{u1i, u2i, u3i, vi, u4i}{1pt}; 
\end{tikzpicture}%
}
            \caption{Cycle $C$ and edge $e'$.}
            \label{fig:302}
      \end{minipage}
\end{figure}

We next prove the ``if'' direction.
Suppose that $|\neig{G}{u}| \ge d$, $|\neig{G}{v}| \ge d$, and (\ref{eq:lmd_01}) hold. 
For each of the following two cases, we define an edge set $F \subseteq E$. 
\begin{itemize}
\item
Suppose that $uv \not\in E(G)$ holds (Figure~\ref{fig:301}). 
Let $S_u \subseteq \neig{G}{u}$ be a vertex set with $|S_u| = d$ that maximizes $|S_u \setminus \neig{G}{v}|$. 
Then, we have either $S_u \subseteq \neig{G}{u} \setminus \neig{G}{v}$ or $S_u \supsetneq \neig{G}{u} \setminus \neig{G}{v}$. 
If $S_u \subseteq \neig{G}{u} \setminus \neig{G}{v}$, then let $S_v \subseteq \neig{G}{v}$ be a vertex set with $|S_v| =d-1$. 
Otherwise, let $S_v \subseteq \neig{G}{v}$ be a vertex set such that $|S_v| =d-1$ and $|S_u \cap S_v|  = 1$, 
where such $S_v$ exists because $|\neig{G}{v} \setminus S_u| = |(\neig{G}{u} \cup \neig{G}{v}) \setminus S_u| \ge d -2$ and $|\neig{G}{v} \cap S_u| \ge 1$. 
In either case, we obtain $S_u \subseteq \neig{G}{u}$ and $S_v \subseteq \neig{G}{v}$ 
such that $|S_u| = d$, $|S_v| = d-1$, and $|S_u \cap S_v|  \le 1$. 
Define $F := \{ u w \mid w \in S_u \} \cup \{v w \mid w \in S_v\}$.

\item
Suppose that $uv \in E(G)$ holds. 
Since $|\neig{G}{u} \setminus \{v\}| \ge d-1$, $|\neig{G}{v} \setminus \{u\}| \ge d-1$, and 
$|(\neig{G}{u} \setminus \{v\}) \cup (\neig{G}{v} \setminus \{u\})| \ge 2d-3$, 
by the same argument as above, we can take 
$S_u \subseteq \neig{G}{u} \setminus \{v\}$ and $S_v \subseteq \neig{G}{v} \setminus \{u\}$   
such that $|S_u| = d-1$, $|S_v| = d-2$, and $S_u \cap S_v = \emptyset$. 
Define $F := \{ u w \mid w \in S_u \} \cup \{v w \mid w \in S_v\} \cup \{uv\}$. 
\end{itemize}
In both cases, it holds that $|F \cap \inc{G}{u}| = d$, $|F \cap \inc{G}{v}| = d-1$, and $F$ contains no cycle. 
Therefore, there exists a spanning tree $T$ with $E(T) \supseteq F$ 
such that 
$|\inc{T}{u}| \ge |F \cap \inc{G}{u}| = d$ and $|\inc{T}{v}| \ge |F \cap \inc{G}{v}| = d-1$. 
If $|\inc{T}{v}| \ge d$, then we obtain $\{u, v\} \subseteq \largedeg{T}$, which shows that $uv \in E(\recgraph)$. 
Therefore, it suffices to consider the case when $|\inc{T}{v}| = d-1$.
Since $|\inc{G}{v}| \ge d$, there exists an edge $e \in \inc{G}{v} \setminus \inc{T}{v}$. 
Let $C$ be the unique cycle in $T + e$ and $e'$ be an edge in $E(C) \setminus \inc{T}{v}$ (see Figure~\ref{fig:302}). 
Then, $T' := T + e - e'$ is a spanning tree such that $|\inc{T'}{v}| = |\inc{T}{v} \cup \{e\}| = d$, which means that $v \in \largedeg{T'}$. 
Since $T$ and $T'$ are adjacent, we obtain $uv \in E(\recgraph)$. 
\end{proof}

Since we can easily check the inequality (\ref{eq:lmd_01})
for each pair of vertices $u$ and $v$, 
Lemma~\ref{lem:lmd_edge} ensures that
the auxiliary graph $\recgraph$ can be constructed in polynomial time. 
Therefore, \cref{alg:lmd} correctly decides {\sc RST with Large Maximum Degree} in polynomial time, which completes the proof of Theorem~\ref{thm:lmd_P}.
Note that all the proofs are constructive, and hence we can find a desired reconfiguration sequence from $T_\ini$ to $T_\tar$ in polynomial time if it exists.

	
\section{Small Maximum Degree}\label{sec:small_deg}


In this section, we consider {\sc RST with Small Maximum Degree}. 
We first show the PSPACE-completeness in Section~\ref{sec:smd_hard}. 
In contrast, we show in Section~\ref{sec:smd_d-1} that 
if at least one of $T_\ini$ and $T_\tar$ has 
maximum degree at most $d-1$, then 
an instance $(G,\thr,\spt_\ini,\spt_\tar)$ of {\sc RST with Small Maximum Degree}
is a {\yes}-instance.

\subsection{PSPACE-Completeness (Proof of Theorem \ref{thm:smd_hard})}
\label{sec:smd_hard}

In this subsection, we prove \cref{thm:smd_hard}, i.e., we show that {\sc RST with Small Maximum Degree} is PSPACE-complete. 
The problem is indeed in PSPACE. We prove the PSPACE-hardness by giving a polynomial reduction from 
{\em Reconfiguration of Nondeterministic Constraint Logic on {\sc and}/{\sc or} graphs}, 
which we call {\sc NCL Reconfiguration} for short. 

Suppose that we are given a cubic graph with edge-weights such that each vertex is 
either incident to three weight-2 edges (``{\sc or} vertex'') or one weight-2 edge and two weight-1 edges (``{\sc and} vertex''), 
which we call an {\em {\sc and}/{\sc or} graph}. 
An {\em NCL configuration} is an orientation of the edges in the graph such that the total weights of incoming arcs at each vertex is at least two.  
Two NCL configurations are {\em adjacent} if they differ in a single edge direction. 
In {\sc NCL Reconfiguration}, we are given an {\sc and}/{\sc or} graph and its two NCL configurations, and the objective is to 
determine whether there exists a sequence of adjacent NCL configurations that transforms one into the other. 
It is shown in~\cite{HD05} that  {\sc NCL Reconfiguration} is PSPACE-complete. 
In what follows, we give a polynomial reduction from {\sc NCL Reconfiguration} to {\sc RST with Small Maximum Degree}.

\medskip
\noindent
\textbf{Construction of the graph. }
Suppose that we are given an instance of {\sc NCL Reconfiguration}, that is, an {\sc and}/{\sc or} graph $H=(V(H), E(H))$ and two configurations $\sigma_\ini$ and $\sigma_\tar$ of $H$. 
Fix $d \ge 3$. 
We first construct a graph $G'=(V', E')$, a vertex subset $L \subseteq V'$, and an integer $b(v) \in \{1, 2, 3\}$ for each $v \in V'$, and then 
construct a graph $G = (V, E)$  by using $G'$, $L$, and $b$.  
We consider an instance $(G,\thr,\spt_\ini,\spt_\tar)$ of {\sc RST with Small Maximum Degree}, 
where $\spt_\ini$ and $\spt_\tar$ will be defined later. 
The construction of $G'$, $L$, and $b$ is described as follows. 

\begin{itemize}
\item
We initialize $G'=(V',E')$ and $L$ as the empty graph and the empty set, respectively. 
\item
For a vertex $u \in V(H)$ and an edge $e \in \inc{H}{u}$, we introduce a vertex $v_{u, e}$ in $V'$. 
Let $b(v_{u,e})=2$. 
\item
For an edge $e \in E(H)$ connecting $u$ and $u'$, we introduce a vertex $v_{e}$ in $V'$ and two edges 
$v_e v_{u, e}$ and $v_e v_{u', e}$ in $E'$ (Figure~\ref{fig:402}). Let $b(v_e)= 1$. 
\item
For an {\sc or} vertex $u \in V(H)$ with $\inc{H}{u} = \{e_1, e_2, e_3\}$, we introduce a vertex $r_u$ in $V'$ and an edge 
$r_u v_{u, e_i}$ in $E'$ for $i \in \{1, 2, 3\}$ (Figure~\ref{fig:403}). Let $b(r_u)= 1$. 
Add $v_{u, e_i}$ to $L$ for $i \in \{1, 2, 3\}$.
\item
For an {\sc and} vertex $u \in V(H)$ with $\inc{H}{u} = \{e_0, e_1, e_2\}$, where $e_0$ is a weight-$2$ edge and $e_1$ and $e_2$ are weight-$1$ edges, 
we introduce four vertices $r_u, w_u, x_u$, and $y_u$ in $V'$, and seven edges 
$v_{u,e_0} r_u, r_u w_u, w_u x_u, w_u y_u, x_u v_{u,e_1}, y_u v_{u,e_2}$, and $v_{u,e_1} v_{u,e_2}$ in $E'$ (Figure~\ref{fig:404}).  
We denote by $E'_u$ the set of these seven edges.
Let $b(r_u)= 1$, $b(w_u)=3$, and $b(x_u)=b(y_u)=2$.
Add $v_{u, e_0}$ and $w_u$ to $L$.  
\end{itemize}

\begin{figure}[t!]
      \newcommand{\picscale}{.55}
      \centering
      \begin{minipage}{0.45\hsize}
          \centering
          \scalebox{\picscale}{%

\begin{tikzpicture}

    \begin{scope}
    \node[vertexSec4, label={270:$u$}] (u){};
    \node[vertexSec4, label={270:$u'$}, right=2.8em of u] (up){};

    \node[left=1em of u] (ulc) {}; 
    \node[above=1em of ulc] (ula) {}; 
    \node[below= 1em of ulc] (ulb) {}; 

    \node[right=1em of up] (uprc) {}; 
    \node[above=1em of uprc] (upra) {}; 
    \node[below= 1em of uprc] (uprb) {}; 

    \draw (u) -- (up) node[midway,above] {$e$}; 
    \draw (u) -- (ula); 
    \draw (u) -- (ulb); 
    \draw (up) -- (upra); 
    \draw (up) -- (uprb); 
    \end{scope}

    \begin{scope}[xshift=9em]
        \node[single arrow,draw=black,fill=black!10,minimum height=3em,single arrow head extend=0.4em] (arr) {};
    \end{scope}

    \begin{scope}[xshift=15em]
    \node[vertexSec4, label={90:$v_{u, e}$}] (vue){};
    \node[vertexSec4, label={90:$v_e$}, right=1.4em of vue] (ve){};
    \node[vertexSec4, label={90:$v_{u', e}$}, right=1.4em of ve] (vupe){};

    \draw (vue) -- (ve); 
    \draw (ve) -- (vupe); 

    \node[left = 1em of vue ] (vuec) {};
    \node[vertexSec4, above= 1em of vuec] (vuea) {};
    \node[vertexSec4, below= 1em of vuec] (vueb) {};

    \node[right= 1em of vupe ] (vupec) {};
    \node[vertexSec4, above= 1em of vupec] (vupea) {};
    \node[vertexSec4, below= 1em of vupec] (vupeb) {};

    \draw[dashed] \convexpath{vue,vueb,vuea}{1.5em}; 
    \draw[dashed] \convexpath{vupe,vupea,vupeb}{1.5em}; 
    \end{scope}
\end{tikzpicture}%
}
          \caption{The gadget for an edge.}
          \label{fig:402}
          
          \scalebox{\picscale}{%

\begin{tikzpicture}

    \begin{scope}
    \node[vertexSec4, label={180:$u$}] at (0, 0) (u){}; 

    \draw (u) -- +(150:3em)node[midway, above] {$e_1$};
    \draw (u) -- +(30:3em)node[midway, above] {$e_2$}; 
    \draw (u) -- +(270:3em)node[midway, right] {$e_3$}; 
    \end{scope}

    \begin{scope}[xshift=7em]
        \node[single arrow,draw=black,fill=black!10,minimum height=3em,single arrow head extend=0.4em] (arr) {};
    \end{scope}

    \begin{scope}[xshift=15em]

    \node[vertexSec4, label={335:$r_u$}] at(0, 0) (ru){$1$};

    \draw (ru) -- +(30:3em)  node[vertexSpSec4, label={90:$v_{u, e_1}$}] (ve1) {$2$}; 
    \draw (ru) -- +(150:3em) node[vertexSpSec4, label={90:$v_{u, e_2}$}] (ve2) {$2$} -- +(150:3em);
    \draw (ru) -- +(270:3em) node[vertexSpSec4, label={0:$v_{u, e_3}$}] (ve3)  {$2$} -- +(270:3em);
    \draw (ve1) -- +(30:2em);
    \draw (ve2) -- +(150:2em);
    \draw (ve3) -- +(270:2em);
    \end{scope}

\end{tikzpicture}%
}
          \caption{The gadget for an {\sc or} vertex.}
          \label{fig:403}
          
          \scalebox{\picscale}{%

\begin{tikzpicture}

    \begin{scope}
    \node[vertexSec4, label={180:$u$}] at (0, 0) (u){};

    \draw (u) -- +(150:3em)node[midway, above] {$e_1$};
    \draw (u) -- +(30:3em)node[midway, above] {$e_2$}; 
    \draw (u) -- +(270:3em)node[midway, right] {$e_0$}; 

    \end{scope}

    \begin{scope}[xshift=7em]
        \node[single arrow,draw=black,fill=black!10,minimum height=3em,single arrow head extend=0.4em] (arr) {};
    \end{scope}

    \begin{scope}[xshift=15em]

    \node (c) at (0, 0) {};
    \node (vc) at ([shift=({90:3.5em})]c) {}; 
    \node[vertexSec4, label={90:$v_{u, e_1}$}, left = 1em of vc] (vue1){$2$};
    \node[vertexSec4, label={90:$v_{u, e_2}$}, right = 1em of vc] (vue2){$2$};
    \node[vertexSec4, label={0:$x_u$}, below = 1em of vue1] (xu){$2$};
    \node[vertexSec4, label={0:$y_u$}, below = 1em of vue2] (yu){$2$};
    \node[vertexSpSec4, label={0:$w_u$}, below = 3em of vc]  (wu){$3$};
    \node[vertexSec4, label={0:$r_u$}, below = 1em of wu] (ru){$1$};
    \node[vertexSpSec4, label={0:$v_{u, e_0}$}, below = 1em of ru] (vue0){$2$};

    \draw (vue1) -- (vue2); 
    \draw (vue1) -- (xu); 
    \draw (vue2) -- (yu); 
    \draw (xu) -- (wu); 
    \draw (yu) -- (wu); 
    \draw (wu) -- (ru); 
    \draw (ru) -- (vue0); 

    \draw (vue1) -- +(150:2em); 
    \draw (vue2) -- +(30:2em); 
    \draw (vue0) -- +(270:2em); 
    \end{scope}

\end{tikzpicture}%
}
          \caption{The gadget for an {\sc and} vertex.}
          \label{fig:404}
      \end{minipage}
      \begin{minipage}{0.45\hsize}
          \centering
          \scalebox{\picscale}{%

\begin{tikzpicture}
    \begin{scope}
    \node[vertexSec4] at (1, 3) (v13){$2$};
    \node[vertexSec4] at (1, 1) (v11){$2$};
    \node[vertexSec4] at (2, 3) (v23){$2$};
    \node[vertexSec4] at (2, 1) (v21){$2$};
    \node[vertexSpSec4] at (3, 2) (v32){$3$};
    \node[vertexSec4] at (4, 2) (v42){$1$};
    \node[vertexSpSec4] at (5, 2) (v52){$2$};
    \node[vertexSec4] at (6, 2) (v62){$1$};
    \node[vertexSpSec4] at (7, 2) (v72){$2$};
    \node[vertexSec4] at (8, 2) (v82){$1$};
    \node[vertexSpSec4] at (9, 3) (v93){$2$};
    \node[vertexSpSec4] at (9, 1) (v91){$2$};

    \draw (v13) -- (v11); 
    \draw (v13) -- (v23); 
    \draw (v11) -- (v21); 
    \draw (v23) -- (v32); 
    \draw (v21) -- (v32); 
    \draw (v32) -- (v42); 
    \draw (v42) -- (v52); 
    \draw (v52) -- (v62); 
    \draw (v62) -- (v72); 
    \draw (v72) -- (v82); 
    \draw (v82) -- (v93); 
    \draw (v82) -- (v91); 

    \draw (v13) -- +(145:3em); 
    \draw (v11) -- +(215:3em); 
    \draw (v93) -- +(45:3em); 
    \draw (v91) -- +(315:3em); 

        \node[single arrow,draw=black,fill=black!10,minimum height=3em,single arrow head extend=0.4em, rotate=-90] at ([shift=({270:6em})]v52) (arr) {};
    \end{scope}

    \begin{scope}[yshift=-14em]
    \node[vertexSec4] at (1, 3) (v13){$2$};
    \node[vertexSec4] at (1, 1) (v11){$2$};
    \node[vertexSec4] at (2, 3) (v23){$2$};
    \node[vertexSec4] at (2, 1) (v21){$2$};
    \node[vertexSpSec4] at (3, 2) (v32){$3$};
    \node[vertexSec4] at (4, 2) (v42){$1$};
    \node[vertexSpSec4] at (5, 2) (v52){$2$};
    \node[vertexSec4] at (6, 2) (v62){$1$};
    \node[vertexSpSec4] at (7, 2) (v72){$2$};
    \node[vertexSec4] at (8, 2) (v82){$1$};
    \node[vertexSpSec4] at (9, 3) (v93){$2$};
    \node[vertexSpSec4] at (9, 1) (v91){$2$};

    \draw (v13) -- (v11); 
    \draw (v13) -- (v23); 
    \draw (v11) -- (v21); 
    \draw (v23) -- (v32); 
    \draw (v21) -- (v32); 
    \draw (v32) -- (v42); 
    \draw (v42) -- (v52); 
    \draw (v52) -- (v62); 
    \draw (v62) -- (v72); 
    \draw (v72) -- (v82); 
    \draw (v82) -- (v93); 
    \draw (v82) -- (v91); 

    \draw (v13) -- +(145:3em); 
    \draw (v11) -- +(215:3em); 
    \draw (v93) -- +(45:3em); 
    \draw (v91) -- +(315:3em); 

    \newcommand{\expen}[3]{ \draw (#1) -- +(#2:#3) node[circle, fill=red, minimum size=5pt,inner sep=0pt] (#1p) {}; }
    \expen{v13}{60}{2em}
    \expen{v11}{60}{2em}
    \expen{v23}{60}{2em}
    \expen{v21}{60}{2em}
    \expen{v42}{60}{2em}
    \expen{v42}{120}{2em}
    \expen{v52}{60}{2em}
    \expen{v62}{60}{2em}
    \expen{v62}{120}{2em}
    \expen{v72}{60}{2em}
    \expen{v82}{60}{2em}
    \expen{v82}{120}{2em}
    \expen{v93}{120}{2em}
    \expen{v91}{60}{2em}

\tikzset{tvertex/.style={auto=left,circle,fill=blue,minimum size=5pt,inner sep=0pt}} 
\tikzset{tleaf/.style={}} 
\tikzset{tedge/.style={blue}} 

    \node[tvertex] at (6.9, -0.5) (tv5)     {};
        \node[tvertex] at (5, 0) (tv10)   {};
            \node[tleaf] at (4, -0.3) (tv6)    {};
            \node[tleaf] at (4.4, 0.4) (tv7)     {};
        \node[tvertex] at (5.7, -0.7) (tv8)    {};
            \node[tleaf] at (4.4, -1) (tv9)    {};
            \node[tleaf] at (4.8, -1.3) (tv11) {};

        \node[tvertex] at (7.4, -0.4) (tv4)   {};
            \node[tvertex] at (7.5, 0.2) (tv2)    {};
            \node[tvertex] at (6, 0.4) (tv3)      {};
                \node[tvertex] at (5.5, 1) (tv1)      {};

    \draw[tedge] (v32)  -- (tv1);
    \draw[tedge] (v52)  -- (tv1);
    \draw[tedge] (v72)  -- (tv3);
    \draw[tedge] (v91)  -- (tv2);
    \draw[tedge] (v93)  -- (tv2);

    \draw[tedge] (tv5)  -- (tv10);
        \draw[tedge] (tv10) -- (tv6);
        \draw[tedge] (tv10) -- (tv7);
    \draw[tedge] (tv5)  -- (tv8);
        \draw[tedge] (tv8)  -- (tv11);
        \draw[tedge] (tv8)  -- (tv9);
    \draw[tedge] (tv5)  -- (tv4);
        \draw[tedge] (tv4)  -- (tv2);
        \draw[tedge] (tv4)  -- (tv3);
            \draw[tedge] (tv3)  -- (tv1);

    \end{scope}
\end{tikzpicture}%
}
          \caption{Construction of $G$ from $G'$.}
      \label{fig:405}
      \end{minipage}
\end{figure}

We next construct $G = (V, E)$ by adding new vertices and edges to $G'=(V', E')$ as follows (see Figure~\ref{fig:405} for an illustration). 
\begin{itemize}
\item
We construct a tree $T^*=(V(T^*), E(T^*))$ of maximum degree at most three such that $V(T^*) \cap V' = L$, $E(T^*) \cap E' = \emptyset$, and 
$L$ is the set of all the leaves of $T^*$. 
Then, we attach $T^*$ to $G'$. We denote the obtained graph by $G' + T^*$. 
\item
For each vertex $v \in V'$, we add $d - b(v)$ new vertices $\bar{v}_1, \dots , \bar{v}_{d-b(v)}$ 
and new edges $v \bar{v}_1, \dots ,  v \bar{v}_{d-b(v)}$. 
\end{itemize}


\medskip
\noindent
\textbf{Correspondence between solutions. }
In order to see the correspondence between 
NCL configurations in $H$ and 
spanning trees in $G$ with maximum degree at most $d$, 
we begin with the following easy lemma. 

\begin{lemma}
\label{obs:NCL01}
Any spanning tree $T$ in $G$ with maximum degree at most $d$
satisfies the following properties:  
(a) $v \bar{v}_i \in E(T)$ for $v \in V'$ and for $i \in \{1, 2, \dots , d-b(v)\}$;  
(b) $|\inc{G'+T^*}{v} \cap E(T)| \le b(v)$ for $v \in V'$;  
(c) $T$ contains exactly one of $v_e v_{u,e}$ and $v_e v_{u',e}$ for $e = u u' \in E(H)$; 
and 
(d) $E(T^*) \subseteq E(T)$. 
\end{lemma}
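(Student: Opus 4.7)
My plan is to prove the four properties in order, with parts (a)--(c) being short local arguments and part (d) requiring a more global counting scheme.

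For (a), I note that each $\bar{v}_i$ is a pendant vertex of $G$: its only neighbor is $v$. Hence the unique incident edge $v\bar{v}_i$ must lie in every spanning tree $T$, otherwise $\bar{v}_i$ would be isolated. Property (b) follows by arithmetic: part (a) places $d - b(v)$ pendant edges at $v$ into $T$, and the degree bound $\dg{T}{v} \le d$ leaves room for at most $b(v)$ additional edges of $T$ incident to $v$---necessarily from $\inc{G'+T^*}{v}$. For (c), observe that $v_e \notin L$ and $b(v_e) = 1$ imply that $\inc{G'+T^*}{v_e} = \{v_e v_{u,e}, v_e v_{u',e}\}$; part (b) then forces at most one of these in $T$, and connectivity (the pendants of $v_e$ provide no alternative route from $v_e$ to the rest of $T$) forces at least one, so exactly one of them lies in $T$.

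For (d), I would set $T^\circ := T \cap (G'+T^*)$. Part (a) implies that $T^\circ$ is a spanning tree of $G'+T^*$, so writing $e' := |E(T^\circ) \cap E(G')|$ and $t^* := |E(T^\circ) \cap E(T^*)|$, we have $e' + t^* = |V'| + |V(T^*)\setminus L| - 1$, and the aim is to establish $t^* = |V(T^*)| - 1$. A first reduction is the observation that the forest $F := T^\circ \cap T^*$, regarded as a subgraph of $T^*$ on vertex set $V(T^*)$, has the property that every connected component contains at least one vertex of $L$: a component composed solely of internal $T^*$-vertices would be isolated in $T^\circ$, since those vertices have no edges outside $E(T^*)$. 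This alone gives $t^* \ge |V(T^*)| - |L|$.

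The main obstacle is to upgrade this inequality to $t^* = |V(T^*)| - 1$, i.e., to show that $F$ has exactly one connected component. I plan to do this by a tight gadget-by-gadget analysis of the $G'$-edges forced into $T^\circ$. Already (c) contributes exactly $|E(H)|$ edges to $e'$. Inside each {\sc or}-gadget, the vertex $r_u$ has $b(r_u)=1$ and no incident $T^*$-edge, so (b) together with connectivity forces exactly one edge $r_u v_{u,e_i}$ into $T^\circ$. A similar but longer analysis inside each {\sc and}-gadget, using $b(r_u)=1$, $b(x_u)=b(y_u)=2$, and the requirement that each of $r_u$, $x_u$, $y_u$ be connected within the non-pendant part of $T$, pins down a fixed number of edges of $E'_u$ in $T^\circ$. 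Summing these contributions over all gadgets should yield $e' = |V'| - |L|$; combining with $e' + t^* = |V'| + |V(T^*)\setminus L| - 1$ then forces $t^* = |V(T^*)| - 1 = |E(T^*)|$, establishing $E(T^*) \subseteq E(T)$.
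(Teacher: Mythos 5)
Parts (a)--(c) of your proposal are correct and essentially what the paper treats as obvious. The gap is in (d), at the {\sc and}-gadget step. You claim that $b(r_u)=1$, $b(x_u)=b(y_u)=2$ and the connectivity of $r_u$, $x_u$, $y_u$ ``pin down a fixed number of edges of $E'_u$ in $T^\circ$''. They do not. Consider the configuration in which $E'_u\cap E(T)=\{r_u v_{u,e_0},\ w_u x_u,\ w_u y_u,\ v_{u,e_1}v_{u,e_2}\}$ (four edges), with $v_{e_1}v_{u,e_1}\in E(T)$ and the $T^*$-edges at $w_u$ and $v_{u,e_0}$ present: all capacities $b(\cdot)$ are respected, no cycle is created, and $r_u$, $x_u$, $y_u$ are each connected to the non-pendant part of $T$ (through $v_{u,e_0}$, resp.\ $w_u$). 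So your listed ingredients admit a count of four, not five. What actually excludes this configuration is not the connectivity of $r_u,x_u,y_u$ but the fact that the component containing $v_{u,e_1}$ and $v_{u,e_2}$ could leave the gadget only through $v_{e_1}$ or $v_{e_2}$, and by (b) these vertices with $b=1$ carry a single tree edge, so that component could never reach the rest of $T$. This ``dead-end'' phenomenon is exactly the observation your sketch omits, and it is the heart of the paper's proof of (d): by (b), $T-E(T^*)$ contains no path joining two distinct components of $G'-\{v\in V'\mid b(v)=1\}$, each such component contains exactly one vertex of $L$, hence every path of $T^*$ between two leaves must already lie in $T$, giving $E(T^*)\subseteq E(T)$ directly, without any counting.

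That said, your counting framework can be repaired without proving any exact per-gadget count: only \emph{upper} bounds are needed. Indeed, since $t^*\le |E(T^*)|=|V(T^*)|-1$, your identity $e'+t^*=|V'|+|V(T^*)\setminus L|-1$ already yields $e'\ge |V'|-|L|$. Conversely, (b) at each $v_e$ gives at most one $G'$-edge per edge gadget, (b) at $r_u$ gives at most one per {\sc or} gadget, and in an {\sc and} gadget (b) at $r_u$ allows at most one of $r_uv_{u,e_0},r_uw_u$ while acyclicity of $T$ allows at most four of the five cycle edges $w_ux_u, x_uv_{u,e_1}, v_{u,e_1}v_{u,e_2}, v_{u,e_2}y_u, y_uw_u$, so at most five in total. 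Summing over the disjoint gadget edge sets gives $e'\le |E(H)|+o+5a$, where $o$ and $a$ are the numbers of {\sc or} and {\sc and} vertices, and since $H$ is cubic ($2|E(H)|=3|V(H)|$) this quantity equals $|V'|-|L|$. Hence $e'=|V'|-|L|$, $t^*=|V(T^*)|-1$, and $E(T^*)\subseteq E(T)$. With this correction your route is valid and genuinely different from the paper's (a global edge count instead of a connectivity argument), though the paper's argument is shorter and also explains \emph{why} the bound is tight.
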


\begin{proof}
Since $T$ is a spanning tree with maximum degree at most $d$, (a), (b), and (c) are obvious. 
By (b), $T - E(T^*)$ contains no path connecting two distinct components of 
$G' - \{v \in V' \mid b(v) = 1 \}$. 
Since each connected component of $G' - \{v \in V' \mid b(v) = 1 \}$ contains exactly one vertex in $L$, 
for any pair of vertices $v_1, v_2 \in L$, the unique $v_1$-$v_2$ path in $T^*$ must be contained in $T$. 
This shows that $E(T^*) \subseteq E(T)$, because $L$ is the set of all the leaves of $T^*$. 
\end{proof}

For a spanning tree $T$ in $G$ with maximum degree at most $d$, 
we define an orientation $\sigma_T$ of $H$ as follows: 
an edge $e = u u' \in E(H)$ is inward for $u$ if $v_e v_{u',e} \in E(T)$, 
and it is outward for $u$ if $v_e v_{u,e} \in E(T)$. 
This defines an orientation of $H$ by~\cref{obs:NCL01} (c). 
The following two lemmas show the correspondence between 
NCL configurations in $H$ and 
spanning trees in $G$ with maximum degree at most $d$.

\begin{lemma}
\label{obs:NCL02}
For any spanning tree $T$ in $G$ with maximum degree at most $d$, 
the orientation $\sigma_T$ is an NCL configuration of $H$. 
\end{lemma}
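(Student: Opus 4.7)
The plan is to verify the NCL condition on $\sigma_T$ at every vertex $u$ of $H$ by contradiction, splitting on whether $u$ is an {\sc or} vertex or an {\sc and} vertex. In both cases the main tool is the saturation forced by parts (a), (b), and (d) of \cref{obs:NCL01}.

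First consider an {\sc or} vertex $u$ with incident edges $e_1, e_2, e_3$ and suppose for contradiction that all three are outgoing, i.e.\ $v_{e_i}v_{u,e_i}\in E(T)$ for $i=1,2,3$. Each $v_{u,e_i}\in L$ already has two $T$-edges incident to it from $G'+T^*$ (the $T^*$-edge by (d), and $v_{e_i}v_{u,e_i}$ by assumption), which saturates the bound $b(v_{u,e_i})=2$ from (b). Hence $r_u v_{u,e_i}\notin E(T)$ for every $i$. Since $r_u\notin L$ has no $T^*$-edge and its only $G'$-neighbors are the three $v_{u,e_i}$'s, the only $T$-edges at $r_u$ are the pendants $r_u\bar{r}_{u,j}$, making the component of $r_u$ a star disjoint from the rest of $T$ and contradicting that $T$ is spanning.

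Now let $u$ be an {\sc and} vertex with weight-$2$ edge $e_0$ and weight-$1$ edges $e_1, e_2$, and suppose the NCL condition fails, so that $e_0$ is outgoing and, up to swapping, $e_1$ is outgoing. The same saturation argument at $v_{u,e_0}$ yields $v_{u,e_0}r_u\notin E(T)$; since the only other $G'$-neighbor of $r_u$ is $w_u$ and $r_u\notin L$, we must have $r_u w_u\in E(T)$ to avoid isolating $r_u$. To finish I will apply a cut-counting argument over $S := V_u\cup\{v_{e_0},v_{e_1},v_{e_2}\}$. The $T$-edges crossing $(S,V\setminus S)$ can be listed exactly using (a), (c), and (d): all pendants $v\bar{v}_j$ for $v\in S$, one $T^*$-edge at each of $v_{u,e_0}$ and $w_u$, and for each $j$ the edge $v_{e_j}v_{u',e_j}$ precisely when $v_{e_j}v_{u,e_j}\notin E(T)$. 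Combined with the tree identity that a spanning tree has (components of $T[S]$) + (components of $T[V\setminus S]$) $-\,1$ crossing edges, and with the observation that every $\bar{v}$ attached to $S$ is isolated in $T[V\setminus S]$ while $V\setminus S$ contains at least one further non-pendant vertex (assuming $|V(H)|\ge 2$), this identity reduces to a lower bound of $4$ on the number of $T$-edges in $E'_u\setminus\{r_u w_u,\, v_{u,e_0}r_u\}$.

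The matching upper bound comes from applying (b) at $w_u$, $v_{u,e_1}$, and $v_{u,e_2}$. Writing $c_2 := [v_{e_2}v_{u,e_2}\in E(T)]$ and letting $b_2,\dots,b_6$ be indicators of the five edges $w_u x_u$, $w_u y_u$, $x_u v_{u,e_1}$, $y_u v_{u,e_2}$, $v_{u,e_1}v_{u,e_2}$, these bounds together with the assumed $v_{e_0}v_{u,e_0},\,v_{e_1}v_{u,e_1}\in E(T)$ give $b_2+b_3\le 1$, $b_4+b_6\le 1$, and $b_5+b_6+c_2\le 2$; a quick case analysis on $c_2\in\{0,1\}$ shows $b_2+b_3+b_4+b_5+b_6\le 3$, contradicting the lower bound $\ge 4$. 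The main obstacle is performing the cut bookkeeping precisely enough to extract the sharp lower bound of $4$ rather than a weaker estimate; once crossings and internal components are tracked correctly, the matching upper bound is elementary degree arithmetic.
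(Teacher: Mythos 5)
Your proof is correct, and while your {\sc or}-vertex case is essentially the paper's argument in contrapositive form (degree saturation at the $v_{u,e_i}$ plus the fact that $r_u$ would otherwise be stranded with its pendants), your {\sc and}-vertex case takes a genuinely different route. The paper argues directly: either $r_u v_{u,e_0}\in E(T)$, in which case saturation at $v_{u,e_0}$ makes $e_0$ inward, or $r_u w_u\in E(T)$, in which case connectivity together with the bounds of \cref{obs:NCL01}(b),(d) forces one of the two five-edge paths $r_u w_u, w_u x_u, x_u v_{u,e_1}, v_{u,e_1}v_{u,e_2}, v_{u,e_2}y_u$ (or its mirror) into $T$, which saturates $v_{u,e_1}$ and $v_{u,e_2}$ and makes $e_1,e_2$ inward. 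You instead assume the NCL condition fails, establish $r_u w_u\in E(T)$ the same way, and then replace the ``forced path'' step by a cut-counting argument over the gadget: using (a), (c), (d) the $T$-edges leaving $S$ (your $V_u$ should be read as all seven gadget vertices including $r_u$, which you never quite define) are exactly the pendant edges, the two $T^*$-edges at $v_{u,e_0}$ and $w_u$, and $v_{e_2}v_{u',e_2}$ when $c_2=0$; with the identity ``crossing edges $=$ components of $T[S]$ $+$ components of $T[V\setminus S]$ $-1$'' and the $10d-17$ isolated pendants outside plus one further component, this gives $c(T[S])\le 3-c_2$, hence at least $7+c_2$ internal $T$-edges, hence at least $4$ edges among $w_ux_u,w_uy_u,x_uv_{u,e_1},y_uv_{u,e_2},v_{u,e_1}v_{u,e_2}$; the degree bounds at $w_u$, $v_{u,e_1}$, $v_{u,e_2}$ cap this at $3$ (indeed $b_4+b_6\le 1$ and $b_5\le 1$ already suffice, without the case split on $c_2$), a contradiction. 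I checked the bookkeeping and it does deliver the sharp bound of $4$ (and, reassuringly, it still does if one reads $S$ without $r_u$, after adjusting the crossing list). What each approach buys: the paper's argument is shorter and exhibits exactly which edges of the gadget lie in $T$, but its key structural claim leaves the connectivity reasoning implicit; your counting argument makes the use of connectivity explicit and mechanical, at the price of heavier arithmetic and the need to verify that the crossing list is exact, which is precisely what \cref{obs:NCL01} provides.
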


\begin{proof}
It suffices to show that, for any $u \in V(H)$, 
the total weights of incoming arcs at $u$ is at least two in $\sigma_T$. 

First, let $u \in V(H)$ be an {\sc or}-vertex with $\inc{H}{u} = \{e_1, e_2, e_3\}$. Since $T$ is a spanning tree, 
it holds that $r_u v_{u, e_i} \in E(T)$ for some $i \in \{1, 2, 3\}$. 
Then, since $|\inc{G'}{v_{u, e_i}} \cap E(T)| \le b(v_{u, e_i}) - 1 = 1$ 
by (b) and (d) in~\cref{obs:NCL01}, 
it holds that $v_{e_i} v_{u,e_i} \not\in E(T)$. 
This means that $e_i$ is inward for $u$ in $\sigma_T$, and hence 
the total weights of incoming arcs at $u$ is at least two. 

Second, let $u \in V(H)$ be an {\sc and}-vertex with $\inc{H}{u} = \{e_0, e_1, e_2\}$, where $e_0$ is a weight-$2$ edge and $e_1$ and $e_2$ are weight-$1$ edges. 
Since $T$ is a spanning tree, 
we have either $r_u v_{u, e_0} \in E(T)$ or $r_u w_u \in E(T)$. 
If $r_u v_{u, e_0} \in E(T)$, then $e_0$ is inward for $u$ in $\sigma_T$, 
which implies that the total weights of incoming arcs at $u$ is at least two. 
Therefore, it suffices to consider the case when $r_u w_u \in E(T)$. 
Since $|\inc{G'}{v} \cap E(T)| \le 2$ for $v \in \{w_u, x_u, y_u, v_{u,e_1}, v_{u,e_2}\}$ by (b) and (d) in~\cref{obs:NCL01}, 
we have either $\{r_u w_u, w_u x_u, x_u v_{u, e_1}, v_{u, e_1} v_{u, e_2}, v_{u, e_2} y_u \} \subseteq E(T)$ or 
$\{r_u w_u, w_u y_u, y_u v_{u, e_2}, v_{u, e_2} v_{u, e_1}, v_{u, e_1} x_u \} \subseteq E(T)$.  
In either case, $v_{e_i} v_{u,e_i} \not\in E(T)$ for $i \in \{1, 2\}$, 
because $|\inc{G'}{v_{u, e_i}} \cap E(T)| \le 2$. 
This means that $e_i$ is inward for $u$ in $\sigma_T$ for $i \in \{1, 2\}$, and hence 
the total weights of incoming arcs at $u$ is at least two. 

Therefore, $\sigma_T$ is an NCL configuration of $H$. 
\end{proof}

\begin{lemma}
\label{obs:NCL03}
For any NCL configuration $\sigma$ of $H$, we can construct
a spanning tree $T$ in $G$ with maximum degree at most $d$
such that $\sigma_T = \sigma$ in polynomial time. 
\end{lemma}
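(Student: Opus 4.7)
The plan is to explicitly construct $T$ from $\sigma$ edge-by-edge and then check three properties: (a) $T$ is a spanning tree of $G$; (b) its maximum degree is at most $d$; (c) $\sigma_T = \sigma$. Since the graph $G$ is built from $|V(H)| + |E(H)|$ local gadgets glued along $T^\ast$, it is natural to decide the edges of $T$ gadget by gadget, using the orientation $\sigma$ to drive the choice. The leaf attachments $v \bar{v}_i$ and the tree $T^\ast$ are forced to lie in $T$ by Lemma~\ref{obs:NCL01}, so each $v \in V'$ already has degree $d - b(v)$ from its leaves, leaving exactly $b(v)$ incidences to be distributed over $E' \cup E(T^\ast)$.

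Given this, I would first read off the "boundary" edges from $\sigma$: for each $e = uu' \in E(H)$, put $v_e v_{u,e}$ into $T$ precisely when $e$ is outward from $u$ in $\sigma$, and $v_e v_{u',e}$ otherwise. This single choice already guarantees $\sigma_T = \sigma$ and uses up the entire budget $b(v_e)=1$. It then remains to choose edges inside each vertex gadget so that what is obtained stitches together into a tree within the remaining budget. For an OR vertex $u$ with $\inc{H}{u} = \{e_1, e_2, e_3\}$, the NCL constraint furnishes at least two edges $e_i$ inward at $u$; I pick any one and include $r_u v_{u,e_i}$. For an AND vertex $u$ with weight-$2$ edge $e_0$ and weight-$1$ edges $e_1, e_2$, I split into two cases. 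If $e_0$ is inward at $u$, I include the $Y$-shape $r_u v_{u,e_0}, w_u x_u, w_u y_u, x_u v_{u,e_1}, y_u v_{u,e_2}$ inside the gadget. If $e_0$ is outward at $u$, then the NCL constraint forces both $e_1$ and $e_2$ inward, and I include the path $r_u w_u, w_u y_u, y_u v_{u,e_2}, v_{u,e_2} v_{u,e_1}, v_{u,e_1} x_u$, making essential use of the "crossing" edge $v_{u,e_1} v_{u,e_2}$.

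The verification steps are a routine vertex-by-vertex count: for each $v \in V'$ the edges selected above contribute exactly $b(v)$ incidences (so total degree $d$), and within each gadget the chosen edges, together with the selected boundary edges, form a tree connecting all gadget vertices; gluing these gadget trees through $T^\ast$ (which spans $L$) gives a connected acyclic spanning subgraph of $G$. Property (c) is immediate from the definition of $\sigma_T$ and the boundary step. The whole construction clearly runs in polynomial time. I expect the main obstacle to be the AND gadget: the budget $b(w_u)=3$ is tight once the forced $T^\ast$ edge at $w_u$ is counted, so only two of the three $E'$-edges incident to $w_u$ can be placed in $T$, and the edge we drop depends on the orientation of $e_0$. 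The delicate point is to show that in the outward-$e_0$ case the "crossing" edge $v_{u,e_1}v_{u,e_2}$ does reconnect the gadget around the missing $w_u x_u$ without creating a cycle and while respecting every $b(\cdot)$; once this is in place, the OR gadgets and the gluing via $T^\ast$ are straightforward.
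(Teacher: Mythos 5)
Your construction is essentially the same as the paper's: force $E(T^*)$ and all pendant edges $v\bar{v}_i$, read the edge-gadget choice $f_e$ off $\sigma$, take one inward edge for each {\sc or} gadget, and take the five-edge subtree of $E'_u$ determined by the orientation of $e_0$ for each {\sc and} gadget, your outward-$e_0$ case differing from the paper's only by the $x_u \leftrightarrow y_u$ symmetry of the gadget. One harmless slip: at an {\sc or} vertex the NCL constraint guarantees at least \emph{one} inward (weight-$2$) edge, not two, but one is all your construction needs.
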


\begin{proof}
Given an NCL configuration $\sigma$ of $H$,
we construct a spanning subgraph $T$ of $G$ such that
$$
E(T) := E(T^*) \cup \{v \bar{v}_i \mid  v \in V',\ i \in \{1, 2, \dots , d-b(v)\}\} \cup \{f_e \mid e \in E(H)\} \cup \bigcup_{u \in V(H)} F_u, 
$$
where an edge $f_e$ for $e \in E(H)$ and an edge set $F_u$ for $u \in V(H)$ are defined as follows.  
\begin{itemize}
\item
For an edge $e = u u' \in E(H)$, 
define $f_e := v_e v_{u',e}$ if $e$ is inward for $u$ in $\sigma$ and 
define $f_e := v_e v_{u,e}$ otherwise. 
\item
For an {\sc or}-vertex $u \in V(H)$ with $\inc{H}{u} = \{e_1, e_2, e_3\}$, 
choose an arbitrarily edge $e_i$ that is inward for $u$ in $\sigma$ and 
define $F_u := \{r_u v_{u, e_i}\}$. 
Note that such $e_i$ exists as $\sigma$ is an NCL configuration. 
\item
For an {\sc and}-vertex $u \in V(H)$ with $\inc{H}{u} = \{e_0, e_1, e_2\}$, where $e_0$ is a weight-$2$ edge and $e_1$ and $e_2$ are weight-$1$ edges, 
define $F_u = E'_u \setminus \{ r_u w_u, v_{u, e_1} v_{u, e_2} \}$ if $e_0$ is inward for $u$ in $\sigma$, and 
define $F_u := E'_u \setminus \{ r_u v_{u, e_0}, w_u y_u \}$ otherwise.
\end{itemize}
Then, $T$ is a spanning tree in $G$ with maximum degree at most $d$ such that $\sigma_T = \sigma$, which completes the proof. 
\end{proof}

For two NCL configurations $\sigma_\ini$ and $\sigma_\tar$ of $H$, by~\cref{obs:NCL03}, 
we can construct spanning trees $T_\ini$ and $T_\tar$ in $G$ with maximum degree at most $d$
such that $\sigma_{T_\ini} = \sigma_\ini$ and $\sigma_{T_\tar} = \sigma_\tar$.  
This yields an instance $(G,\thr,\spt_\ini,\spt_\tar)$ of {\sc RST with Small Maximum Degree}.

\medskip
\noindent
\textbf{Correctness. }
In order to show the PSPACE-hardness of {\sc RST with Small Maximum Degree}, 
we show that the original instance $(H, \sigma_\ini, \sigma_\tar)$ of {\sc NCL Reconfiguration} 
is equivalent to the obtained instance $(G,\thr,\spt_\ini,\spt_\tar)$ of {\sc RST with Small Maximum Degree}, 
that is,  
we prove that $(H, \sigma_\ini, \sigma_\tar)$ is a $\yes$-instance if and only if 
$(G,\thr,\spt_\ini,\spt_\tar)$ is a $\yes$-instance. 
To this end, we use the following lemma. 

\begin{lemma}
\label{lem:NCL04}
Let $T_1$ and $T_2$ be spanning trees in $G$ with maximum degree at most $d$. 
If $\sigma_{T_1}$ and $\sigma_{T_2}$ are adjacent, then there is a reconfiguration sequence from $T_1$ to $T_2$
in which all the spanning trees have maximum degree at most $d$.  
\end{lemma}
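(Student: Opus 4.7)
Let $e^{*}=uu'$ be the unique edge of $H$ on which $\sigma_{T_{1}}$ and $\sigma_{T_{2}}$ differ. By the definition of $\sigma_{T}$, the symmetric difference $E(T_{1}) \triangle E(T_{2})$ must then contain both orientation edges $v_{e^{*}}v_{u,e^{*}}$ and $v_{e^{*}}v_{u',e^{*}}$ associated with $e^{*}$. Without loss of generality assume $e^{*}$ is inward for $u$ in $\sigma_{T_{1}}$, so $v_{e^{*}}v_{u',e^{*}}\in E(T_{1})$ and $v_{e^{*}}v_{u,e^{*}}\in E(T_{2})$. The plan is to reconfigure $T_{1}$ to $T_{2}$ via two intermediate spanning trees $T_{1}'$ and $T_{2}'$, where the middle step $T_{1}'\to T_{2}'$ is the single edge flip $-v_{e^{*}}v_{u',e^{*}}+v_{e^{*}}v_{u,e^{*}}$ and the two outer steps $T_{1}\to T_{1}'$ and $T_{2}'\to T_{2}$ are sequences of flips that preserve the induced NCL configuration.

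In the first outer step we modify $T_{1}$ inside the gadget at $u$ to obtain $T_{1}'$ with $\dg{T_{1}'}{v_{u,e^{*}}} \le d-1$; this guarantees that the middle flip does not push the degree of $v_{u,e^{*}}$ above $d$. If $\dg{T_{1}}{v_{u,e^{*}}} \le d-1$ already, we set $T_{1}':=T_{1}$. Otherwise we exploit the fact that $\sigma_{T_{2}}$ is a valid NCL configuration in which $e^{*}$ is \emph{not} inward at $u$: some other edge of $\inc{H}{u}$ must be inward at $u$ in $\sigma_{T_{2}}$, hence (since $\sigma_{T_{1}}$ and $\sigma_{T_{2}}$ agree off $e^{*}$) also in $\sigma_{T_{1}}$. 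For an \textsc{or}-vertex $u$ the required modification is the single flip $r_{u}v_{u,e^{*}}\leftrightarrow r_{u}v_{u,e'}$ where $e'\neq e^{*}$ is such an inward edge, and one checks directly that both the spanning-tree condition and the degree bound are preserved. For an \textsc{and}-vertex $u$ a similar but longer case analysis (splitting on $e^{*}=e_{0}$ versus $e^{*}\in\{e_{1},e_{2}\}$) yields a short sequence of flips internal to $E_{u}'$ that shifts the ``free slot'' of the gadget to $v_{u,e^{*}}$.

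The middle step $T_{2}':=T_{1}'-v_{e^{*}}v_{u',e^{*}}+v_{e^{*}}v_{u,e^{*}}$ is a single edge flip. Since $v_{e^{*}}$ has only the two neighbors $v_{u,e^{*}}$ and $v_{u',e^{*}}$ in $G'+T^{*}$, the unique cycle in $T_{1}'+v_{e^{*}}v_{u,e^{*}}$ necessarily contains $v_{e^{*}}v_{u',e^{*}}$, so $T_{2}'$ is a spanning tree; the degree of $v_{u,e^{*}}$ is at most $d$ by the preparation of $T_{1}'$, the degree of $v_{u',e^{*}}$ strictly decreases, and no other degree changes. Clearly $\sigma_{T_{2}'}=\sigma_{T_{2}}$. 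The final outer step $T_{2}'\to T_{2}$ then relies on the auxiliary statement \emph{any two spanning trees of $G$ of maximum degree at most $d$ inducing the same NCL configuration are reconfigurable under the degree bound}. Since the only freedom in realizing a fixed NCL configuration lies inside the individual \textsc{or}- and \textsc{and}-gadgets, this reduces to a finite list of local reachability claims between the bounded number of local states consistent with each gadget orientation, each handled by an explicit short sequence of valid flips.

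The main technical obstacle is the case analysis for \textsc{and}-vertices, both in the preparation of $T_{1}'$ and in the auxiliary reconfigurability statement. The gadget $E_{u}'$ has seven edges and couples $e_{0}$, $e_{1}$, and $e_{2}$ in a nontrivial way, so one must enumerate the possible local tree structures compatible with a given orientation and verify, for each pair, that they are connected by a sequence of flips that remain spanning trees of maximum degree at most $d$. Everything else is routine: the spanning-tree property of each flip follows from standard cycle-breaking, and the degree bound follows from the careful choice of which edge to move in each step.
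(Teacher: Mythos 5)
Your three-phase plan (prepare the gadget at $u$ so that $v_{u,e^*}$ has degree at most $d-1$, perform the single flip at $v_{e^*}$, then clean up) is a reasonable outline, but as written it has a genuine gap: the two claims that carry all the technical weight are asserted rather than proved. For an \textsc{and}-vertex you state that ``a similar but longer case analysis \dots yields a short sequence of flips internal to $E'_u$'' and, for the final step, that same-configuration trees are reconfigurable because the local reachability claims are ``each handled by an explicit short sequence of valid flips.'' Neither is carried out, and these case analyses are precisely the substance of the lemma. Concretely, what must be verified is: (i) which subsets of $E'_u$ can occur in a spanning tree of maximum degree at most $d$ realizing a given orientation (by Lemma~\ref{obs:NCL01} they are $E'_u$ minus exactly one of $v_{u,e_0}r_u,\ r_u w_u$ and exactly one edge of the $5$-cycle on $w_u, x_u, v_{u,e_1}, v_{u,e_2}, y_u$, further restricted by the degree budgets at $w_u$, $v_{u,e_0}$, $v_{u,e_1}$, $v_{u,e_2}$ that the orientation imposes); (ii) that among these states one with $d(v_{u,e^*})\le d-1$ exists --- this is where the validity of $\sigma_{T_2}$ must be invoked, e.g.\ if $e^*=e_0$ then $e_1,e_2$ are inward at $u$ in both configurations, and if $e^*\in\{e_1,e_2\}$ then $e_0$ is inward --- and (iii) that this state is reachable from the current one by flips that stay spanning trees of maximum degree at most $d$. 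Your auxiliary claim additionally needs the observation that all edges outside the gadgets are forced (Lemma~\ref{obs:NCL01}), that the gadget choices are independent so one can flip gadget by gadget, and that for each fixed orientation the flip graph on the feasible local states of each gadget is connected.

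For what it is worth, all of these checks do succeed (for every orientation the feasible \textsc{and}-gadget states form a connected flip graph, and a state freeing $v_{u,e^*}$ always exists under the validity of $\sigma_{T_2}$), so your route can be completed; but in its current form it is a plan, not a proof. The paper avoids the modular decomposition altogether: it runs a single induction on $|E(T_1)\setminus E(T_2)|$, at each step locating a gadget where the two trees differ and exchanging one edge (or the relevant gadget edge set) in whichever of $T_1$, $T_2$ keeps the degree bound --- the ``swap the roles of $T_1$ and $T_2$'' step, justified by the adjacency of $\sigma_{T_1}$ and $\sigma_{T_2}$. That packaging performs exactly the gadget-level verification you deferred, and it never needs a standalone ``same configuration implies reconfigurable'' lemma.
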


\begin{proof}
Let $e^* \in E(H)$ be the unique edge in $H$ whose direction is different in $\sigma_1$ and $\sigma_2$. 
We prove the existence of a reconfiguration sequence by induction on $|E(T_1) \setminus E(T_2)|$. 
If $|E(T_1) \setminus E(T_2)| =1$, then $T_1$ and $T_2$ are adjacent, and hence the claim is obvious. 
Suppose that $|E(T_1) \setminus E(T_2)| \ge 2$. 
Since 
$\sigma_{T_1}$ and $\sigma_{T_2}$ are adjacent, 
there exists a vertex $u \in V(H)$ such that 
$T_1$ and $T_2$ contain different edge sets in 
the gadget corresponding to $u$.  
That is, $(E(T_1) \setminus E(T_2)) \cap \inc{G'}{r_u}\not= \emptyset$ for an {\sc or}-vertex $u \in V(H)$ or
$(E(T_1) \setminus E(T_2)) \cap E'_u \not= \emptyset$ for an {\sc and}-vertex $u \in V(H)$. 
We fix such a vertex $u \in V(H)$.

Suppose that $u$ is an {\sc or}-vertex such that $(E(T_1) \setminus E(T_2)) \cap \inc{G'}{r_u}\not= \emptyset$.  
In this case, $E(T_1) \cap \inc{G'}{r_u} = \{ r_u v_{u, e_i}\}$ and $E(T_2) \cap \inc{G'}{r_u} = \{ r_u v_{u, e_j} \}$ for some distinct $i, j \in \{1, 2, 3\}$. 
By changing the roles of $T_1$ and $T_2$ if necessary, we may assume that either $e^* \not\in \inc{H}{u}$ or $e^*$ is inward for $u$ in $\sigma_1$. 
Then, $T'_1 := T_1 - r_u v_{u, e_i} + r_u v_{u, e_j}$ is a spanning tree with maximum degree at most $d$ 
such that  $T'_1$ is adjacent to $T_1$, $\sigma_{T'_1} = \sigma_{T_1}$, and $|E(T'_1) \setminus E(T_2)| = |E(T_1) \setminus E(T_2)| - 1$. 
By induction hypothesis, $T'_1$ is reconfigurable to $T_2$, and hence $T_1$ is reconfigurable to $T_2$. 

Suppose that $u$ is an {\sc and}-vertex such that 
$|(E(T_1) \setminus E(T_2)) \cap E'_u| = 1$. 
By changing the roles of $T_1$ and $T_2$ if necessary, we may assume that either $e^* \not\in \inc{H}{u}$ or $e^*$ is inward for $u$ in $\sigma_{T_1}$. 
Then, $T'_1 := T_1 - (E(T_1) \cap E'_u) + (E(T_2) \cap E'_u)$ is a spanning tree with maximum degree at most $d$ 
such that  $T'_1$ is adjacent to $T_1$, $\sigma_{T'_1} = \sigma_{T_1}$, and $|E(T'_1) \setminus E(T_2)| = |E(T_1) \setminus E(T_2)| - 1$. 
By induction hypothesis, $T'_1$ is reconfigurable to $T_2$, and hence $T_1$ is reconfigurable to $T_2$. 

The remaining case is that $u$ is an {\sc and}-vertex such that 
$|(E(T_1) \setminus E(T_2)) \cap E'_u| \ge 2$. 
Since each of $T_1$ and $T_2$ contains exactly one edge in $\inc{G'}{r_u}$ and exactly four edges in $E'_u \setminus \inc{G'}{r_u}$,  
we have that $|(E(T_1) \setminus E(T_2)) \cap \inc{G'}{r_u}| = 1$ and $|(E(T_1) \setminus E(T_2)) \cap (E'_u \setminus \inc{G'}{r_u})| = 1$. 
By changing the roles of $T_1$ and $T_2$ if necessary, we may assume that 
$E(T_1) \cap \inc{G'}{r_u} = \{ r_u v_{u, e_0} \}$ and $E(T_2) \cap \inc{G'}{r_u} = \{ r_u w_u \}$. 
This implies that $v_{u, e_0} v_{e_0} \not\in E(T_1)$, and hence $e_0$ is inward for $u$ in $\sigma_{T_1}$. 
We consider the following two cases separately. 
\begin{itemize}
\item
Suppose that $e_0$ is inward for $u$ in $\sigma_{T_2}$.
In this case, $T'_2 := T_2 - r_u w_u + r_u v_{u, e_0}$ is a spanning tree with maximum degree at most $d$ 
such that  $T'_2$ is adjacent to $T_2$, $\sigma_{T'_2} = \sigma_{T_2}$, and $|E(T_1) \setminus E(T'_2)| = |E(T_1) \setminus E(T_2)| - 1$. 
By induction hypothesis, $T_1$ is reconfigurable to $T'_2$, and hence $T_1$ is reconfigurable to $T_2$. 
\item
Suppose that $e_0$ is outward for $u$ in $\sigma_{T_2}$.
In this case, $e_1$ and $e_2$ are inward for $u$ in $\sigma_{T_2}$ by \cref{obs:NCL02}. 
Furthermore, since $e^* = e_0$ holds, $e_1$ and $e_2$ are inward for $u$ also in $\sigma_{T_1}$, 
that  is, $v_{u, e_1} v_{e_1}, v_{u, e_2} v_{e_2} \not\in E(T_1)$. 
Then, $T'_1 := T_1 - (E(T_1) \cap (E'_u \setminus \inc{G'}{r_u})) + (E(T_2) \cap (E'_u \setminus \inc{G'}{r_u}))$ is a spanning tree with maximum degree at most $d$ 
such that  $T'_1$ is adjacent to $T_1$, $\sigma_{T'_1} = \sigma_{T_1}$, and $|E(T'_1) \setminus E(T_2)| = |E(T_1) \setminus E(T_2)| - 1$. 
By induction hypothesis, $T'_1$ is reconfigurable to $T_2$, and hence $T_1$ is reconfigurable to $T_2$. 
\end{itemize}
By the above argument, there is a reconfiguration sequence from $T_1$ to $T_2$. 
\end{proof}

We are now ready to show the equivalence of $(H, \sigma_\ini, \sigma_\tar)$ and $(G,\thr,\spt_\ini,\spt_\tar)$. 
\begin{lemma}
Let $(H, \sigma_\ini, \sigma_\tar)$ be an instance of {\sc NCL Reconfiguration}
and $(G,\thr,\spt_\ini,\spt_\tar)$ be an instance of {\sc RST with Small Maximum Degree}
obtained by the above construction. 
Then, $(H, \sigma_\ini, \sigma_\tar)$ is a $\yes$-instance if and only if 
$(G,\thr,\spt_\ini,\spt_\tar)$ is a $\yes$-instance. 
\end{lemma}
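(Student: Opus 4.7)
The plan is to prove the two implications separately, leveraging Lemmas~\ref{obs:NCL02}, \ref{obs:NCL03}, and \ref{lem:NCL04} as the main tools.

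For the forward direction, assume $\sigma_\ini = \sigma_0, \sigma_1, \ldots, \sigma_k = \sigma_\tar$ is a valid NCL reconfiguration sequence. I would apply Lemma~\ref{obs:NCL03} to each $\sigma_i$ to obtain a spanning tree $T'_i$ of $G$ with maximum degree at most $d$ satisfying $\sigma_{T'_i} = \sigma_i$. Since consecutive $\sigma_i, \sigma_{i+1}$ are NCL-adjacent, Lemma~\ref{lem:NCL04} supplies a reconfiguration sequence from $T'_i$ to $T'_{i+1}$ through spanning trees of maximum degree at most $d$. The remaining task is to bridge the given trees $T_\ini$ and $T_\tar$ to $T'_0$ and $T'_k$, respectively. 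Since $\sigma_{T_\ini} = \sigma_\ini = \sigma_{T'_0}$ and $\sigma_{T_\tar} = \sigma_\tar = \sigma_{T'_k}$, Lemma~\ref{lem:NCL04} applies to these pairs as well; its induction on $|E(T_1) \setminus E(T_2)|$ goes through verbatim when $\sigma_{T_1} = \sigma_{T_2}$, so this presents no additional difficulty. Concatenating all pieces yields a reconfiguration sequence from $T_\ini$ to $T_\tar$.

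For the backward direction, assume $T_\ini = T_0, T_1, \ldots, T_m = T_\tar$ is a reconfiguration sequence in which every spanning tree has maximum degree at most $d$. I would associate to each $T_i$ the orientation $\sigma_{T_i}$, which is a valid NCL configuration of $H$ by Lemma~\ref{obs:NCL02}. The pivotal step is to show that consecutive orientations $\sigma_{T_i}$ and $\sigma_{T_{i+1}}$ are either identical or differ in the direction of exactly one edge of $H$. Indeed, by Lemma~\ref{obs:NCL01}~(a) and (c), every spanning tree in the sequence contains all pendant edges $v\bar v_j$ and, for each $e = uu' \in E(H)$, exactly one of the two edges $v_e v_{u,e}$ and $v_e v_{u',e}$. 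Hence the single edge swap turning $T_i$ into $T_{i+1}$ can affect at most one such pair; if it does, it flips $\sigma_{T_{i+1}}$ from $\sigma_{T_i}$ in exactly the direction of the corresponding edge $e$. Eliminating consecutive duplicates from $\sigma_{T_0}, \sigma_{T_1}, \ldots, \sigma_{T_m}$ then produces an NCL reconfiguration sequence from $\sigma_\ini$ to $\sigma_\tar$.

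The main subtlety is therefore in the backward direction, where one must confirm that a single edge flip in the spanning tree cannot simultaneously flip the directions of two edges of $H$. This is guaranteed by the local structure of the edge gadget: the vertex $v_e$ has only two neighbors in $G'$, namely $v_{u,e}$ and $v_{u',e}$, so any edge swap involving one of these edges is forced to add the other in order to keep $v_e$ connected, and all other swaps within a vertex gadget leave $\sigma_T$ unchanged. With this observation in hand, both directions follow routinely, completing the PSPACE-hardness reduction.
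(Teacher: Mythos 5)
Your proposal is correct and follows essentially the same route as the paper: Lemma~\ref{obs:NCL03} combined with Lemma~\ref{lem:NCL04} for one direction, and Lemma~\ref{obs:NCL02} together with the observation that a single edge flip can alter at most one edge gadget (hence at most one orientation) for the other. The only deviation is cosmetic: the paper avoids your endpoint-bridging step (and thus any need to extend Lemma~\ref{lem:NCL04} to the case $\sigma_{T_1}=\sigma_{T_2}$) by simply setting $T_0:=\spt_\ini$ and $T_k:=\spt_\tar$ and applying Lemma~\ref{lem:NCL04} only to adjacent configurations; your extension is nonetheless valid, since the induction indeed goes through (in fact simplifies) when the two configurations coincide.
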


\begin{proof}
    We first show the ``if'' part. 
    Suppose that there exists a reconfiguration sequence $\langle T_\ini = T_0, T_1, \dots, T_k = T_\tar \rangle$ from $T_\ini$ to $T_\tar$, where 
    $T_i$ is a spanning tree in $G$ with maximum degree at most $d$ for any $i \in \{0, 1, \dots , k\}$ and
    $T_i$ and $T_{i+1}$ are adjacent for any $i  \in \{0, 1, \dots, k-1\}$.
    Then, $\sigma_{T_i}$ is an NCL configuration of $H$ for $i \in \{0, 1, \dots , k\}$ by \cref{obs:NCL02}, and 
    we have either $\sigma_{T_i} = \sigma_{T_{i+1}}$ or $\sigma_{T_i}$ and $\sigma_{T_{i+1}}$ are adjacent for $i \in \{0, 1, \dots, k-1\}$ as $|E(T_i) \setminus E(T_{i+1})| \le 1$. 
    Since $\sigma_{T_0} = \sigma_{T_\ini} = \sigma_\ini$ and  $\sigma_{T_k} = \sigma_{T_\tar} = \sigma_\tar$, there exists a sequence of adjacent NCL configurations from $\sigma_\ini$ to $\sigma_\tar$. 
    
    To show the ``only-if'' part, suppose that there exists a reconfiguration sequence $\langle \sigma_\ini = \sigma_0, \sigma_1, \dots, \sigma_k = \sigma_\tar \rangle$, where 
    $\sigma_i$ is an NCL configuration of $H$ for any $i \in \{0, 1, \dots , k\}$ and $\sigma_i$ and $\sigma_{i+1}$ are adjacent for any $i  \in \{0, 1, \dots, k-1\}$.
    For $i \in \{1, 2,  \dots, k-1\}$, let $T_i$ be a spanning tree in $G$ with maximum degree at most $d$ such that $\sigma_{T_i} = \sigma_i$, 
	whose existence is guaranteed by Lemma~\ref{obs:NCL03}. 
	Let $T_0 := T_\ini$ and $T_k := T_\tar$. 
	Since \cref{lem:NCL04} shows that there is a reconfiguration sequence from $T_i$ to $T_{i+1}$ for $i \in \{0, 1, \dots, k-1\}$, 
	$T_\ini$ is reconfigurable to $T_\tar$. 
\end{proof}

This lemma shows that the above construction gives a polynomial reduction from
{\sc NCL Reconfiguration} to {\sc RST with Small Maximum Degree}. 
Therefore,  {\sc RST with Small Maximum Degree} is PSPACE-hard, 
which completes the proof of \cref{thm:smd_hard}.

\subsection{A Solvable Special Case}
\label{sec:smd_d-1}

In this subsection, we show a sufficient condition for the reconfigurability of instances.  
The condition is as follows; 
at least one of $\spt_\ini$ and $\spt_\tar$ has maximum degree at most $\thr-1$.
Without loss of generality, we may assume that $\spt_\tar$ satisfies the condition. 
Under this assumption, we have the following lemma. 

    \begin{lemma}
    \label{clm:swaponeedge}
        Suppose that $(G,\thr,\spt_\ini,\spt_\tar)$ is an instance of {\sc RST with Small Maximum Degree} 
        such that $\spt_\tar$ has maximum degree at most $\thr-1$.
        There exists an edge $e=xy \in E(T_\tar) \setminus E(T_\ini)$ such that 
        $d_{T_\ini}(x) \le d-1$ and $d_{T_\ini}(y) \le d-1$. 
    \end{lemma}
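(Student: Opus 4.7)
I will argue by contradiction. Assume that every edge of $E(T_\tar) \setminus E(T_\ini)$ has at least one endpoint in the set of ``saturated'' vertices $U := \{v \in V : d_{T_\ini}(v) = d\}$, and set $A := V \setminus U$. This assumption is equivalent to saying that every edge of $T_\tar$ with both endpoints in $A$ already lies in $T_\ini$; writing $E_X(T)$ for the set of edges of a tree $T$ both of whose endpoints belong to $X$, we have $E_A(T_\tar) \subseteq E_A(T_\ini)$. The degenerate case $T_\ini = T_\tar$ can be excluded since the lemma is applied only when a nontrivial transformation is being sought.

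The heart of the argument will be a pair of edge counts. Since $d_{T_\tar}(v) \le d-1$ for every vertex $v$, the number of edges of $T_\tar$ incident to $U$ is at most $(d-1)|U|$, and therefore
\[
|E_A(T_\tar)| \ge (n-1) - (d-1)|U|.
\]
On the other hand, double-counting the $d|U|$ incidences of $T_\ini$-edges at $U$ (edges inside $U$ being counted twice) gives the exact identity
\[
|E_A(T_\ini)| = (n-1) - d|U| + e_U,
\]
where $e_U := |E_U(T_\ini)|$ is the number of $T_\ini$-edges whose endpoints both lie in $U$.

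Combining these with the inclusion $E_A(T_\tar) \subseteq E_A(T_\ini)$ yields $e_U \ge |U|$. The closing step is the observation that $e_U$ counts the edges of the subforest of $T_\ini$ induced by $U$, which is a forest on $|U|$ vertices and therefore has at most $|U|-1$ edges whenever $|U| \ge 1$, contradicting $e_U \ge |U|$. If instead $|U| = 0$, every vertex has $T_\ini$-degree at most $d-1$, and any edge of the nonempty set $E(T_\tar) \setminus E(T_\ini)$ trivially satisfies the conclusion. The only real step to spot is the correct double-counting identity; once the problem is reformulated through the forest on $U$, the contradiction falls out in one line, so I do not expect a substantive obstacle.
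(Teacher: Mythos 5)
Your argument is correct, but it takes a genuinely different route from the paper's. The paper works with the two \emph{difference forests} $T^*_\ini := T_\ini - E(T_\tar)$ and $T^*_\tar := T_\tar - E(T_\ini)$: under the contradiction hypothesis, the set $S$ of saturated vertices ($d_{T_\ini}(v)=d$) touched by $T^*_\tar$ is a vertex cover of $T^*_\tar$, and the degree comparison $d_{T^*_\ini}(v) \ge d_{T^*_\tar}(v)+1$ on $S$, together with the fact that the forest $T^*_\ini$ has fewer edges than non-isolated vertices, forces $|E(T^*_\ini)| > |E(T^*_\tar)|$, contradicting $|E(T^*_\ini)| = |E(T^*_\tar)|$. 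You never form these difference forests: you count in each tree the edges avoiding the saturated set $U$, using $d_{T_\tar}(v)\le d-1$ to bound the $T_\tar$-incidences at $U$ by $(d-1)|U|$, the exact degree $d$ on $U$ in $T_\ini$ for the identity $(n-1)-d|U|+e_U$ with $e_U$ the number of $T_\ini$-edges inside $U$, and the inclusion of the $T_\tar$-edges avoiding $U$ into the corresponding $T_\ini$-edges to get $e_U \ge |U|$; this contradicts the fact that the subgraph of $T_\ini$ induced by $U$ is a forest, so $e_U \le |U|-1$. Both are short double-counting proofs exploiting the same slack (a saturated vertex has one more $T_\ini$-edge than it can carry $T_\tar$-edges), but the decomposition and the place where acyclicity enters differ: the paper uses acyclicity of $T^*_\ini$ via a vertex-cover inequality chain, while you use only acyclicity of $T_\ini$ restricted to $U$, which is arguably the more elementary of the two. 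Your explicit exclusion of the degenerate case $T_\ini=T_\tar$ is legitimate and in fact needed for the statement to have content (the paper's own strict inequality, asserting that the forest $T^*_\ini$ has fewer edges than non-isolated vertices, likewise silently assumes $T^*_\ini$ is nonempty), so that is not a gap on your side.
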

    \begin{proof}
        To derive a contradiction, assume that Lemma~\ref{clm:swaponeedge} does not hold, that is, 
        for any $e=xy \in E(T_\tar) \setminus E(T_\ini)$, we have  
        $d_{T_\ini}(x) = d$ or $d_{T_\ini}(y) = d$.
        Let $T^*_\ini := T_\ini -  E(T_\tar)$ and $T^*_\tar := T_\tar - E(T_\ini)$. 
        Note that $|E(T^*_\ini)| = |E(T^*_\tar)|$, because both $T_\ini$ and $T_\tar$ are spanning trees in $G$.  
        Let $S := \{ v \in V \mid d_{T_\ini} (v) =d,\ d_{T^*_\tar} (v) \ge 1 \}$. 
        With this notation, the assumption means that $S$ forms a vertex cover of $T^*_\tar$. 
        In what follows, we compare $|E(T^*_\ini)|$ and $|E(T^*_\tar)|$. 

        Let 
        $X_1 := \{v \in V \mid d_{T^*_\ini}(v)=1\}$ and 
        $X_{\ge 2} := \{v \in V \mid d_{T^*_\ini}(v)\ge 2\}$. 
        Then, we see that
        \begin{equation}
            \frac{1}{2} \sum_{v \in V} d_{T^*_\ini}(v) = |E(T^*_\ini)| < |X_1 \cup X_{\ge 2}|,     \label{eq:minusone01}
        \end{equation}
        because $T^*_\ini$ is a forest.  
        We also see that, for any $v \in S$, 
        \begin{equation}
            d_{T^*_\ini}(v) = d - |\delta_{T_\ini}(v) \cap \delta_{T_\tar}(v)|  
            \ge d_{T_\tar}(v) + 1 - |\delta_{T_\ini}(v) \cap \delta_{T_\tar}(v)| = d_{T^*_\tar}(v) + 1     \label{eq:minusone02}
        \end{equation}
        holds, and hence $S \subseteq X_{\ge 2}$. 
        With these observations, we obtain 
       \begin{align*}
            |E(T^*_\ini)| &= \sum_{v \in V} d_{T^*_\ini}(v) - \frac{1}{2} \sum_{v \in V} d_{T^*_\ini}(v) & & \\
                          &> \sum_{v \in V} d_{T^*_\ini}(v) - |X_1 \cup X_{\ge 2}|    & & \mbox{(by (\ref{eq:minusone01}))} \\
                          &= \sum_{v \in X_{\ge 2}} (d_{T^*_\ini}(v) - 1) & & \\
                          &\ge \sum_{v \in S} (d_{T^*_\ini}(v) - 1)       & & \mbox{(by $S \subseteq X_{\ge 2}$)} \\
                          &\ge \sum_{v \in S} d_{T^*_\tar}(v)             & & \mbox{(by (\ref{eq:minusone02}))} \\
                          &\ge |E(T^*_\tar)|,                             & & \mbox{(because $S$ is a vertex cover of $T^*_\tar$)}
        \end{align*}
        which is a contradiction to $|E(T^*_\ini)| = |E(T^*_\tar)|$.
        Therefore, Lemma~\ref{clm:swaponeedge} holds. 
    \end{proof}
    
        Let $e \in E(T_\tar) \setminus E(T_\ini)$ be the edge as in the lemma and let $e' \in  E(T_\ini) \setminus E(T_\tar)$ be an edge such that 
        $T'_\ini := T_\ini + e - e'$ is a spanning tree in $G$. 
        Note that the maximum degree of $T'_\ini$ is at most $\thr$. 
        Since $|E(T'_\ini) \setminus E(T_\tar)| = |E(T_\ini) \setminus E(T_\tar)|-1$, 
        $(G,\thr,\spt'_\ini,\spt_\tar)$ is a {\yes}-instance by induction. 
        This implies that $\spt_\ini$ is reconfigurable to $\spt_\tar$, and thus the following theorem holds.

	\begin{theorem}\label{thm:smd_d-1}
		If at least one of $\spt_\ini$ and $\spt_\tar$ has maximum degree at most $\thr-1$, then an instance $(G,\thr,\spt_\ini,\spt_\tar)$ of {\sc RST with Small Maximum Degree} is a {\yes}-instance.
	\end{theorem}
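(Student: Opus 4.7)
The plan is to prove the theorem by induction on $|E(T_\ini) \setminus E(T_\tar)|$, using Lemma~\ref{clm:swaponeedge} as the main tool to drive the induction. Without loss of generality I would assume that $T_\tar$ is the one satisfying the degree bound, i.e.\ has maximum degree at most $d-1$ (the case where only $T_\ini$ satisfies it is symmetric since reconfiguration is a symmetric relation).

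For the base case, when $|E(T_\ini) \setminus E(T_\tar)| = 0$, the two trees coincide and the empty sequence suffices. For the inductive step, assume the result for all smaller symmetric differences. Apply Lemma~\ref{clm:swaponeedge} to obtain an edge $e=xy \in E(T_\tar) \setminus E(T_\ini)$ such that $\dg{T_\ini}{x} \le d-1$ and $\dg{T_\ini}{y} \le d-1$. The graph $T_\ini + e$ contains a unique cycle $C$, and since $T_\tar$ is acyclic, $C$ must contain some edge $e' \in E(T_\ini) \setminus E(T_\tar)$. Then $T'_\ini := T_\ini + e - e'$ is a spanning tree adjacent to $T_\ini$.

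The key verification is that $T'_\ini$ has maximum degree at most $d$: adding $e=xy$ raises the degrees of $x$ and $y$ each by one, to at most $d$, while removing $e'$ can only decrease degrees, and no other vertex is affected. Hence $T'_\ini$ is a valid spanning tree for the problem. Moreover, since $e \in E(T_\tar)$ and $e' \notin E(T_\tar)$, we have $|E(T'_\ini) \setminus E(T_\tar)| = |E(T_\ini) \setminus E(T_\tar)| - 1$. By the induction hypothesis applied to $(G,d,T'_\ini,T_\tar)$ (noting $T_\tar$ still satisfies the degree bound), there is a reconfiguration sequence from $T'_\ini$ to $T_\tar$, which when prepended with the single flip $T_\ini \to T'_\ini$ yields the desired sequence from $T_\ini$ to $T_\tar$.

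The main obstacle in this argument is really already handled by Lemma~\ref{clm:swaponeedge}: namely, guaranteeing that one can always find a "safe" edge $e$ of $T_\tar$ to insert without violating the degree cap. Without that guarantee, a naive swap might push $x$ or $y$ to degree $d+1$, breaking the invariant. Once that lemma is in hand, the remaining work is just the routine exchange argument above, which is simple enough to be absorbed into the induction.
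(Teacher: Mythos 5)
Your proposal is correct and follows essentially the same route as the paper: after Lemma~\ref{clm:swaponeedge} supplies the ``safe'' edge $e=xy\in E(T_\tar)\setminus E(T_\ini)$ with both endpoints of degree at most $d-1$ in $T_\ini$, you perform the standard exchange along the unique cycle of $T_\ini+e$ and induct on $|E(T_\ini)\setminus E(T_\tar)|$, exactly as the paper does. Your write-up is in fact slightly more explicit than the paper's (base case, existence of $e'$, and the degree check), but there is no substantive difference.
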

        


Note that the above discussion shows that we can find a reconfiguration sequence 
$\langle \spt_\ini = \spt_0,\spt_1,\ldots,\spt_k=\spt_\tar \rangle$
with $k = |E(T_\ini) \setminus E(T_\tar)|$, which is a shortest reconfiguration sequence, in polynomial time. 
Moreover, \Cref{thm:smd_d-1} implies the following corollary. 

	\begin{corollary}
		Let $G$ be a graph and $d$ be a positive integer. 
		If $G$ contains a spanning tree with maximum degree at most $\thr-1$, 
		then any instance $(G,\thr,\spt_\ini,\spt_\tar)$ of {\sc RST with Small Maximum Degree} is 
		a {\yes}-instance. 
	\end{corollary}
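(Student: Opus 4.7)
The plan is to reduce the corollary to \Cref{thm:smd_d-1} by routing any reconfiguration through the hypothesized low-degree spanning tree as an intermediate solution. Concretely, let $T^*$ be a spanning tree of $G$ with maximum degree at most $d-1$, whose existence is assumed. By the problem specification, $T_\ini$ and $T_\tar$ both have maximum degree at most $d$, so $(G, d, T_\ini, T^*)$ and $(G, d, T^*, T_\tar)$ are both well-formed instances of \textsc{RST with Small Maximum Degree}.

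Next, I would apply \Cref{thm:smd_d-1} to each of these two instances. In the first, $T^*$ plays the role of a spanning tree with maximum degree at most $d-1$, so the hypothesis of \Cref{thm:smd_d-1} is satisfied and there exists a reconfiguration sequence $\langle T_\ini = T_0, T_1, \dots, T_p = T^* \rangle$ in which every spanning tree has maximum degree at most $d$. In exactly the same way, \Cref{thm:smd_d-1} applied to $(G, d, T^*, T_\tar)$ gives a reconfiguration sequence $\langle T^* = T_0', T_1', \dots, T_q' = T_\tar \rangle$ with the same property.

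Finally, I would concatenate these two sequences at the common endpoint $T^*$ to obtain the sequence $\langle T_\ini, T_1, \dots, T_{p-1}, T^*, T_1', \dots, T_\tar \rangle$, which is a valid reconfiguration sequence from $T_\ini$ to $T_\tar$ in which every spanning tree has maximum degree at most $d$. Hence $(G, d, T_\ini, T_\tar)$ is a \yes-instance. There is essentially no obstacle here; the only thing to check is that \Cref{thm:smd_d-1} is symmetric in the sense that it applies whenever \emph{either} endpoint has maximum degree at most $d-1$, which is exactly what its statement asserts (and follows immediately from the fact that reconfiguration sequences can be reversed).
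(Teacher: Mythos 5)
Your proposal is correct and matches the paper's own argument: both route the reconfiguration through the low-degree spanning tree $T^*$, apply Theorem~\ref{thm:smd_d-1} to each of the two instances $(G,d,T_\ini,T^*)$ and $(G,d,T^*,T_\tar)$, and concatenate the resulting sequences. No further remarks are needed.
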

	\begin{proof}
	    Let $T^*$ be a spanning tree in $G$ with maximum degree at most $\thr-1$. 
	    Then, Theorem~\ref{thm:smd_d-1} shows that $T_\ini$ is reconfigurable to $T^*$ and 
	    $T^*$ is reconfigurable to $T_\tar$. 
	    Hence, $T_\ini$ is reconfigurable to $T_\tar$, which completes the proof. 
	\end{proof}

We note that it is not easy to determine whether or not $G$ contains a spanning tree with maximum degree at most $\thr-1$ even when $d=3$, because 
finding a Hamiltonian path in cubic graphs is NP-hard. 


\section{Large Diameter (Proof of Theorem \ref{thm:ldiam_hard})}\label{sec:diam_large}
	
	
	In this section, we prove \cref{thm:ldiam_hard}, which we restate here. 
	
	\ldiam*
	
	To prove the theorem, we give a polynomial reduction from {\sc Hamiltonian Path} problem to {\sc RST with Large Diameter}.
	A {\em Hamiltonian path} of a graph $G$ is a path that visits each vertex of $G$ exactly once.
	Given a graph $G=(V, E)$ and two vertices $s,t \in V$, the {\sc Hamiltonian Path} problem asks to determine whether or not $G$ has a Hamiltonian path whose endpoints are $s$ and $t$, which is known to be NP-hard~\cite{Karp72}.
	

\medskip
\noindent
\textbf{Reduction. }
		Let $(G^\prime,s^\prime,t^\prime)$ be an instance of {\sc Hamiltonian Path}. 
		We may assume that $G'$ is connected, since otherwise $(G^\prime,s^\prime,t^\prime)$ is trivially a \no-instance.  
		We construct a corresponding instance $(G,\thr,\spt_\ini,\spt_\tar)$ of {\sc RST with Large Diameter} as follows
		(see \figurename~\ref{fig:reduction_hamipath}). 

		Let $n^\prime$ be the number of vertices in $G^\prime$, that is,  $n^\prime = |V(G^\prime)|$.
		We first add three vertices $t_1$, $t_2$, and $t_3$, and five edges ${t^\prime}{t_1}$, ${t^\prime}{t_2}$, ${t_1}{t_2}$, ${t_1}{t_3}$, and ${t_2}{t_3}$ to $G'$. 
		Let $D$ be the subgraph induced by $\{t', t_1, t_2, t_3\}$, which is isomorphic to the so-called diamond graph. 
		We then add three paths $P_x = (x_{3n^\prime}, x_{3n^\prime -1}, \ldots, x_1, s^\prime)$, $P_y = (s^\prime, y_1, y_2, \ldots, y_{n^\prime -3}, t^\prime )$, 
		and $P_z = ( t_3, z_1, z_2, \ldots, z_{3n^\prime} )$, where all the vertices in $P_x$, $P_y$, and $P_z$ except for $s^\prime$, $t^\prime$, and $t_3$ are distinct new vertices.
        Note that $|E(P_x)| = |E(P_z)| = 3n^\prime$ and $|E(P_y)| = n^\prime-2$.
		Let $G=(V, E)$ be the obtained graph and set $\thr = 7n^\prime + 1$. 

		Let $F^\prime$ be an arbitrary spanning forest in $G^\prime$ such that $F^\prime$ consists of two connected components (trees) 
		of which one contains $s^\prime$ and the other contains $t^\prime$.
		Then, define spanning trees $\spt_\ini$ and $\spt_\tar$ in $G$ by 
		\begin{align*}
		    & E(\spt_\ini) = E(P_x) \cup E(P_y) \cup E(P_z) \cup E(F^\prime) \cup \{ {t^\prime}{t_1}, {t_1}{t_2}, {t_2}{t_3} \},   \\
			& E(\spt_\tar) =  E(P_x) \cup E(P_y) \cup E(P_z) \cup E(F^\prime) \cup \{ {t^\prime}{t_2}, {t_1}{t_2}, {t_1}{t_3} \}.
		\end{align*}
		We notice that $E(\spt_\ini) \setminus E(F^\prime)$ forms a path in $\spt_\ini$ of length $\thr = 7n^\prime +1$, and hence $\diam{\spt_\ini} \geq \thr$.
		Similarly, $E(\spt_\tar) \setminus E(F^\prime)$ forms a path in $\spt_\tar$ of length $\thr$, and hence $\diam{\spt_\tar} \geq \thr$.
		This completes the construction of the instance $(G,\thr,\spt_\ini,\spt_\tar)$ of {\sc RST with Large Diameter}.
	

\medskip
\noindent
\textbf{Correctness. }
        In the following, we show that $G'$ contains a Hamiltonian path from $s'$ to $t'$ if and only if $(G,\thr,\spt_\ini,\spt_\tar)$ is a \yes-instance. 
        The following lemma shows that the diameter of a spanning tree is dominated by the distance between $x_{3n^\prime}$ and $z_{3n^\prime}$. 
		\begin{lemma}
		\label{ld_diam}
			For any spanning tree $\spt$ in $G$, 
			$\diam{T} = \distorig{x_{3n^\prime}}{z_{3n^\prime}}{T}$. 
		\end{lemma}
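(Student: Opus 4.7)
The plan is to exploit the forced structure of any spanning tree $T$ of $G$ and show that the farthest vertex in $T$ from $x_{3n^\prime}$ is $z_{3n^\prime}$; a standard tree fact will then conclude.

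\medskip
\noindent
\textbf{Forced structure.} In $G$, the vertices $x_{3n^\prime}$ and $z_{3n^\prime}$ have degree $1$ and each internal vertex of $P_x$, $P_y$, and $P_z$ has degree $2$. Hence $P_x \cup P_y \cup P_z \subseteq T$, making $x_{3n^\prime}$ and $z_{3n^\prime}$ leaves of $T$. Moreover, the edges of $T$ inside the diamond $D$ must form a spanning tree of $D$, containing a unique $t^\prime$-$t_3$ path of length $2$ or $3$; and $T \cap G^\prime$ is a spanning forest of $G^\prime$ with exactly two components, one containing $s^\prime$ and the other containing $t^\prime$ (a component avoiding $\{s^\prime,t^\prime\}$ would be disconnected from the rest of $T$, and a single component spanning both would close a cycle with $P_y$). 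Consequently, the unique $x_{3n^\prime}$-$z_{3n^\prime}$ path in $T$ uses all of $P_x$, $P_y$, $P_z$ together with a $t^\prime$-$t_3$ path inside $D$, yielding $\distorig{x_{3n^\prime}}{z_{3n^\prime}}{T} \ge 3n^\prime+(n^\prime-2)+2+3n^\prime = 7n^\prime$.

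\medskip
\noindent
\textbf{Farthest vertex from $x_{3n^\prime}$.} Since $x_{3n^\prime}$ is a leaf and $P_x\subseteq T$, for every $v\notin V(P_x)$ we have $\distorig{x_{3n^\prime}}{v}{T}=3n^\prime+\distorig{s^\prime}{v}{T}$, so the task reduces to finding the vertex $v$ maximizing $\distorig{s^\prime}{v}{T}$. Using the forced structure, any $v$ outside $V(P_z)$ satisfies $\distorig{s^\prime}{v}{T}\le 2n^\prime$ (summing contributions from $P_y$, the two trees of $T\cap G^\prime$, and $D$ along the $s^\prime$-$v$ path in $T$). On the other hand $\distorig{s^\prime}{z_{3n^\prime}}{T}\ge (n^\prime-2)+2+3n^\prime = 4n^\prime$, and along $P_z$ the distance from $s^\prime$ is monotonically increasing, so the maximum is attained at $v=z_{3n^\prime}$.

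\medskip
\noindent
\textbf{Conclusion.} A standard property of trees states that the farthest vertex from any chosen vertex is an endpoint of some diameter. Applied to $v_0 = x_{3n^\prime}$, this shows that $z_{3n^\prime}$ is a diameter endpoint of $T$; a symmetric argument (swapping the roles of $(P_x, s^\prime)$ and $(P_z, t_3)$) identifies $x_{3n^\prime}$ as the farthest vertex from $z_{3n^\prime}$, so $\diam{T}=\distorig{x_{3n^\prime}}{z_{3n^\prime}}{T}$. The main obstacle is the middle step: one must verify that the $3n^\prime$-length padding on each of $P_x$ and $P_z$ dominates every other routing inside $T$, which is precisely why the reduction chose $|P_x|=|P_z|=3n^\prime$ rather than a shorter length.
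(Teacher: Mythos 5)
Your ``Forced structure'' step contains claims that are false for general spanning trees of $G$, and the lemma is stated for \emph{every} spanning tree. Only $P_x$ and $P_z$ are forced into $T$: they are pendant paths whose far endpoints are leaves of $G$, so omitting any of their edges disconnects the graph. The path $P_y$ is \emph{not} forced, because both of its endpoints $s^\prime$ and $t^\prime$ have other connections (through $G^\prime$ and through $D$); a spanning tree may drop an edge of $P_y$ and connect the two halves through $G^\prime$ instead. Indeed the reduction itself uses such trees (the trees $\spt_2,\spt_3,\spt_4$ in the ``only-if'' direction contain $P_y - e_y$ together with a Hamiltonian path of $G^\prime$). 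Consequently your claims that $T \cap G^\prime$ has exactly two components and that $\distorig{x_{3n^\prime}}{z_{3n^\prime}}{T} \ge 7n^\prime$ are both wrong: if $T$ omits an edge of $P_y$ and contains a spanning tree of $G^\prime$ in which $s^\prime$ and $t^\prime$ are at distance $1$, then $\distorig{x_{3n^\prime}}{z_{3n^\prime}}{T}$ can be as small as about $6n^\prime+3$. The same error propagates to the bound $\distorig{s^\prime}{z_{3n^\prime}}{T} \ge (n^\prime-2)+2+3n^\prime$ in your second paragraph, which again presupposes that the $s^\prime$--$t^\prime$ route goes through all of $P_y$.

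The damage is repairable, because your argument only needs $\distorig{s^\prime}{z_{3n^\prime}}{T}$ to exceed the distance from $s^\prime$ to any vertex off $P_z$: the latter is at most $2n^\prime-1$ by counting the $2n^\prime$ vertices of $G$ outside $V(P_x)\cup V(P_z)\setminus\{s^\prime,t_3,\dots\}$ (your vertex-count style bound is fine without the two-component claim), while the former is at least $3n^\prime+3$ using only the forced $P_z$ plus the fact that any $s^\prime$--$t_3$ path has length at least $3$. With that correction, the double-sweep argument (farthest vertex from a leaf is a diameter endpoint, applied from $x_{3n^\prime}$ and then from $z_{3n^\prime}$) does yield the lemma, and it is a genuinely different route from the paper's: the paper instead takes a longest path $P^*$, notes $|E(P^*)|\ge 6n^\prime$ because the forced $x_{3n^\prime}$--$z_{3n^\prime}$ path contains $P_x$ and $P_z$, and argues by counting that $P^*$ must meet both $\{x_1,\dots,x_{3n^\prime}\}$ and $\{z_1,\dots,z_{3n^\prime}\}$, hence is a subpath of, and therefore equal to, $\unip{\spt}{x_{3n^\prime}}{z_{3n^\prime}}$. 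As written, though, your proof rests on structural assertions that the intended instances of the reduction themselves violate, so it does not stand without the fix above.
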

		\begin{proof}
			Let $\spt$ be a spanning tree in $G$ and $P^*$ be a longest path in $\spt$.
            For $x, y \in V$ and for a spanning tree $\spt$ in $G$, we denote by $\unip{\spt}{x}{y}$ the unique path between $x$ and $y$ in $\spt$.
			Since $\unip{\spt}{x_{3n^\prime}}{z_{3n^\prime}}$ contains all the edges in $P_x$ and $P_z$, the length of $\unip{\spt}{x_{3n^\prime}}{z_{3n^\prime}}$ is at least $6n^\prime$, and hence $|E(P^*)| \ge |E(\unip{\spt}{x_{3n^\prime}}{z_{3n^\prime}})| \ge  6n^\prime$. Since each of $G-\{x_1, \dots , x_{3n'}\}$ and $G-\{z_1, \dots , z_{3n'}\}$ contains at most $5n'$ vertices, we obtain $V(P^*) \cap \{x_1, \dots , x_{3n'}\} \not= \emptyset$ and $V(P^*) \cap \{z_1, \dots , z_{3n'}\} \not= \emptyset$. 
            This shows that $P^* = \unip{\spt}{x_{i}}{z_{j}}$ for some $i, j \in \{1,2, \dots , 3n'\}$. 
            Since $\unip{\spt}{x_{i}}{z_{j}}$ is a subpath of $\unip{\spt}{x_{3n^\prime}}{z_{3n^\prime}}$,  
            $P^*$ must be equal to $\unip{\spt}{x_{3n^\prime}}{z_{3n^\prime}}$, that is, $\diam{T} = \distorig{x_{3n^\prime}}{z_{3n^\prime}}{T}$. 
		\end{proof}
		\begin{figure}[t]
			\centering
			\includegraphics[width=0.5\linewidth]{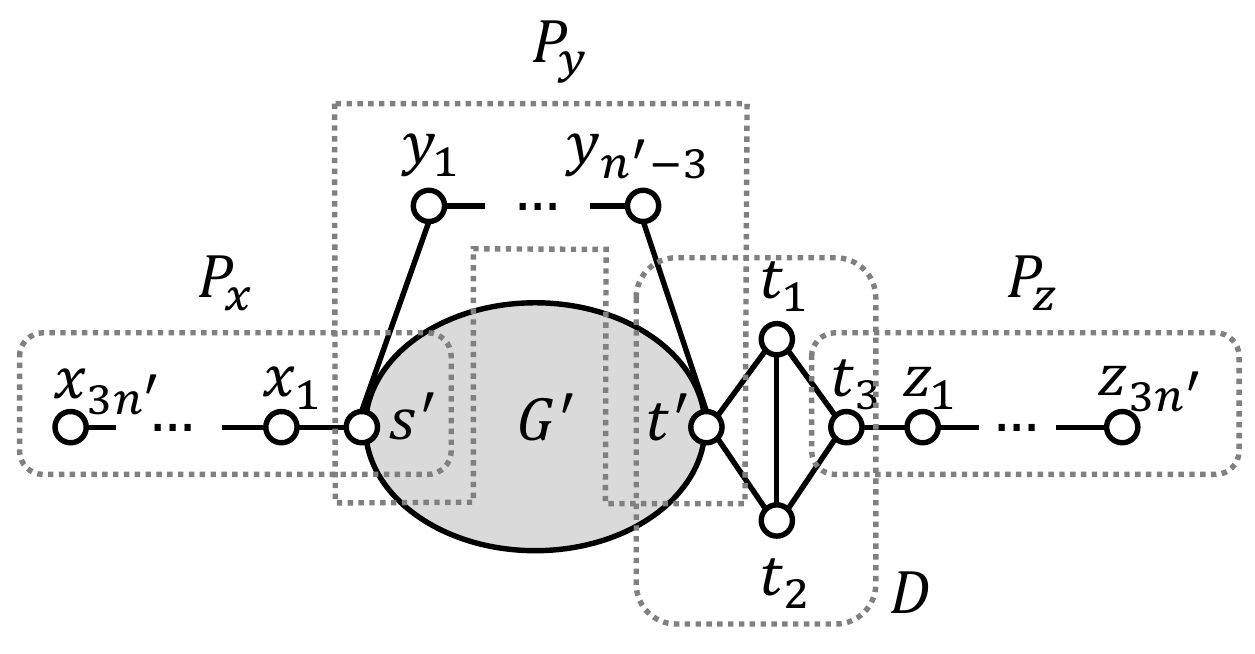}
			\caption{The graph $G$ in the corresponding instance.} 
			\label{fig:reduction_hamipath}
		\end{figure}
    	Thus, intuitively speaking, to keep the diameter and modify a spanning tree in $D$, we need to replace $P_y$ with a slightly longer path in $G^\prime$. 
    	Moreover, such a path must be a Hamiltonian path of $G^\prime$. This observation yields the following lemma and completes the proof of~\cref{thm:ldiam_hard}. 
		\begin{lemma}
			$(G^\prime,s^\prime,t^\prime)$ is a {\yes}-instance of {\sc Hamiltonian Path} if and only if $(G,\thr,\spt_\ini,\spt_\tar)$ is a {\yes}-instance of {\sc RST with Large Diameter}.
		\end{lemma}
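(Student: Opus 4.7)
The plan is to analyze the structure of any spanning tree $T$ of $G$ carefully and then prove both directions. By \cref{ld_diam}, $\diam{T} = \distorig{x_{3n^\prime}}{z_{3n^\prime}}{T}$ for every spanning tree $T$ of $G$. Since internal vertices of $P_x$ and $P_z$ have degree $2$ in $G$ and $x_{3n^\prime}, z_{3n^\prime}$ have degree $1$, every spanning tree contains $E(P_x)\cup E(P_z)$; similarly $t_1, t_2$ have no neighbour outside $D$ and the only outside neighbour of $t_3$ is $z_1$, so $T\cap D$ is always a spanning tree of the diamond. Hence the $x_{3n^\prime}$-$z_{3n^\prime}$ path in $T$ decomposes uniquely as $P_x$, the $s^\prime$-$t^\prime$ path in $T$, the $t^\prime$-$t_3$ path in $T\cap D$, and $P_z$, giving
\[
\diam{T} \;=\; 6n^\prime + \distorig{s^\prime}{t^\prime}{T} + \distorig{t^\prime}{t_3}{T\cap D}.
\]
A direct enumeration of the eight spanning trees of $D$ shows that only $T_\ini\cap D$ and $T_\tar\cap D$ realise $t^\prime$-$t_3$ distance $3$; the remaining six, including all four single-flip neighbours of $T_\ini\cap D$ inside $D$, realise distance $2$. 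Since every internal vertex of $P_y$ has degree $2$ in $G$ with neighbours only along $P_y$, the $s^\prime$-$t^\prime$ path in $T$ either traverses all of $P_y$ (length $n^\prime-2$) or avoids the interior of $P_y$ entirely, in which case it is a path in $G^\prime$ of length at most $n^\prime-1$, with equality if and only if it is a Hamiltonian $s^\prime$-$t^\prime$ path. Consequently $\distorig{s^\prime}{t^\prime}{T} \ge n^\prime-1$ holds if and only if $T$ contains a Hamiltonian $s^\prime$-$t^\prime$ path of $G^\prime$.

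For the ($\Leftarrow$) direction, suppose a reconfiguration sequence $\langle T_\ini = T_0, T_1, \ldots, T_\ell = T_\tar\rangle$ exists with every $\diam{T_i} \ge d = 7n^\prime + 1$. Since $T_\ini\cap D \ne T_\tar\cap D$, some edge flip alters the diamond part; let $T_{i+1}$ be the tree just after the first such flip. Then $\distorig{t^\prime}{t_3}{T_{i+1}\cap D} = 2$ by the enumeration above, so the diameter lower bound forces $\distorig{s^\prime}{t^\prime}{T_{i+1}} \ge n^\prime - 1$, which exhibits a Hamiltonian $s^\prime$-$t^\prime$ path of $G^\prime$ inside $T_{i+1}$.

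For the ($\Rightarrow$) direction, given a Hamiltonian $s^\prime$-$t^\prime$ path $Q$ of $G^\prime$, define the spanning tree $T^*$ with $E(T^*) = E(P_x)\cup E(P_z)\cup (T_\ini\cap D)\cup E(Q)\cup (E(P_y)\setminus\{s^\prime y_1\})$ and the analogous $T^{**}$ using $T_\tar\cap D$ in place of $T_\ini\cap D$; both have diameter $7n^\prime+2$. I build the reconfiguration in three phases: $T_\ini \to T^* \to T^{**} \to T_\tar$. The middle phase performs two flips inside $D$ through a spanning tree with $t^\prime$-$t_3$ distance $2$; since both $T^*$ and $T^{**}$ use $Q$, the $s^\prime$-$t^\prime$ path stays equal to $Q$ of length $n^\prime - 1$ and the minimum diameter attained is exactly $6n^\prime + (n^\prime-1) + 2 = d$. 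For $T_\ini \to T^*$, I order the flips so that $s^\prime y_1$ is removed last: while some $f \in (E(F^\prime)\setminus E(Q))\cap E(T)$ remains, the base exchange property of the graphic matroid provides $e \in E(Q)\setminus E(F^\prime)$ with $T - f + e$ a spanning tree, and we perform that flip. Throughout this phase $P_y \subseteq T$, so $T\cap G^\prime$ is a forest with $s^\prime$ and $t^\prime$ in different components (otherwise $T$ would contain a cycle through $P_y$), the $s^\prime$-$t^\prime$ path in $T$ is exactly $P_y$, and $\diam{T} = d$. After $|E(F^\prime)\setminus E(Q)|$ swaps exactly one $Q$-edge $e^*$ remains outside $T$; the fundamental cycle of $e^*$ in $T$ traverses $E(Q)\setminus\{e^*\}$ together with all of $P_y$, so it contains $s^\prime y_1$, and the flip exchanging $s^\prime y_1$ for $e^*$ is valid and produces $T^*$. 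The phase $T^{**} \to T_\tar$ is symmetric. The main technical obstacle is this careful ordering: keeping $s^\prime y_1$ (and thus the full $P_y$) in $T$ throughout the first phase is what holds $\distorig{s^\prime}{t^\prime}{T}$ at $n^\prime - 2$ and the diameter at exactly $d$, and the base-exchange argument above is what justifies that the required swaps always exist.
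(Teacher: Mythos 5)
Your proof is correct and follows essentially the same route as the paper: your forward direction is the paper's argument (the first flip touching the diamond $D$ forces $\distorig{t^\prime}{t_3}{T}=2$, and then Lemma~\ref{ld_diam} plus the decomposition of the $x_{3n^\prime}$-$z_{3n^\prime}$ path forces $\distorig{s^\prime}{t^\prime}{T}\ge n^\prime-1$, hence a Hamiltonian path in $G^\prime$), and your backward direction passes through the same intermediate trees as the paper's $\spt_1,\ldots,\spt_5$. The only deviation is that where the paper invokes Lemma~\ref{lem:known} to transform $\spt_\ini$ into a tree containing $Q-e^*$ while retaining the common $x_{3n^\prime}$-$z_{3n^\prime}$ path of length $7n^\prime+1$ (so the diameter bound is automatic), you re-derive this segment by an explicit base-exchange argument ordered so that $s^\prime y_1$ is removed last, which is a valid, slightly more hands-on substitute.
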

		\begin{proof}
			We first prove the ``if'' direction.
			Suppose that $(G,\thr,\spt_\ini,\spt_\tar)$ is a {\yes}-instance.
			Then there is a reconfiguration sequence $\langle \spt_\ini = \spt_0,\spt_1,\ldots,\spt_k=\spt_\tar \rangle$ between $\spt_\ini$ and $\spt_\tar$ in which all the spanning trees have diameter at least $d$.
			Let $\spt_i$ be the first spanning tree in the sequence such that $\spt_i$ is obtained from $\spt_{i-1}$ by exchanging an edge in $D$, that is, $E(\spt_j) \cap E(D) = \{ {t^\prime}{t_1}, {t_1}{t_2}, {t_2}{t_3} \}$ for all $j \in \{ 0,1,\ldots,i-1 \}$ and $E(\spt_i) \cap E(D) \not= \{ {t^\prime}{t_1}, {t_1}{t_2}, {t_2}{t_3} \}$. 
			Note that such $i$ exists, because $E(\spt_k) \cap E(D) \not= \{ {t^\prime}{t_1}, {t_1}{t_2}, {t_2}{t_3} \}$. 
			Note also that $\distorig{t'}{t_3}{T_i}=2$ by the definition of $T_i$.
			Then, by Lemma~\ref{ld_diam}, we obtain 
			\begin{align*}
            7n^\prime +1 &\le \diam{T_i} \\
			             &= \distorig{x_{3n^\prime}}{z_{3n^\prime}}{T_i} \\
			             &=	\distorig{x_{3n^\prime}}{s^\prime}{T_i} + \distorig{s'}{t'}{T_i} + \distorig{t'}{t_3}{T_i} + \distorig{t_3}{z_{3n^\prime}}{T_i} \\
			             &= 3n' + \distorig{s'}{t'}{T_i} + 2 + 3n',
			\end{align*}
			and hence $\distorig{s'}{t'}{T_i} \ge n'-1$. 
			Since $P_y$ contains only $n^\prime -2$ edges, all the edges in $\unip{\spt_i}{s^\prime}{t^\prime}$ are contained in $G^\prime$.
			We thus conclude that $\unip{\spt_i}{s^\prime}{t^\prime}$ is a Hamiltonian path between $s^\prime$ and $t^\prime$ in $G^\prime$, and hence the ``if'' direction follows.
			
			We now prove the ``only-if'' direction.
			Suppose that $(G^\prime,s^\prime,t^\prime)$ is a {\yes}-instance, that is, $G^\prime$ contains a Hamiltonian path $P^*$ between $s^\prime$ and $t^\prime$.
			Let $e^*$ be any edge in $P^*$ and $e_y$ be any edge in $P_y$.
			We define five spanning trees $\spt_1$, $\spt_2$, $\spt_3$, $\spt_4$, and $\spt_5$ in $G$ as follows:
            \begin{align*}
                &E(\spt_1) = E(P_x) \cup E(P_y) \cup E(P_z) \cup E(P^* - e^*)  \cup \{ {t^\prime}{t_1}, {t_1}{t_2}, {t_2}{t_3} \},  \\				
				&E(\spt_2) = E(P_x) \cup E(P_y - e_y) \cup E(P_z) \cup E(P^*) \cup \{ {t^\prime}{t_1}, {t_1}{t_2}, {t_2}{t_3} \}, \\
				&E(\spt_3) = E(P_x) \cup E(P_y - e_y) \cup E(P_z) \cup E(P^*) \cup \{ {t^\prime}{t_2}, {t_1}{t_2}, {t_2}{t_3} \}, \\
				&E(\spt_4) = E(P_x) \cup E(P_y - e_y) \cup E(P_z) \cup E(P^*) \cup \{ {t^\prime}{t_2}, {t_1}{t_2}, {t_1}{t_3} \}, \\
				&E(\spt_5) = E(P_x) \cup E(P_y) \cup E(P_z) \cup E(P^* - e^*) \cup \{ {t^\prime}{t_2}, {t_1}{t_2}, {t_1}{t_3} \}.
            \end{align*}
			We observe that $\langle \spt_1, \spt_2, \spt_3, \spt_4, \spt_5 \rangle$ is a reconfiguration sequence from $\spt_1$ and $\spt_5$ 
			in which all the spanning trees have diameter at least $d=7n'+1$. 
			Thus, in order to show that $\spt_\ini$ is reconfigurable to $\spt_\tar$, it suffices to show that 
			$\spt_\ini$ is reconfigurable to $\spt_1$ and $\spt_5$ is reconfigurable to $\spt_\tar$. 
			Since $\unip{\spt_\ini}{x_{3n^\prime}}{z_{3n^\prime}} = \unip{\spt_1}{x_{3n^\prime}}{z_{3n^\prime}}$, 
			\cref{lem:known} shows that there is a reconfiguration sequence from $\spt_\ini$ to $\spt_1$ in which all the spanning trees contain $E(\unip{\spt_\ini}{x_{3n^\prime}}{z_{3n^\prime}}) \subseteq E(\spt_\ini) \cap E(\spt_1)$.
			Therefore, every spanning tree in the sequence has diameter at least $\thr$, and hence $\spt_\ini$ is reconfigurable to $\spt_1$. 
			Similarly, $\spt_5$ is reconfigurable to $\spt_\tar$. 
			By combining them, we have that $\spt_\ini$ is reconfigurable to $\spt_\tar$, which completes the proof of the ``only-if'' direction. 
		\end{proof}


\section{Small Diameter (Proof of Theorem \ref{thm:sdiam_P})}\label{sec:small_diam}

In this section, we prove \cref{thm:sdiam_P}, which we restate here. 

\sdiam*


After giving some preliminaries for the proof in Section~\ref{sec:061}, 
we describe a naive algorithm for the problem in Section~\ref{sec:062}, 
which does not necessarily run in polynomial time. 
Then, by modifying it, we give a polynomial-time algorithm in Section~\ref{sec:063}. 
Some technical parts in the proof are deferred to Sections~\ref{sec:findgoodtriple} and \ref{sec:goodsequence}.

\subsection{Preliminaries for the Proof}
\label{sec:061}

Throughout the proof of~\cref{thm:sdiam_P},  
we fix a positive integer $d$. 
For each edge $e \in E$, we denote the middle point of $e$ by $p_e$. 
We denote $R(H) := \{p_e \mid e \in E(H)\}$ for a subgraph $H$ of $G$ and let $R:=R(G)$. 
Let $\hat G$ be the graph on $V \cup R$ that is obtained from $G$ by subdividing each edge. 
Then, since $\distorig{u}{v}{G} = \frac{1}{2} \distorig{u}{v}{\hat G}$ for $u, v \in V$, 
we can naturally extend the domain of the distance to $V \cup R$ by setting
$\distorig{u}{v}{G} := \frac{1}{2} \distorig{u}{v}{\hat G}$ for $u, v \in V \cup R$. 
We also define $\ecc{v}{G} := \max\inset{ \distorig{v}{u}{G}}{u \in V}$ for $v \in R$.
If no confusion may arise, for $u, v \in V \cup R$, 
a $u$-$v$ path in $\hat G$ is sometimes called a $u$-$v$ path in $G$. 
We can see that spanning trees with diameter at most $d$ are characterized as follows (see also~\cite{10.1016/0020-0190(94)00183-Y}). 

\begin{lemma}
\label{lem:001}
     For any spanning tree $T$ in $G=(V,E)$, 
    $\diam{T} \le d$ if and only if there exists $r \in V \cup R(T)$ such that 
    $\ecc{r}{T} \le \frac{d}{2}$.    
\end{lemma}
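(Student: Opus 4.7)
The plan is to establish the standard Jordan-type ``center of a tree'' characterization. The easy direction follows from the triangle inequality, while the harder direction is to exhibit a suitable center $r$, for which I would pick the midpoint of a diameter path of $T$ (a vertex if the path has even length, an edge midpoint if odd) and bound its eccentricity via a short case analysis.

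For the $(\Leftarrow)$ direction, suppose $r \in V \cup R(T)$ satisfies $\ecc{r}{T} \le d/2$. For any $u, v \in V$, since $\hat T$ is a tree (and $r$ lies in $\hat T$ whether $r \in V$ or $r \in R(T)$), the path from $u$ to $v$ in $\hat T$ gives $\distorig{u}{v}{T} \le \distorig{u}{r}{T} + \distorig{r}{v}{T} \le d/2 + d/2 = d$. Taking the maximum over $u,v$ yields $\diam{T} \le d$.

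For the $(\Rightarrow)$ direction, fix a longest path $P = v_0 v_1 \cdots v_k$ in $T$, so $k = \diam{T} \le d$ (when $k$ is interpreted as the number of edges). Define $r$ to be the point of $P$ at distance exactly $k/2$ from $v_0$; this is a vertex of $P$ when $k$ is even, and the midpoint of the edge $v_{(k-1)/2} v_{(k+1)/2} \in E(T)$ when $k$ is odd, so in both cases $r \in V \cup R(T)$. I claim $\ecc{r}{T} \le k/2 \le d/2$. For any $u \in V$, let $v_i$ be the unique vertex of $P$ on the $u$--$r$ path in $T$ (the ``entry point''). Because $P$ is a longest path, both $\distorig{u}{v_0}{T} = \distorig{u}{v_i}{T} + i$ and $\distorig{u}{v_k}{T} = \distorig{u}{v_i}{T} + (k-i)$ are at most $k$, hence $\distorig{u}{v_i}{T} \le \min(i, k-i)$. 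Combining with $\distorig{v_i}{r}{T} = |i - k/2|$ yields
\[
\distorig{u}{r}{T} \;=\; \distorig{u}{v_i}{T} + \distorig{v_i}{r}{T} \;\le\; \min(i, k-i) + \bigl|i - \tfrac{k}{2}\bigr| \;=\; \tfrac{k}{2},
\]
where the last equality is verified by splitting into the two cases $i \le k/2$ and $i \ge k/2$.

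There is no real obstacle here — the argument is standard — but I would be mildly careful about two bookkeeping points: first, that the ``midpoint'' $r$ really lies in $V \cup R(T)$ (needed for the conclusion to be stated in terms of $R(T)$ rather than $R$, which uses $P \subseteq T$); second, that the entry-point argument uses only tree distances in $T$ (not in $\hat G$ or in $G$), so the inequalities $\distorig{u}{v_0}{T}, \distorig{u}{v_k}{T} \le k$ are genuinely consequences of $P$ being a longest path in $T$ and not merely in $G$.
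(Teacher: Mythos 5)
Your proof is correct and follows essentially the same route as the paper's: the triangle inequality for the ``if'' direction, and for the ``only-if'' direction taking $r$ to be the midpoint of a diametral path of $T$ (a vertex or an edge midpoint, hence in $V \cup R(T)$) and bounding $\ecc{r}{T}$ via the entry point of each vertex onto that path. The paper packages the same entry-point computation as the single identity $\frac{d^*}{2} + \distorig{r}{x}{T} = \max\{\distorig{u}{x}{T},\ \distorig{v}{x}{T}\} \le d^*$, so there is no substantive difference.
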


\begin{proof}
    To show the ``if'' part, suppose that there exists $r \in V \cup R(T)$ such that $\ecc{r}{T} \le \frac{d}{2}$.   
    Then, for any $u, v \in V$, $\distorig{u}{v}{T} \le \distorig{u}{r}{T} + \distorig{r}{v}{T} \le 2 \ecc{r}{T} \le d$, 
    which shows that $\diam{T} \le d$. 

    To show the ``only-if'' part, suppose that $\diam{T} \le d$. 
    Let $d^* := \diam{T}$ and let $u, v \in V$ be a pair of vertices such that $\distorig{u}{v}{T} = d^*$. 
    Let $r \in V \cup R(T)$ be the middle point of $u$ and $v$ in $T$, that is, $\distorig{u}{r}{T}=\distorig{r}{v}{T}= \frac{d^*}{2}$. 
    Since $T$ is a spanning tree, for any $x \in V$, $\frac{d^*}{2} + \distorig{r}{x}{T} = \max\{\distorig{u}{x}{T},\ \distorig{v}{x}{T} \} \le d^*$. 
    This shows that $\distorig{r}{x}{T} \le \frac{d^*}{2}$, that is, $\ecc{r}{T} \le \frac{d}{2}$.    
\end{proof}


We say that a subgraph $Q$ of $G$ is a \emph{spanning pseudotree} if it is a connected spanning subgraph containing at most one cycle. 
In other words, a spanning pseudotree is obtained from a spanning tree by adding at most one edge.  
For brevity, a spanning pseudotree is simply called a \emph{pseudotree}.
For a pseudotree $Q$, let $\UniC{Q}$ denote the unique cycle in $Q$ if it exists.
We can easily see that, 
for two spanning trees $T_1$ and $T_2$ with diameter at most $d$, 
$T_1 \leftrightarrow T_2$ if and only if $T_1 \cup T_2$ forms a pseudotree. 
For a pseudotree $Q$, 
we refer a point $r \in V \cup R(Q)$ as a \emph{center point} of $Q$ if $\ecc{r}{Q} \le \frac{d}{2}$.
Note that a center point is not necessarily unique even if $Q$ is a spanning tree. 
For a pseudotree $Q$, let $\inner{Q} \subseteq V \cup R(Q)$ be the set of all center points of $Q$.

\subsection{Algorithm Using Center Points: First Attempt}
\label{sec:062}

In this subsection, as a first step, we give an algorithm for \GSTRwithsmallDaim \ whose running time is not necessarily polynomial. 
In the same say as {\sc RST with Large Maximum Degree} (Section~\ref{sec:lmd}), 
the proposed algorithm is based on testing the reachability in an auxiliary graph $\recgraph$, which is defined as follows.
The vertex set of $\recgraph$ is defined as $V \cup R$, 
where each vertex $v$ in $V(\recgraph)$ corresponds to the set of all the spanning trees containing $v$ as a center point.
For any pair $u, v$ of distinct vertices in $V(\recgraph)$,
there is an edge $uv \in E(\recgraph)$ if and only if there is a pseudotree $Q$ with $u, v \in \inner{Q}$. 
As we will see in \cref{prop:correctness1} later, for two spanning trees $T_u$ and $T_v$ having center points $u$ and $v$, respectively, 
$\recgraph$ contains a $u$-$v$ path if and only if $T_u$ and $T_v$ are reconfigurable to each other. 
Thus, to solve \GSTRwithsmallDaim, it is enough to find a path from a center point of $T_\ini$ to a center point of $T_\tar$ on $\recgraph$. 
See Algorithm~\ref{alg:01} for a pseudocode of our algorithm. 


\begin{algorithm}[t]
    \KwInput{A graph $G$ and two spanning trees $T_\ini$ and $T_\tar$ in $G$ with diameter at most $d$. }
    \KwOutput{Is $T_\ini$ reconfigurable to $T_\tar$?}
    Compute  $\inner{T_\ini}$ and $\inner{T_\tar}$, and construct $\recgraph$\; 
    \lIf{there is a path between $\inner{T_\ini}$ and $\inner{T_\tar}$ in $\recgraph$}{\Return{\yes}}
    \lElse{\Return{\no}}
    \caption{First algorithm for \GSTRwithsmallDaim}\label{alg:01}
\end{algorithm}

To show the correctness of Algorithm~\ref{alg:01}, we begin with easy but important lemmas. 

\begin{lemma}
      \label{lem:share:inner:vertex}
      Let $T_1$ and $T_2$ be spanning trees in $G$ with diameter at most $d$.
      If there exists a point $r \in \inner{T_1} \cap \inner{T_2}$, then $T_1$ is reconfigurable to $T_2$. 
\end{lemma}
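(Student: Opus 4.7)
My plan is to transform both $T_1$ and $T_2$ into a single common target tree via edge flips that keep $r$ as a center point; the target will be a BFS tree of $G$ from $r$. Fix any shortest-path spanning tree $T^*$ of $G$ rooted at $r$ (for $r \in R$, meaning $r$ is the midpoint of some edge $e^*=ab$, insist that $e^*\in E(T^*)$, which is possible since $e^*\in E(T_1)\cap E(T_2)$ by $r\in\inner{T_1}\cap\inner{T_2}$). Since $\distorig{r}{v}{T^*}=\distorig{r}{v}{G}$ for every $v\in V$, we have $\ecc{r}{T^*}=\ecc{r}{G}\le \ecc{r}{T_1}\le d/2$, so $r\in\inner{T^*}$. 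It then suffices to reconfigure $T_1$ to $T^*$ through trees $T$ with $r\in\inner{T}$; the same argument applied to $T_2$ finishes the proof.

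Set $\delta(v):=\distorig{r}{v}{G}$ and let $p^*(v)$ denote the parent of $v\ne r$ in $T^*$ rooted at $r$. Process the vertices $v\ne r$ in non-decreasing order of $\delta(v)$, maintaining the invariant that, after all vertices of depth at most $k$ have been processed, the current tree $T$ contains the edge $(v,p^*(v))$ for every $v$ with $\delta(v)\le k$, and still satisfies $\ecc{r}{T}\le d/2$. To handle the next vertex $v$ with $\delta(v)=k+1$ when $(v,p^*(v))\notin E(T)$: by the invariant applied to $p^*(v)$, we have $\distorig{r}{p^*(v)}{T}=\delta(v)-1<\delta(v)\le \distorig{r}{v}{T}$, so $p^*(v)$ does not lie in the $T$-subtree rooted at $v$. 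Hence the unique cycle in $T+(v,p^*(v))$ enters $v$ through its parent edge $(v,p_T(v))$, which may safely be removed to obtain a spanning tree $T'$ adjacent to $T$ that contains $(v,p^*(v))$.

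It remains to verify $\ecc{r}{T'}\le d/2$. For vertices $w$ outside the $T$-subtree rooted at $v$, the $r$-$w$ path in $T'$ agrees with that in $T$, so $\distorig{r}{w}{T'}=\distorig{r}{w}{T}\le d/2$. For a vertex $w$ in that subtree,
\[
\distorig{r}{w}{T'}=\distorig{r}{p^*(v)}{T}+1+\distorig{v}{w}{T}=\delta(v)+\distorig{v}{w}{T}\le \distorig{r}{v}{T}+\distorig{v}{w}{T}=\distorig{r}{w}{T}\le d/2,
\]
so the eccentricity only decreases on the subtree side. The invariant is thus preserved, and eventually $T=T^*$.

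The case $r\in R$ is handled almost identically: for $v\in\{a,b\}$ we already have $\distorig{r}{v}{T}=1/2=\delta(v)$ with $(v,p^*(v))=e^*\in E(T)$ throughout, so the algorithm never attempts to remove $e^*$; the rest of the vertices are processed exactly as above. The delicate part I expect to require the most care is verifying the eccentricity bound after each swap, which is exactly why we process vertices in BFS order: this order guarantees that $\distorig{r}{p^*(v)}{T}=\delta(p^*(v))$ when $v$ is handled, making the key inequality $\delta(v)\le\distorig{r}{v}{T}$ tight enough to cover the new distances in $v$'s subtree.
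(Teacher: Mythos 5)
Your proof is correct and takes essentially the same route as the paper: both transform $T_1$ and $T_2$ into the shortest-path (BFS) tree $T^*$ rooted at $r$ (started from the two endpoints of the edge when $r$ is a midpoint), at each step adding the $T^*$-parent edge of a vertex and deleting its current parent edge, and both verify that distances from $r$ never increase, so $r$ remains a center point throughout. The only cosmetic difference is bookkeeping: you process vertices in BFS order with an explicit invariant, whereas the paper repeatedly picks an edge of $E(T^*)\setminus E(T)$ closest to $r$ and measures progress by the number of vertices whose root path already agrees with $T^*$.
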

\begin{proof}
      Let $T^*$ be the spanning tree that is obtained by applying the breadth first search from $r$ in $G$. 
      Here, if $r \in R$ is the middle point of $uv \in E$, then the breadth first search is started from $\{u, v\}$. 
      Since $\distorig{r}{v}{T^*} \le \distorig{r}{v}{T_1} \le \frac{d}{2}$ for any $v \in V$, 
      the diameter of $T^*$ is at most $d$.
      For $v \in V$, let $P_{T^*}(v)$ (resp.~$P_{T_1}(v)$) denote the unique path from $r$ to $v$ in $T^*$ (resp.~$T_1$). 
      In $T^*$, we say that a vertex $u \in V$ is the \emph{parent} of $v$ if $uv \in E(T^*)$ and 
      $\distorig{r}{v}{T^*} = \distorig{r}{u}{T^*} + 1$.  
      The parent in $T_1$ is defined in the same way.  

      In order to show that $T_1$ is reconfigurable to $T_2$, 
      it suffices to show that $T_i$ is reconfigurable to $T^*$ for $i \in \{1, 2\}$.
      Suppose that $T_1 \not= T^*$ and let $xy$ be an edge in $E(T^*) \setminus E(T_1)$ that minimizes $\min\{\distorig{r}{x}{T^*}, \distorig{r}{y}{T^*}\}$. 
      Without loss of generality, we assume that $x$ is the parent of $y$ in $T^*$. 
      Let $w \in V$ be the parent of $y$ in $T_1$ and define $T'_1 := T_1 + \set{xy} - \set{wy}$, which is a spanning tree in $G$.  
      By the choice of $xy$, 
      we obtain $P_{T_1}(x) = P_{T^*}(x)$, and hence $P_{T'_1}(y) = P_{T^*}(y)$ and $\distorig{r}{y}{T'_1} = \distorig{r}{y}{T^*} \le \distorig{r}{y}{T_1}$. 
      Since this shows that $\distorig{r}{v}{T'_1} \le \distorig{r}{v}{T_1} \le \frac{d}{2}$ for any $v \in V$, the diameter of $T'_1$ is at most $d$ by \cref{lem:001}. 
      We observe that replacing $T_1$ with $T'_1$ increases $|\{v \in V \mid P_{T_1}(v) = P_{T^*}(v)\}|$ by at least one, because $P_{T_1}(y) \not= P_{T'_1}(y) = P_{T^*}(y)$. 
      Therefore, by applying this procedure at most $|V|$ times, we obtain a reconfiguration sequence from $T_1$ to $T^*$. 
      We can also obtain a reconfiguration sequence from $T_2$ to $T^*$ in the same way.
      Hence, the statement holds.
\end{proof}

\begin{lemma}
      \label{lem:pseudotree:and:two:center}
      Let $r_1, r_2 \in V \cup R$ (possibly $r_1=r_2$). 
      There exists a pseudotree $Q$ with $r_1, r_2 \in \inner{Q}$ 
      if and only if
      there exist two spanning trees $T_1$ and $T_2$ such that $r_i \in \inner{T_i}$ for $i=1,2$ and $T_1 \leftrightarrow T_2$ (possibly $T_1=T_2$).
\end{lemma}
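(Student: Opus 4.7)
The plan is to prove the two directions separately, with the ``only if'' direction requiring the real work.

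For the ``if'' direction, I would take $Q$ to be the spanning subgraph with edge set $E(T_1) \cup E(T_2)$. Since $T_1$ is a spanning tree and $|E(T_2) \setminus E(T_1)| \le 1$, $Q$ is obtained from $T_1$ by adding at most one edge, and is therefore a spanning pseudotree. Because $T_i \subseteq Q$, we have $\distorig{r_i}{v}{Q} \le \distorig{r_i}{v}{T_i}$ for every $v \in V$, so $\ecc{r_i}{Q} \le \ecc{r_i}{T_i} \le \frac{d}{2}$; combined with $r_i \in V \cup R(T_i) \subseteq V \cup R(Q)$, this yields $r_i \in \inner{Q}$.

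For the ``only if'' direction, suppose $Q$ is a pseudotree with $r_1, r_2 \in \inner{Q}$. If $Q$ is itself a spanning tree, set $T_1 = T_2 = Q$ and we are done. Otherwise $Q$ contains the unique cycle $C = \UniC{Q}$, and the task reduces to the following claim: for each $r \in \inner{Q}$ there exists $e \in E(C)$ such that $r \in V \cup R(Q-e)$ and $\ecc{r}{Q-e} \le \frac{d}{2}$. Applying the claim to $r_1$ and $r_2$ yields edges $e_1, e_2 \in E(C)$, and the spanning trees $T_i := Q - e_i$ then satisfy $r_i \in \inner{T_i}$ and $|E(T_1) \setminus E(T_2)| \le 1$, so $T_1 \leftrightarrow T_2$.

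The core of the argument is the proof of the claim, which I would establish by removing the cycle edge ``antipodal'' to $r$. I would first identify the point of $C$ closest to $r$ in $Q$: this is either a unique vertex $c^*$ on the attachment path from $r$ to $C$, or, when $r = p_{e^*}$ for some $e^* \in E(C)$, the midpoint $r$ itself. Walking around $C$ from this base point, the cycle distance to any other point of $C$ first increases and then decreases; picking $e$ to be a cycle edge on which this maximum is attained ensures that for every vertex $v \in V$ the unique $r$-to-$v$ path in $Q - e$ has the same length as the shortest $r$-to-$v$ path in $Q$. Hence $\ecc{r}{Q-e} = \ecc{r}{Q} \le \frac{d}{2}$, and because $|E(C)| \ge 3$ this antipodal edge differs from $e^*$ in the cycle-midpoint case, so $r$ remains in $V \cup R(Q-e)$.

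The main obstacle is the case analysis underlying this antipodal-edge claim, particularly when $r$ is the midpoint of a cycle edge: there one must simultaneously ensure that the chosen $e$ is not the edge containing $r$ and that the remaining tree path from $r$ to every other cycle point is no longer than the original shortest path in $Q$. The other cases, where $r$ reaches $C$ through a unique attachment vertex $c^*$, reduce cleanly to the case ``$r = c^*$'' via a constant shift in distances.
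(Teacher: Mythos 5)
Your proof is correct and follows essentially the same route as the paper: the ``if'' direction is identical (take $Q := T_1 \cup T_2$), and in the ``only if'' direction your antipodal-edge deletion produces exactly the shortest-path (BFS) tree from $r_i$ inside $Q$ that the paper uses, with $T_1 \leftrightarrow T_2$ following in both cases because each tree differs from $Q$ by at most one (cycle) edge. The only difference is presentational: the paper gets the preservation of distances from $r_i$ for free by invoking the BFS/shortest-path tree, whereas you verify it by hand through the antipodal-edge case analysis (including the check that the deleted edge is not the one whose midpoint is $r_i$).
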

\begin{proof}
      We first consider the ``if'' part. Suppose that there exist two spanning trees $T_1$ and $T_2$ such that $r_i \in \inner{T_i}$ for $i=1,2$ and $T_1 \leftrightarrow T_2$.
      Then $Q := T_1 \cup T_2$ is a desired pseudotree as $\ecc{r_i}{Q} \le \ecc{r_i}{T_i} \le \frac{d}{2}$ for $i = 1, 2$. 

      We next consider the ``only-if'' part. Suppose that $Q$ is a pseudotree with $r_1, r_2 \in \inner{Q}$. 
      For $i=1, 2$, let $T_i$ be the spanning tree that is obtained by applying the breadth first search from $r_i$ in $Q$.
      Then, we obtain $\ecc{r_i}{T_i} = \ecc{r_i}{Q} \le \frac{d}{2}$. 
      Furthermore, since $|E(T_1) \setminus E(T_2)| \le |E(Q) \setminus E(T_2)| = 1$, it holds that $T_1 \leftrightarrow T_2$. 
\end{proof}

By these lemmas, we can show the correctness of Algorithm~\ref{alg:01}.

\begin{proposition}
\label{prop:correctness1}
    Let $T_\ini$ and $T_\tar$ be spanning trees with diameter at most $d$. 
    Then, $T_\ini$ is reconfigurable to $T_\tar$ if and only if $\recgraph$ contains a path from $\inner{T_\ini}$ to $\inner{T_\tar}$. 
\end{proposition}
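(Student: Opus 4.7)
The plan is to mimic the argument used for Lemma~\ref{lem:lmd_reconf} in the large maximum degree case, substituting Lemma~\ref{lem:share:inner:vertex} for Lemma~\ref{lmd_same} and Lemma~\ref{lem:pseudotree:and:two:center} for the degree-based characterization of edges of $\recgraph$. The two lemmas just proved already do the structural work; the proposition is essentially their packaging into a reachability statement.

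For the ``only if'' direction, suppose $\langle T_\ini = T_0, T_1, \dots, T_k = T_\tar \rangle$ is a reconfiguration sequence in which every $T_i$ has diameter at most $d$. By Lemma~\ref{lem:001}, I can pick for each $i$ a center point $r_i \in \inner{T_i}$. For every $i \in \{0,\dots,k-1\}$, consecutive trees satisfy $T_i \leftrightarrow T_{i+1}$, so by Lemma~\ref{lem:pseudotree:and:two:center} there is a pseudotree (namely $T_i \cup T_{i+1}$) containing both $r_i$ and $r_{i+1}$ as center points. Hence either $r_i = r_{i+1}$ or $r_i r_{i+1} \in E(\recgraph)$. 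The sequence $r_0, r_1, \dots, r_k$ is therefore a walk in $\recgraph$ from $r_0 \in \inner{T_\ini}$ to $r_k \in \inner{T_\tar}$, and contains the desired path.

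For the ``if'' direction, let $(v_0, v_1, \dots, v_k)$ be a path in $\recgraph$ with $v_0 \in \inner{T_\ini}$ and $v_k \in \inner{T_\tar}$. For each $i \in \{0,\dots,k-1\}$, the edge $v_i v_{i+1} \in E(\recgraph)$ gives a pseudotree with $v_i, v_{i+1}$ as center points, which by Lemma~\ref{lem:pseudotree:and:two:center} provides spanning trees $T^+_i$ and $T^-_{i+1}$ with $v_i \in \inner{T^+_i}$, $v_{i+1} \in \inner{T^-_{i+1}}$, and $T^+_i \leftrightarrow T^-_{i+1}$. Setting $T^-_0 := T_\ini$ and $T^+_k := T_\tar$, for each $i$ the point $v_i$ lies in $\inner{T^-_i} \cap \inner{T^+_i}$, so Lemma~\ref{lem:share:inner:vertex} yields a diameter-bounded reconfiguration from $T^-_i$ to $T^+_i$. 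Concatenating these reconfigurations with the single-flip transitions $T^+_i \leftrightarrow T^-_{i+1}$ produces a reconfiguration sequence from $T_\ini$ to $T_\tar$ in which every intermediate spanning tree has diameter at most $d$.

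I do not expect any serious obstacle here, since the proposition is a direct structural corollary of the preceding two lemmas. The only points requiring a little care are: (i) in the only-if direction, allowing $r_i = r_{i+1}$ so that the walk projects to a path; and (ii) in the if direction, correctly handling the endpoints by identifying $T^-_0$ with $T_\ini$ and $T^+_k$ with $T_\tar$ so that Lemma~\ref{lem:share:inner:vertex} applies at the two boundaries. All genuine difficulty is hidden inside Lemma~\ref{lem:share:inner:vertex} (the BFS-from-center reconfiguration argument) and Lemma~\ref{lem:pseudotree:and:two:center} (the pseudotree equivalence).
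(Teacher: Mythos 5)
Your proposal is correct and follows essentially the same route as the paper: pick a center point of each tree in the sequence and use Lemma~\ref{lem:pseudotree:and:two:center} to turn consecutive flips into edges of $\recgraph$ for the ``only if'' direction, and conversely expand each edge of the path into a pair of adjacent spanning trees glued together by Lemma~\ref{lem:share:inner:vertex}, with the same endpoint conventions $T^-_0 := T_\ini$ and $T^+_k := T_\tar$. No gaps to report.
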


\begin{proof}
    We first show the ``only if'' part. 
    Suppose that there exists a reconfiguration sequence $\langle T_\ini = T_0, T_1, \dots, T_k = T_\tar \rangle$ from $T_\ini$ to $T_\tar$, where 
    $T_i$ is a spanning tree of diameter at most $d$ for any $i \in \{0, 1, \dots , k\}$ and
    $T_i \leftrightarrow T_{i+1}$ for any $i  \in \{0, 1, \dots, k-1\}$.
    Let $r_i$ be a center point of $T_i$, where its existence is guaranteed by \cref{lem:001}. 
    For $i \in \{0, 1, \dots, k-1\}$, by \cref{lem:pseudotree:and:two:center}, there exists a pseudotree $Q_i$ having both $r_i$ and $r_{i+1}$ as center points. 
    This means that either $r_i = r_{i+1}$ or $\recgraph$ contains an edge $r_i r_{i+1}$. 
    Since $r_0 \in \inner{T_\ini}$ and $r_{k} \in \inner{T_\tar}$, $\recgraph$ contains a path 
    from $\inner{T_\ini}$ to $\inner{T_\tar}$. 
    
    To show the ``if'' part, suppose that $\recgraph$ contains a path $(r_0, r_1, \dots , r_k)$ from $\inner{T_\ini}$ to $\inner{T_\tar}$. 
    For $i \in \{0, 1, \dots, k-1\}$, since $r_i r_{i+1} \in E(\recgraph)$ implies the existence of a pseudotree $Q_i$ with $r_i, r_{i+1} \in \inner{Q_i}$, 
    \cref{lem:pseudotree:and:two:center} shows that 
    there exist two spanning trees $T^+_i$ and $T^-_{i+1}$ such that $r_i \in \inner{T^+_i}$, $r_{i+1} \in \inner{T^-_{i+1}}$, and $T^+_i \leftrightarrow T^-_{i+1}$. 
    Let $T^-_0 := T_\ini$ and $T^+_k := T_\tar$. Then, for $i \in \{0, 1, \dots, k\}$, since $T^-_i$ and $T^+_i$ share $r_i$ as a center point, 
    $T^-_i$ is reconfigurable to $T^+_i$ by \cref{lem:share:inner:vertex}. This together with $T^+_i \leftrightarrow T^-_{i+1}$ shows that $T_\ini$ is reconfigurable to $T_\tar$. 
\end{proof}

Although this proposition shows the correctness of Algorithm~\ref{alg:01}, 
it does not imply a polynomial-time algorithm for \GSTRwithsmallDaim, because
it is not easy to construct $\recgraph$ efficiently. 
Indeed, for $u, v \in V(\recgraph)$, we do not know how to decide whether $uv \in E(\recgraph)$ or not in polynomial time.  
To avoid this problem, we efficiently construct a subgraph ${\recgraph}'$ of $\recgraph$ such that 
the reachability of $\recgraph'$ is equal to that of $\recgraph$, which is a key ingredient of our algorithm and discussed in the next subsection.

\subsection{Modified Algorithm}
\label{sec:063}

In this subsection, we give a polynomial-time algorithm for \GSTRwithsmallDaim. In our algorithm, it is important to uniquely determine a shortest path between two points. To achieve this, we use a \emph{perturbation technique} (see e.g.,~\cite{Cabello:Chambers:Erickson:SIAMJCOMP:2013}).

For each edge $e$ in $G$, we give a unique index $\idx{e} \in \{1, 2, \dots , |E|\}$ to $e$.
For $j \in \{1, 2, \dots , |E|\}$, 
let $\uvec{j} \in \mathbb{R}^{|E|}$ be the unit vector such that the $j$th coordinate is one and the other coordinates are zero.
For $e \in E$, define $\newweight{e}\coloneqq (1, \uvec{\idx{e}}) \in \mathbb{R} \times \mathbb{R}^{|E|}$. 
For two vectors $x, y  \in \mathbb{R}^k$, 
we denote $x < y$ if $x$ is lexicographically smaller than $y$. 
For two paths $P_1$ and $P_2$ in $G$, 
we say that $P_1$ is \emph{shorter than} $P_2$ if $\newweight{P_1}:= \sum_{e \in E(P_1)} \newweight{e}$ 
is lexicographically smaller than $\newweight{P_2}:= \sum_{e \in E(P_2)} \newweight{e}$. 
Since the first coordinate of $\newweight{P_i}$ is $|E(P_i)|$ for $i=1, 2$, 
if $|E(P_1)| < |E(P_2)|$, then $P_1$ is shorter than $P_2$. 
When $|E(P_1)| = |E(P_2)|$, we use the other coordinates to break ties. 
For $u, v \in V$, we define $\dist{u}{v}{G} := \min_P \sum_{e \in E(P)} \newweight{e}$, where 
the minimum is taken over all the $u$-$v$ paths $P$. 
Since $P_1 \not= P_2$ implies that $\newweight{P_1} \not= \newweight{P_2}$, 
the \emph{shortest path} between two vertices is uniquely determined.
We note that the unique shortest paths between two given vertices can be computed by using 
a standard shortest path algorithm. The running time is increased by the perturbation, but it is still polynomial. 

For an edge $e = uv \in E$ of length $\newweight{e} \in \mathbb{R} \times \mathbb{R}^{|E|}$, we regard $e$ as a curve connecting $u$ and $v$. 
An interior point $p$ on $e$ is represented by a triplet $(u, v, \alpha)$ with $\alpha \in \mathbb{R} \times \mathbb{R}^{|E|}$
such that ${\bf 0} \le \alpha \le \newweight{e}$, where $\le$ means the lexicographical order. 
Here, $\alpha$ represents the length between $u$ and $p$, and hence 
$(u, v, \alpha)$ and $(v, u, \newweight{e} - \alpha)$ represent the same point. 
For two points $p_1= (u_1, v_1, \alpha_1)$ and $p_2= (u_2, v_2, \alpha_2)$ in $G$, 
consider a curve $C$ connecting $p_1$ and $p_2$ that consists of a $u_1$-$u_2$ path $P$, a curve in $u_1 v_1$ between $u_1$ and $p_1$, and 
a curve in $u_2 v_2$ between $u_2$ and $p_2$. 
Such a curve $C$ is called a \emph{$p_1$-$p_2$ path} in $G$, and its length is defined as
$\newweight{C} := \sum_{e \in E(P)} \newweight{e} + \alpha_1 + \alpha_2$.

For a point $r \in V \cup R$, the \emph{shortest path tree from $r$} is the spanning tree in $G$ 
that contains the unique shortest $r$-$v$ path for any $v \in V$. 
For a pseudotree $Q$ and for two points $x$ and $y$ on $Q$, 
let $Q[x,y]$ denote the shortest $x$-$y$ path in $Q$, 
where we use this notation only when the shortest $x$-$y$ path is uniquely determined. 
For $\alpha \in \mathbb{R} \times \mathbb{R}^{|E|}$, 
let $\bar{\alpha}$ denote the first coordinate of $\alpha$, that is, 
$\bar \alpha$ is the length before the perturbation. 

We denote $\reconf{r_1}{r_2}{Q}$ if $Q$ is a pseudotree and $r_1, r_2 \in \inner{Q}$ with $r_1 \not= r_2$. 
For any pseudotree $Q$ and any points $r_1$ and $r_2$ with $\reconf{r_1}{r_2}{Q}$,
we say that a triplet $\tuple{r_1, r_2, Q}$ is \emph{good} if
\begin{enumerate}
      \item  $\vlabel{v}{r_1, r_2, Q} \le \vlabel{u}{r_1, r_2, Q} + \newweight{uv}$ for any $uv \in E$, and \label{enum:cond:good:a}
      \item  $\UniC{Q}$ contains both $r_1$ and $r_2$ if $\UniC{Q}$ exists, \label{enum:cond:good:b}
\end{enumerate}
where $\vlabel{v}{r_1, r_2, Q} := \max \{ \dist{r_1}{v}{Q},\ \dist{r_2}{v}{Q} \}$. 
Roughly speaking, the first condition means that $\vlabel{v}{r_1, r_2, Q}$ can be seen as the
distance from a certain point to $v$ in an auxiliary graph. 
If $r_1$ and $r_2$ are clear from the context, $\vlabel{v}{r_1, r_2, Q}$ is simply denoted by $\vlabel{v}{Q}$. 
We define the graph $\recgraph'$ as follows:
$V(\recgraph') = V \cup R$ and 
$\recgraph'$ contains an edge $r_1 r_2$ if and only if there is a pseudotree $Q$ such that $\reconf{r_1}{r_2}{Q}$ and $\tuple{r_1, r_2, Q}$ is good. 
Clearly, $\recgraph'$ is a subgraph of $\recgraph$.

The following theorem shows that we can determine whether $r_1 r_2 \in E(\recgraph')$ or not in polynomial time, whose proof is given in Section~\ref{sec:findgoodtriple}. 

\begin{theorem}
     \label{prop:reconf:good}
      Let $r_1$ and $r_2$ be points in $V \cup R$ with $r_1\not= r_2$.
      We can find in polynomial time a pseudotree $Q$ such that $\reconf{r_1}{r_2}{Q}$ and $\tuple{r_1, r_2, Q}$ is good if it exists. 
\end{theorem}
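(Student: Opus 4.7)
The plan is to decide, and if possible construct, a good pseudotree via an explicit canonical construction followed by enumeration over a small set of bridging edges. The guiding principle is that condition (a) essentially forces the $r_i$-side of $Q$ to realize true shortest paths from $r_i$ in $G$: any detour from a shortest path typically violates the relaxation inequality at an edge witnessing the shortcut. So I would build $Q$ out of shortest-path trees rooted at $r_1$ and $r_2$, glued together by one or two bridging edges between the two halves.

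Concretely, first compute the perturbed-weight shortest path distances $\dist{r_i}{v}{G}$ for $i \in \{1, 2\}$ and every $v \in V \cup R$ using Dijkstra on $\hat{G}$. If $\distorig{r_i}{v}{G} > d/2$ for both $i$ at some $v$, report failure, since then no pseudotree $Q$ with $r_1, r_2 \in \inner{Q}$ can have $\ecc{r_i}{Q} \le d/2$. Otherwise, partition $V = V_1 \sqcup V_2$ by sending each $v$ to the unique index $i$ minimizing $\dist{r_i}{v}{G}$ (uniqueness from the perturbation, and $r_i \in V_i$). A short exchange argument shows that the unique shortest $r_i$-$v$ path lies entirely inside $V_i$ for every $v \in V_i$, so the union of these paths forms a tree $T_i \subseteq G[V_i]$ spanning $V_i$. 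Since $T_1 \cup T_2$ is a two-component forest, we add either one bridging edge (producing a spanning tree) or two bridging edges (producing a pseudotree with one cycle) from $E$ between $V_1$ and $V_2$ to obtain $Q$; enumerate over the $O(|E|^2)$ choices. For each candidate $Q$, verify in polynomial time that $\ecc{r_i}{Q} \le d/2$ for both $i$, that condition (a) holds by checking the inequality at each edge of $G$, and that the unique cycle of $Q$ (if any) passes through both $r_1$ and $r_2$. Return any candidate that succeeds; otherwise, return failure.

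The main obstacle is to show that this canonical family of candidates is complete: whenever any good pseudotree $Q^\dagger$ with $r_1, r_2 \in \inner{Q^\dagger}$ exists, some member of the enumerated family must also be good. I would tackle this by an exchange-style argument: given $Q^\dagger$, for each $v \in V_i$, reroute the $r_i$-to-$v$ subpath in $Q^\dagger$ along the unique shortest $r_i$-$v$ path in $G$. The core challenge is showing that condition (a) of $Q^\dagger$ forces the labels $\vlabel{\cdot}{Q^\dagger}$ to behave like genuine shortest-path distances, so the rerouting cannot increase any label and thus preserves both $\ecc{r_i}{Q} \le d/2$ and condition (a). Once the rerouting is shown to preserve the pseudotree property and to leave the cycle on the $r_1$-$r_2$ side unmoved, the resulting graph is a canonical good pseudotree. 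A secondary subtlety is the case when $r_1$ or $r_2$ lies in $R$, but treating it as the corresponding subdivided vertex in $\hat{G}$ reduces it to the vertex case without new ideas.
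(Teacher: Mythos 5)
Your canonical family is not complete, and the gap is exactly in the rerouting step you flag as the ``core challenge.'' The premise that condition (a) forces the $r_i$-side of a good pseudotree to consist of shortest paths from $r_i$ in $G$ is false. In a good spanning tree $Q$, let $e=v_1v_2$ be the edge containing the midpoint of the $r_1$-$r_2$ path in $Q$. Every vertex $v$ on the $r_1$-side satisfies $\dist{r_2}{v}{Q} > \dist{r_1}{v}{Q}$, so $\vlabel{v}{Q} = \dist{r_2}{v}{Q}$: the label is the distance from the \emph{far} center, whose path crosses $e$ at $v_1$ and then follows the branch to $v$. Hence condition (a), read through Bellman--Ford optimality, forces the branches on the $r_1$-side to be shortest paths from $v_1$ (more precisely, from a virtual source $r$ attached to $v_1$ and $v_2$ with suitable offsets), while only the central segments $Q[r_i,v_i]$ are shortest paths from $r_i$ in $G$. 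Consequently, rerouting a branch onto the shortest $r_1$-$v$ path in $G$ can strictly \emph{increase} $\dist{r_2}{v}{Q}$, which is precisely what the label and $\ecc{r_2}{Q}$ measure; the rerouting can therefore destroy both condition (a) and $r_2 \in \inner{Q}$. Concretely, put a long cycle on the $r_1$-side: the shortest-path tree from $r_1$ splits it at the point antipodal to $r_1$, whereas any good pseudotree must split it roughly antipodally to the point where the $r_2$-side attaches; at your splitting edge $uv$ the labels then differ by about the detour around the cycle, violating $\vlabel{v}{Q} \le \vlabel{u}{Q} + \newweight{uv}$ and possibly pushing $\ecc{r_2}{Q}$ above $\frac{d}{2}$, even though a good pseudotree exists. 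Since your $T_1,T_2$ are fixed before the bridge is chosen, enumerating the $O(|E|^2)$ bridges cannot repair this.

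A related flaw is the partition itself: comparing $\dist{r_1}{v}{G}$ with $\dist{r_2}{v}{G}$ need not reproduce the split induced by any good pseudotree; the correct split is an offset Voronoi partition with respect to the endpoints of the middle edge(s), with offsets of the form $\dist{r_2}{v_2}{G}+\newweight{e}$ and $\dist{r_1}{v_1}{G}+\newweight{e}$. This is what the paper's proof exploits: it guesses the middle edge $e$ (and, in the cycle case, a second middle edge $e'$ and two further edges $f_1,f_2$, giving $O(|E|^4)$ guesses), attaches a virtual source $r$ with lengths computable from shortest-path distances in $G$, and recovers $Q$ from the shortest-path tree rooted at $r$ in the augmented graph $G^+$ --- condition (a) is exactly what guarantees the labels equal the $G^+$-distances from $r$. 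To salvage your plan you would have to enumerate the middle edge(s) and root the trees, with these offsets, at their endpoints rather than at $r_1$ and $r_2$; as written, the proposal's completeness claim fails.
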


The next theorem shows that the reachability of $\recgraph'$ is equal to that of $\recgraph$, which is a key property of $\recgraph'$. 
A proof is given in Section~\ref{sec:goodsequence}.

\begin{theorem}
      \label{thm:good:sequence}
     For any $r_1, r_2 \in V \cup R$ with $r_1 r_2 \in E(\recgraph)$, $\recgraph'$ contains an $r_1$-$r_2$ path. 
\end{theorem}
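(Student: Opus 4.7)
The plan is to construct, starting from any pseudotree $Q$ witnessing $r_1 r_2 \in E(\recgraph)$, an explicit sequence of intermediate center points $r_1 = s_0, s_1, \ldots, s_k = r_2$ in $V \cup R$ such that each consecutive pair $(s_i, s_{i+1})$ is an edge of $\recgraph'$; concatenation then yields the required $r_1$-$r_2$ path.

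Fix a witnessing pseudotree $Q$ with $r_1, r_2 \in \inner{Q}$ and choose an $r_1$-$r_2$ path $P$ in $Q$ (when $\UniC{Q}$ exists, either arc works). Discretize $P$ into half-edge segments, so the breakpoints $s_0, s_1, \ldots, s_k$ alternate between vertices and edge midpoints with consecutive $s_i$ at $\hat G$-distance $1/2$. My first claim is that every $s_i \in \inner{Q}$: in the tree case this is an instance of the tree four-point condition applied to $\{r_1, r_2, s_i, v\}$ for each $v \in V$, where $\distorig{r_1}{s_i}{Q} + \distorig{s_i}{r_2}{Q} = \distorig{r_1}{r_2}{Q}$ combined with the inequality yields $\distorig{s_i}{v}{Q} \le \max\{\distorig{r_1}{v}{Q}, \distorig{r_2}{v}{Q}\} \le d/2$.

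For each consecutive pair $(s_i, s_{i+1})$ — which by construction consists of a vertex $v$ and an incident midpoint $p_e$ of some $e = vu \in E$ — I build a good pseudotree $Q_i$ by a ``Voronoi shortest-path tree'' construction. Partition $V$ into Voronoi cells $V_v, V_u$ according to which of $v, u$ is closer in the perturbed $G$-metric; a short inductive argument shows that for any $x \in V_v$ the parent of $x$ in the perturbed SPT-from-$v$ is also in $V_v$ (since $\dist{u}{p(x)}{G} - \dist{v}{p(x)}{G} \ge \dist{u}{x}{G} - \dist{v}{x}{G} > 0$), and similarly for $V_u$. Let $Q_i$ be the union of the SPT-from-$v$ restricted to $V_v$, the SPT-from-$u$ restricted to $V_u$, and the edge $e$; this has $|V| - 1$ edges and hence is a spanning tree. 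Inside $Q_i$, distances from $v$, $u$, and $p_e$ all coincide with the true $G$-shortest-path distances, which are bounded by the corresponding $Q$-distances (since $Q$ is a subgraph of $G$) and hence by $d/2$, so $s_i, s_{i+1} \in \inner{Q_i}$. Condition (\ref{enum:cond:good:b}) is vacuous as $Q_i$ is a tree, and condition (\ref{enum:cond:good:a}) is immediate from $\vlabel{w}{Q_i} = \max\{\dist{s_i}{w}{G}, \dist{s_{i+1}}{w}{G}\}$ together with the shortest-path triangle inequality $\dist{s}{w}{G} \le \dist{s}{u}{G} + \newweight{uw}$ applied to both $s = s_i$ and $s = s_{i+1}$.

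The main obstacle will be the pseudotree case of $Q$ in the first claim, since the tree four-point condition does not apply verbatim when $\UniC{Q}$ exists. My plan is a case analysis based on where $\UniC{Q}$ sits relative to $P$ and to $\{r_1, r_2\}$: in each subcase I either apply the four-point argument inside an auxiliary spanning tree $T \subseteq Q$ containing $P$ with $r_1, r_2 \in \inner{T}$, or I bound $\distorig{s_i}{v}{Q}$ directly by decomposing the $s_i$-to-$v$ path in $Q$ through the cycle and invoking $\ecc{r_j}{Q} \le d/2$. Once this wrinkle is handled, the Voronoi construction above — which depends only on $G$ and on the pair $(s_i, s_{i+1})$ — delivers the good witnesses $Q_0, \ldots, Q_{k-1}$, and concatenating them produces the required $r_1$-$r_2$ path in $\recgraph'$, completing the proof of Theorem~\ref{thm:good:sequence}.
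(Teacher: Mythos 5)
Your reduction to adjacent pairs already breaks at the first claim, and the failure is not a peripheral ``wrinkle'' of the cycle case but exactly where the difficulty of the theorem lives. Take $G=Q=C_{11}$, an $11$-cycle, with $d=10$, and let $r_1,r_2$ be two adjacent vertices. Both have eccentricity $5=\frac{d}{2}$ in $Q$, so $r_1r_2\in E(\recgraph)$ with the cyclic pseudotree $Q$ as witness; but the midpoint $s$ of the edge $r_1r_2$ has $\ecc{s}{G}=5.5>\frac{d}{2}$, hence $s$ is a center point of no pseudotree at all and is an isolated vertex of $\recgraph$. So your breakpoint $s_1$ cannot lie on any path of $\recgraph'$, and no auxiliary spanning tree $T\subseteq Q$ rescues the four-point argument either: no spanning tree (Hamiltonian path) of $C_{11}$ has both $r_1$ and $r_2$ as center points, because each such path has a unique vertex of eccentricity $5$. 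In this instance every witness for the pair $(r_1,r_2)$ must contain the cycle through both of them --- precisely what condition (\ref{enum:cond:good:b}) of goodness is designed for --- whereas your scheme only ever manufactures spanning-tree witnesses and only ever moves along the $r_1$-$r_2$ path inside $Q$. (The theorem is still true here: $(r_1,r_2,C_{11})$ itself is a good triple, i.e.\ the direct edge is in $\recgraph'$, but via a cyclic witness your construction never produces.)

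Independently, the Voronoi shortest-path-tree step is wrong even when $Q$ is a tree. For $x$ in the cell $V_u$, the unique $v$-$x$ path in $Q_i$ is forced through $e$ and then down the SPT from $u$, so $\dist{v}{x}{Q_i}=\newweight{e}+\dist{u}{x}{G}$, which is in general strictly larger than $\dist{v}{x}{G}$; hence neither your identity $\vlabel{w}{Q_i}=\max\{\dist{s_i}{w}{G},\dist{s_{i+1}}{w}{G}\}$ nor the bound $\ecc{v}{Q_i}\le\frac{d}{2}$ follows, and condition (\ref{enum:cond:good:a}) is not ``immediate.'' A minimal failure: $G$ a triangle on $\{v,u,x\}$, $d=3$, $Q$ the star with edges $vu,vx$, so $v$ and $p_{vu}$ are both center points of $Q$; if the perturbation places $x$ in $V_u$, then $Q_i$ is the path $v\,u\,x$ and $\distorig{v}{x}{Q_i}=2>\frac{d}{2}$, so $(v,p_{vu},Q_i)$ is not even a witness of adjacency in $\recgraph$, let alone good. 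Thus even granting the breakpoint claim, each per-step instance is essentially the original problem again. The paper avoids all of this by a global induction on the lexicographic potential $\f{r_1,r_2,Q}=(\newweight{\UniC{Q}},\weight{r_1,r_2,Q})$ via \cref{prop:getting:better}: when a triple is not good one performs a single edge swap on $Q$ (or splits through an intermediate center $r_0$, which need not lie on the $r_1$-$r_2$ path in $Q$) so that the potential strictly drops; nothing in your outline substitutes for that mechanism, so the proposal has a genuine gap.
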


We are now ready to prove \cref{thm:sdiam_P}. 
By \cref{prop:correctness1} and \cref{thm:good:sequence}, two spanning trees $T_\ini$ and $T_\tar$ are reconfigurable to each other if and only if $\recgraph'$ contains a path from $\inner{T_\ini}$ to $\inner{T_\tar}$. 
Since we can construct $\recgraph'$ in polynomial time by \cref{prop:reconf:good}, this can be tested in polynomial time. 
Therefore, \GSTRwithsmallDaim \ can be solved in polynomial time, which completes the proof of \cref{thm:sdiam_P}. 
A pseudocode of our algorithm is given in Algorithm~\ref{alg:02}. 
Note that all the proofs are constructive, and hence we can find a desired reconfiguration sequence from $T_\ini$ to $T_\tar$ in polynomial time if it exists.

\begin{algorithm}[t]
    \KwInput{A graph $G$ and two spanning trees $T_\ini$ and $T_\tar$ in $G$ with diameter at most $d$. }
    \KwOutput{Is $T_\ini$ reconfigurable to $T_\tar$?}
    Compute  $\inner{T_\ini}$ and $\inner{T_\tar}$, and construct $\recgraph' = (V \cup R, \emptyset)$\; 
    \For{$r_1, r_2 \in V \cup R$ with $r_1 \not= r_2$}{
        \If{there is a pseudotree $Q$ such that $\reconf{r_1}{r_2}{Q}$ and $(r_1, r_2, Q)$ is good}{
            Add an edge $r_1 r_2$ to $\recgraph'$\; 
        }
    }
    \lIf{there is a path between $\inner{T_\ini}$ and $\inner{T_\tar}$ in $\recgraph'$}{\Return{\yes}}
    \lElse{\Return{\no}}
  \caption{Modified algorithm for \GSTRwithsmallDaim}\label{alg:02}
\end{algorithm}

\section{Proof of Theorem~\ref{prop:reconf:good}}
\label{sec:findgoodtriple}

In this section, we give an algorithm for finding 
a pseudotree $Q$ such that 
$\reconf{r_1}{r_2}{Q}$ and $\tuple{r_1, r_2, Q}$ is good (if it exists), and prove Theorem~\ref{prop:reconf:good}. 
Our algorithm consists of two procedures: one finds a desired spanning tree $Q$ and the other
finds a desired pseudotree $Q$ with a cycle. 
We describe these two procedures in Sections~\ref{sec:Qhasnocycle} and \ref{sec:Qhascycle}, respectively. 
In what follows in this section, we fix $r_1, r_2 \in V \cup R$.

\subsection{\texorpdfstring{Finding a spanning tree $Q$}{Finding a spanning tree Q}}
\label{sec:Qhasnocycle}

We first consider the case when $Q$ is a spanning tree. 
Suppose that $Q$ is a spanning tree such that $\reconf{r_1}{r_2}{Q}$ and $\tuple{r_1, r_2, Q}$ is good. 
Let $e= v_1 v_2 \in E$ be the unique edge in $Q$ such that 
$\dist{r_1}{v_1}{Q} < \dist{r_2}{v_1}{Q}$ and $\dist{r_1}{v_2}{Q} > \dist{r_2}{v_2}{Q}$. 
That is, $e$ is the edge containing the middle point of $r_1$ and $r_2$. 
Note that such $e$ exists, because $\dist{r_1}{v}{Q} \not= \dist{r_2}{v}{Q}$ for each $v\in V$ as $\ell$ is the perturbed length. 
Define $G^+$ as the graph obtained from $G$ 
by adding a new vertex $r$ together with two new edges $e_1 = r v_1$ and $e_2 = r v_2$ (Figure~\ref{fig:701}). 
Set $\newweight{e_1} := \dist{r_2}{v_1}{Q}$ and $\newweight{e_2} := \dist{r_1}{v_2}{Q}$. 
We now show the following claim. 

\begin{figure}[t!]
      \centering

\begin{tikzpicture}

\tikzset{tvertex/.style={auto=left,circle,fill=blue,minimum size=5pt,inner sep=0pt}} 
\tikzset{tleaf/.style={}} 
\tikzset{tedge/.style={blue}} 

    \node[tvertex, label={0:$r$}] at (0, 0) (r)  {};
    \node[tvertex, label={90:$v_1$}] at (-0.7, 1) (v1)  {};
    \node[tvertex, label={90:$v_2$}] at (0.7, 1) (v2)  {};

    \node[tvertex] at (-2, 1.5) (v11)  {};
        \node[tvertex] at (-2.5, 2) (v111)  {};
        \node[tvertex] at (-3, 1.5) (v112)  {};
    \node[tvertex] at (-2, 0.5) (v12)  {};

    \node[tvertex] at (1.5, 0.6) (v21) {}; 
    \node[tvertex] at (1.5, 1.6) (v22) {}; 
    \node[tvertex, label={90:$r_2$}] at (2.5, 1.1) (r2) {};

    \draw[tedge] (r) -- node[midway, below left=-1mm] {$e_1$} (v1); 
    \draw[tedge] (r) -- node[midway, below right=-1mm] {$e_2$} (v2); 
    \draw (v1) -- node[midway,above] {$e$} (v2); 

    \draw (v1) -- (v11); 
    \draw (v1) -- (v12); 
    \draw (v11) -- (v111); 
    \draw (v11) -- (v112); 

    \path (v11) edge node[black, anchor=center, pos=0.5,font=\bfseries,label={[label distance=-1mm] 270:$r_1$}] {$\times$} (v112);

    \draw (v111) -- +(90:1em); 
    \draw (v111) -- +(145:1em); 

    \draw (v112) -- +(150:1em); 
    \draw (v112) -- +(180:1em); 
    \draw (v112) -- +(210:1em); 

    \draw (v12) -- +(160:1em); 
    \draw (v12) -- +(240:1em); 

    \draw (v2) -- (v21); 
    \draw (v2) -- (v22); 
    \draw (v22) -- (r2); 

    \draw (v22) -- +(45:1em); 

    \draw (r2) -- +(30:1em); 
    \draw (r2) -- +(-30:1em); 

    \draw (v21) -- +(30:1em); 
    \draw (v21) -- +(0:1em); 
    \draw (v21) -- +(-30:1em); 
\end{tikzpicture}%

      \caption{Construction of $G^+$.}
      \label{fig:701}
\end{figure}

\begin{claim}
For each $v\in V$, it holds that $\vlabel{v}{Q} = \dist{r}{v}{G^+}$. 
Furthermore, $Q - e + \{e_1, e_2\}$ is the shortest path tree starting from $r$ in $G^+$.  
\end{claim}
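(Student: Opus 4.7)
The plan is to decompose $Q$ via the edge $e$ and reduce both assertions of the claim to direct distance computations, with the first good condition entering only at the last step. Because the two defining inequalities of $e=v_1v_2$ cannot simultaneously hold with $r_1,r_2$ in the same component of $Q-e$, removing $e$ splits $Q$ into two subtrees $Q_1 \ni r_1,v_1$ and $Q_2 \ni r_2,v_2$. My first observation is that $\vlabel{v}{Q}=\dist{r_2}{v}{Q}$ for every $v\in V(Q_1)$, and symmetrically $\vlabel{v}{Q}=\dist{r_1}{v}{Q}$ for every $v\in V(Q_2)$. Indeed, for $v\in V(Q_1)$ the unique $r_2$-$v$ path in $Q$ is forced through $v_1$, giving $\dist{r_2}{v}{Q}=\dist{r_2}{v_1}{Q}+\dist{v_1}{v}{Q}$, while $\dist{r_1}{v}{Q}\le\dist{r_1}{v_1}{Q}+\dist{v_1}{v}{Q}$ by the tree triangle inequality; invoking $\dist{r_1}{v_1}{Q}<\dist{r_2}{v_1}{Q}$ makes the former strictly larger.

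Next, set $Q':=Q-e+\{e_1,e_2\}$, which is a spanning tree of $G^+$. For $v\in V(Q_1)$ the unique $r$-$v$ path in $Q'$ traverses $e_1$ and then stays inside $Q_1$, so its length is $\newweight{e_1}+\dist{v_1}{v}{Q}=\dist{r_2}{v_1}{Q}+\dist{v_1}{v}{Q}=\dist{r_2}{v}{Q}=\vlabel{v}{Q}$; the case $v\in V(Q_2)$ is symmetric. Hence $\dist{r}{v}{Q'}=\vlabel{v}{Q}$ for every $v\in V$, which already yields the upper bound $\dist{r}{v}{G^+}\le\vlabel{v}{Q}$.

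For the matching lower bound I would exploit the good condition. Any simple $r$-$v$ path in $G^+$ begins with $e_1$ or $e_2$ and then remains in $G$, so it suffices to show $\newweight{e_i}+\dist{v_i}{v}{G}\ge\vlabel{v}{Q}$ for $i=1,2$. Iterating condition~(\ref{enum:cond:good:a}) along a shortest $v_1$-$v$ path in $G$ gives $\vlabel{v}{Q}\le\vlabel{v_1}{Q}+\dist{v_1}{v}{G}$, and the first step identifies $\vlabel{v_1}{Q}=\dist{r_2}{v_1}{Q}=\newweight{e_1}$ since $v_1\in V(Q_1)$; the argument for $e_2$ is symmetric. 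Combining these bounds gives $\dist{r}{v}{G^+}=\vlabel{v}{Q}=\dist{r}{v}{Q'}$ for every $v\in V$, and uniqueness of shortest paths under the perturbation then forces $Q'$ to be the shortest path tree from $r$. The main (but mild) obstacle is precisely this lower bound, which is the only place where the goodness of $(r_1,r_2,Q)$ is used.
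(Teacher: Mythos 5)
Your proof is correct and takes essentially the same route as the paper: split $Q$ along $e$ to see that $Q-e+\{e_1,e_2\}$ realizes $\vlabel{v}{r_1,r_2,Q}$ as the $r$-$v$ path length in $G^+$ (the upper bound), then use the first condition of goodness, together with $\vlabel{v_i}{r_1,r_2,Q}=\newweight{e_i}$, for the matching lower bound, and conclude via uniqueness of perturbed shortest paths. The only cosmetic difference is that you unroll the inequality $\vlabel{v}{r_1,r_2,Q}\le\vlabel{u}{r_1,r_2,Q}+\newweight{uv}$ along a path by hand, where the paper cites the correctness of the Bellman--Ford method.
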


\begin{proof}
Let $Q_1$ and $Q_2$ be the connected components of $Q-e$ such that $v_i \in V(Q_i)$ for $i=1,2$. 
Then, 
$\vlabel{v}{Q} = \dist{r_2}{v_1}{Q} + \dist{v_1}{v}{Q-e} = \newweight{e_1} + \dist{v_1}{v}{Q-e}$ for $v \in V(Q_1)$ and 
$\vlabel{v}{Q} = \dist{r_1}{v_2}{Q} + \dist{v_2}{v}{Q-e} = \newweight{e_2} + \dist{v_2}{v}{Q-e}$ for $v \in V(Q_2)$. 
Therefore, for each $v \in V$, $Q - e + \{e_1, e_2\}$ contains an $r$-$v$ path $P_v$
whose length is $\vlabel{v}{Q}$. 
Furthermore, since $\tuple{r_1, r_2, Q}$ is good, $\vlabel{v}{Q} \le \vlabel{u}{Q} + \newweight{uv}$ holds for any $uv \in E$, 
by the correctness of the Bellman-Ford method (see e.g.~\cite[Section 8.3]{schrijver-book}, \cite[Section 7.1]{korte12}), 
$\vlabel{v}{Q}$ is equal to the shortest path length from $r$ to $v$ in $G^+$. 
That is $P_v$ is the unique shortest $r$-$v$ path in $G^+$, 
which shows the claim. 
\end{proof}

Since a subpath of a shortest path is also a shortest path, 
this claim implies that 
$Q[v_i, r_i]$ is the unique shortest $v_i$-$r_i$ path in $G$ for $i=1, 2$. 
Therefore, we obtain 
\begin{equation}
\label{eq:findQ01}
\newweight{e_1} = \dist{r_2}{v_1}{Q} = 
\begin{cases}
\frac{1}{2} \newweight{e}  & \mbox{if $r_2 = p_e$,} \\
\dist{r_2}{v_2}{G} + \newweight{e}  & \mbox{otherwise,} 
\end{cases}
\end{equation}
and 
\begin{equation}
\label{eq:findQ02}
\newweight{e_2} = \dist{r_2}{v_1}{Q} = 
\begin{cases}
\frac{1}{2} \newweight{e}  & \mbox{if $r_1 = p_e$,} \\
\dist{r_1}{v_1}{G} + \newweight{e}  & \mbox{otherwise.} 
\end{cases}
\end{equation}
Note that the right-hand sides of (\ref{eq:findQ01}) and (\ref{eq:findQ02}) can be computed 
without $Q$ if we have $v_1$ and $v_2$ in hand.

Therefore, the following algorithm correctly finds a desired spanning tree $Q$: 
guess two vertices $v_1$ and $v_2$ in $V$, 
construct $G^+$ where 
$\newweight{e_1}$ and $\newweight{e_2}$ are defined as (\ref{eq:findQ01}) and (\ref{eq:findQ02}), 
compute the shortest path tree $T$ from $r$ in $G^+$, define $Q:= T - \{e_1, e_2\} + e$, and 
check whether $Q$ satisfies the constraints or not. 
See Algorithm~\ref{alg:03} for a pseudocode of our algorithm. 
Since the number of choices of $v_1$ and $v_2$ is $O(|E|)$, this algorithm runs in polynomial time.

\begin{algorithm}
    \KwInput{A graph $G$ with two distinct points $r_1$ and $r_2$ in $V \cup R$. }
    \KwOutput{A spanning tree $Q$ such that $\reconf{r_1}{r_2}{Q}$ and $\tuple{r_1, r_2, Q}$ is good. }
    \For{$v_1, v_2 \in V$ with $e = v_1 v_2 \in E$}{
    Construct $G^+$\; 
    Define $\newweight{e_1}$ and $\newweight{e_2}$ as (\ref{eq:findQ01}) and (\ref{eq:findQ02})\;  
    Compute the shortest path tree $T$ from $r$ in $G^+$\; 
    $Q \coloneqq T - \{e_1, e_2\} + e$\; 
        \lIf{$Q$ is a spanning tree such that $\reconf{r_1}{r_2}{Q}$ and $\tuple{r_1, r_2, Q}$ is good}{
        \Return{$Q$}
        }
    }
    \Return{no solution exists}\;
    \caption{Algorithm for finding a spanning tree $Q$}\label{alg:03}
\end{algorithm}

\subsection{\texorpdfstring{Finding a pseudotree $Q$ with a cycle}{Finding a pseudotree Q with a cycle}}
\label{sec:Qhascycle}

We next consider the case when $Q$ contains a cycle. 
Suppose that $Q$ is a pseudotree with a cycle $\UniC{Q}$ such that $\reconf{r_1}{r_2}{Q}$ and $\tuple{r_1, r_2, Q}$ is good. 
Since $r_1$ and $r_2$ are on $\UniC{Q}$, there exist two edges 
$e= v_1 v_2$ and $e'= v'_1 v'_2$ in $E(\UniC{Q})$ such that 
$\dist{r_1}{v_1}{Q} < \dist{r_2}{v_1}{Q}$, $\dist{r_1}{v_2}{Q} > \dist{r_2}{v_2}{Q}$,  
$\dist{r_1}{v'_1}{Q} < \dist{r_2}{v'_1}{Q}$, and $\dist{r_1}{v'_2}{Q} > \dist{r_2}{v'_2}{Q}$. 
That is, each of $e$ and $e'$ is the edge containing the middle point of an $r_1$-$r_2$ path, where we note that 
$Q$ contains two $r_1$-$r_2$ paths. 
Let $Q_1$ and $Q_2$ be the connected components of $Q-\{e, e'\}$ such that $v_i, v'_i \in V(Q_i)$ for $i=1,2$. 
Define $G^+$ as the graph obtained from $G$ 
by adding a new vertex $r$ together with four new edges $e_1 = r v_1$, $e_2 = r v_2$, $e'_1 = r v'_1$, and $e'_2 = r v'_2$ (Figure~\ref{fig:702}). 
Set 
\begin{align}
&
\newweight{e_1} := 
\begin{cases}
\frac{1}{2} \newweight{e}  & \mbox{if $r_2 = p_e$,} \\
\dist{r_2}{v_2}{\UniC{Q}} + \newweight{e}  & \mbox{otherwise,} 
\end{cases}
\label{eq:defineell01}
\\
&
\newweight{e_2} := 
\begin{cases}
\frac{1}{2} \newweight{e}  & \mbox{if $r_1 = p_e$,} \\
\dist{r_1}{v_1}{\UniC{Q}} + \newweight{e}  & \mbox{otherwise,} 
\end{cases} 
\\
&
\newweight{e'_1} := 
\begin{cases}
\frac{1}{2} \newweight{e'}  & \mbox{if $r_2 = p_{e'}$,} \\
\dist{r_2}{v'_2}{\UniC{Q}} + \newweight{e'}  & \mbox{otherwise,} 
\end{cases}
\\
&
\newweight{e'_2} := 
\begin{cases}
\frac{1}{2} \newweight{e'}  & \mbox{if $r_1 = p_{e'}$,} \\
\dist{r_1}{v'_1}{\UniC{Q}} + \newweight{e'}  & \mbox{otherwise.} 
\end{cases} 
\label{eq:defineell04}
\end{align}
Note that $\newweight{e_1}$  (resp.~$\newweight{e'_1}$, $\newweight{e_2}$, $\newweight{e'_2}$) 
is the length of an $r_2$-$v_1$ path (resp.~an $r_2$-$v'_1$ path, an $r_1$-$v_2$ path, an $r_1$-$v'_2$ path) in $Q$.  
We now show the following claim. 

\begin{figure}[t!]
      \centering

\begin{tikzpicture}

\tikzset{tvertex/.style={auto=left,circle,fill=blue,minimum size=5pt,inner sep=0pt}} 
\tikzset{tleaf/.style={}} 
\tikzset{tedge/.style={blue}} 
    \pgfmathsetmacro{\width}{2.5}
    \pgfmathsetmacro{\height}{0.5*\width}
    \draw (0,0) ellipse ({\width} and {\height});
    \node[label = {[label distance=-2.5mm] -95:$r_1$}] at ($(0,0)+(-135:{\width} and {\height})$) (r1) {$\times$};
    \node[label = {[label distance=-2mm] -85:$r_2$}] at ($(0,0)+(-45:{\width} and {\height})$) (r2) {$\times$};

    \draw[thick, green] (0,0) [partial ellipse=80:100:{\width} and {\height}];
    \draw[thick, green] (0,0) [partial ellipse=-80:-100:{\width} and {\height}];

    \node[label = {[label distance=-1mm] -90:$e$}] at ($(0,0)+(-90:{\width} and {\height})$) (elabel)  {};
    \node[label = {[label distance=-1mm] 90:$e'$}] at ($(0,0)+(90:{\width} and {\height})$) (eplabel) {}; 

    \node[tvertex, label = {[label distance=-1mm] -95:$v_1$}] at ($(0,0)+(-100:{\width} and {\height})$) (v1) {}; 
    \node[tvertex, label = {[label distance=-1mm] -85:$v_2$}] at ($(0,0)+(-80:{\width} and {\height})$) (v2)  {};

    \node[tvertex, label = {[label distance=-1mm] 95:$v'_1$}] at ($(0,0)+(100:{\width} and {\height})$) (vp1) {}; 
    \node[tvertex, label = {[label distance=-1mm] 85:$v'_2$}] at ($(0,0)+(80:{\width} and {\height})$) (vp2)  {};

    \node[tvertex, label = {180:$r$}] at (0, 0) (r) {}; 

    \draw (r) -- node[midway, left] {$e_1$} (v1); 
    \draw (r) -- node[midway, right] {$e_2$} (v2); 
    \draw (r) -- node[midway, left] {$e'_1$} (vp1); 
    \draw (r) -- node[midway, right] {$e'_2$} (vp2); 

    \node[minimum size=0pt, inner sep=0pt] at ($(0,0)+( 50:{\width} and {\height})$) (t1) {}; 
    \node[minimum size=0pt, inner sep=0pt] at ($(0,0)+(150:{\width} and {\height})$) (t2) {}; 
    \node[minimum size=0pt, inner sep=0pt] at ($(0,0)+(210:{\width} and {\height})$) (t3) {}; 
    \node[minimum size=0pt, inner sep=0pt] at ($(0,0)+(300:{\width} and {\height})$) (t4) {}; 
    \node[minimum size=0pt, inner sep=0pt] at ($(t2) + (135:2em)$) (t2c) {}; 
    \node[minimum size=0pt, inner sep=0pt] at ($(t3) + (215:2em)$) (t3c) {}; 
    \node[minimum size=0pt, inner sep=0pt] at ($(t4) + (270:2em)$) (t4c) {}; 

    \draw (t1) -- +(30:2em); 
    \draw (t1) -- +(70:2em); 

    \draw  (t2) -- (t2c); 
    \draw (t2c) -- +( 90:1em); 
    \draw (t2c) -- +(210:1em); 

    \draw  (t3) -- (t3c); 
    \draw (t3c) -- +(190:1em); 
    \draw (t3c) -- +(250:1em); 

    \draw  (t4) -- (t4c); 
    \draw (t4c) -- +(260:1em); 
    \draw (t4c) -- +(340:1em); 
\end{tikzpicture}%

      \caption{Construction of $G^+$.}
      \label{fig:702}
\end{figure}


\begin{claim}
For each $v\in V$, it holds that $\vlabel{v}{Q} = \dist{r}{v}{G^+}$. 
Furthermore, $Q - \{e, e'\} + \{e_1, e_2, e'_1, e'_2\}$ contains the shortest path tree starting from $r$ in $G^+$.  
\end{claim}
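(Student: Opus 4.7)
The plan mirrors the argument for the spanning-tree case in Section~\ref{sec:Qhasnocycle}, with the cycle of $Q$ to be broken at both $e$ and $e'$ instead of at a single edge. I will first exhibit, for each $v \in V$, a concrete $r$-$v$ path of length $\vlabel{v}{Q}$ inside the subgraph $Q - \{e, e'\} + \{e_1, e_2, e'_1, e'_2\}$, and then invoke the Bellman-Ford optimality criterion on $G^+$ to promote these paths to shortest paths, with the feasibility of the potentials coming from condition~(\ref{enum:cond:good:a}) of goodness.

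For the first step, let $Q_1$ and $Q_2$ be the two trees into which $Q$ decomposes after the deletion of $e$ and $e'$. The defining inequalities of $e, e'$, together with condition~(\ref{enum:cond:good:b}) of goodness placing $r_1, r_2$ on $\UniC{Q}$, ensure that $r_1, v_1, v'_1 \in V(Q_1)$ and $r_2, v_2, v'_2 \in V(Q_2)$. For $v \in V(Q_1)$, the two $r_2$-$v$ paths in $Q$ traverse $e$ and $e'$ respectively, and by (\ref{eq:defineell01})--(\ref{eq:defineell04}) their lengths coincide with $\newweight{e_1} + \dist{v_1}{v}{Q_1}$ and $\newweight{e'_1} + \dist{v'_1}{v}{Q_1}$, which are exactly the two $r$-$v$ path lengths available in $Q - \{e, e'\} + \{e_1, e_2, e'_1, e'_2\}$. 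A triangle-inequality argument, analogous to the one used in the spanning-tree case, shows $\dist{r_1}{v}{Q} < \dist{r_2}{v}{Q}$ for every $v \in V(Q_1)$, so $\vlabel{v}{Q} = \dist{r_2}{v}{Q}$ equals the shorter of the two path lengths above and is attained inside the subgraph; the case $v \in V(Q_2)$ is symmetric.

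For the second step, extend the labels by $\vlabel{r}{} := 0$ and verify the Bellman-Ford feasibility inequality on every edge of $G^+$. On edges of $E$ this is precisely condition~(\ref{enum:cond:good:a}) of goodness. On each new edge $e_i = r v_i$, the definition makes $\newweight{e_i}$ the length of a specific $r_j$-$v_i$ path in $Q$ (with $\{i,j\} = \{1,2\}$), and because $v_i$ lies across the cycle from $r_j$ we have $\vlabel{v_i}{Q} = \dist{r_j}{v_i}{Q} \le \newweight{e_i}$; the edges $e'_1, e'_2$ are handled identically. With feasible potentials and realizing paths in hand, the standard Bellman-Ford optimality conclusion yields $\vlabel{v}{Q} = \dist{r}{v}{G^+}$, and the paths from Step~1 witness that a shortest-path tree from $r$ in $G^+$ lives inside $Q - \{e, e'\} + \{e_1, e_2, e'_1, e'_2\}$. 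The step I expect to require the most care is the identity $\vlabel{v}{Q} = \dist{r_2}{v}{Q}$ on $V(Q_1)$ (and its mirror on $V(Q_2)$): because $Q$ now contains a cycle, one must rule out that either an $r_1$-$v$ path or the two competing $r_2$-$v$ detours around the cycle beat the expected bound, and this requires combining the defining inequalities at $v_1, v'_1$ with triangle inequalities inside the tree $Q_1$, rather than reusing the tree-case argument verbatim.
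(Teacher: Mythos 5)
Your proposal is correct and takes essentially the same route as the paper's own proof: for each $v$ you exhibit an $r$-$v$ path of length $\vlabel{v}{Q}$ inside $Q - \{e, e'\} + \{e_1, e_2, e'_1, e'_2\}$ and then use the goodness inequality as Bellman--Ford feasibility to conclude $\vlabel{v}{Q} = \dist{r}{v}{G^+}$ and containment of the shortest path tree. The only difference is that you make explicit two points the paper leaves implicit (feasibility on the new edges $e_1, e_2, e'_1, e'_2$, and the identity $\vlabel{v}{Q} = \dist{r_2}{v}{Q}$ for $v \in V(Q_1)$ and its mirror on $V(Q_2)$), and both of your sketches for these are sound.
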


\begin{proof}
Since 
$\vlabel{v}{Q} = \dist{r_2}{v}{Q} =  \min \{ \newweight{e_1} + \dist{v_1}{v}{Q_1},\  \newweight{e'_1} + \dist{v'_1}{v}{Q_1} \}$ for $v \in V(Q_1)$ and 
$\vlabel{v}{Q} = \dist{r_1}{v}{Q} = \min \{ \newweight{e_2} + \dist{v_2}{v}{Q_2},\  \newweight{e'_2} + \newweight{e'} + \dist{v'_2}{v}{Q_2} \}$ for $v \in V(Q_2)$,  
for each $v \in V$, $Q - \{e, e'\} + \{e_1, e_2, e'_1, e'_2\}$ contains an $r$-$v$ path $P_v$
whose length is $\vlabel{v}{Q}$. 
Furthermore, since $\vlabel{v}{Q} \le \vlabel{u}{Q} + \newweight{uv}$ for any $uv \in E$, 
$\vlabel{v}{Q}$ is equal to the shortest path length from $r$ to $v$ in $G^+$, that is, 
$P_v$ is the unique shortest $r$-$v$ path in $G^+$. 
Since the shortest path tree is the union of $P_v$'s, 
it is contained in $Q - \{e, e'\} + \{e_1, e_2, e'_1, e'_2\}$. 
\end{proof}

This claim shows that
there exist two edges $f_1= u_1 u'_1$ and $f_2= u_2 u'_2$ in $E(Q) \cup \{e_1, e_2, e'_1, e'_2\}$ such that 
$Q + \{e_1, e_2, e'_1, e'_2\} - \{e, e', f_1, f_2\}$ is 
the shortest path tree starting from $r$ in $G^+$. 
Without loss of generality, we may assume that $f_i \in E(Q_i) \cup \{e_i, e'_i\}$ for $i=1, 2$, because 
$E(Q_i) \cup \{e_i, e'_i\}$ contains a cycle.
Furthermore, for $i=1, 2$, we may assume that $v_i, u_i, u'_i$, and $v'_i$ appear in this order along $\UniC{Q}$ if $f_i \in E(Q_i)$. 
For $i=1,2$, we observe the following. 
\begin{itemize}
    \item If $f_i \in \{e_i, e'_i\}$, then $Q_i[v_i, v'_i]$ is the shortest $v_i$-$v'_i$ path in $G$, because it is a subpath of $P_{v_i}$ or $P_{v'_i}$, where 
    $P_v$ is the unique shortest $r$-$v$ path in $G^+$ for $v \in V$. 
    \item If $f_i \in E(Q_i)$, then $Q_i[v_i, u_i]$  is the shortest $v_i$-$u_i$ path in $G$ because it is a subpath of $P_{u_i}$,  
    and $Q_i[v'_i, u'_i]$ is the shortest $v'_i$-$u'_i$ path in $G$ because it is a subpath of $P_{u'_i}$ (Figure~\ref{fig:703}).
\end{itemize}

\begin{figure}[t!]
      \centering

\begin{tikzpicture}

    \tikzset{tvertex/.style={auto=left,circle,fill=blue,minimum size=5pt,inner sep=0pt}} 
    \tikzset{tleaf/.style={}} 
    \tikzset{tedge/.style={blue}} 
        \pgfmathsetmacro{\width}{2.5}
        \pgfmathsetmacro{\height}{0.5*\width}
        \draw (0,0) ellipse ({\width} and {\height});
        \node[label = {[label distance=-2mm] -100:$r_1$}] at ($(0,0)+(-135:{\width} and {\height})$) (r1) {$\times$};
        \node[label = {[label distance=-2mm] -80:$r_2$}] at ($(0,0)+(-45:{\width} and {\height})$) (r2) {$\times$};

        \node[tvertex, label={180:$r$}] at (0, 0) (r) {}; 
    
        \draw[thick, green] (0,0) [partial ellipse=160:260:{\width} and {\height}];
        \draw[thick, green] (0,0) [partial ellipse=-80:20:{\width} and {\height}];
        \draw[thick, green] (0,0) [partial ellipse=100:140:{\width} and {\height}];
        \draw[thick, green] (0,0) [partial ellipse=40:80:{\width} and {\height}];
        \draw[thick, green] ($(0,0)+(-100:{\width} and {\height})$) -- (r); 
        \draw[thick, green] ($(0,0)+(-80:{\width} and {\height})$) -- (r); 
        \draw[thick, green] ($(0,0)+(100:{\width} and {\height})$) -- (r); 
        \draw[thick, green] ($(0,0)+(80:{\width} and {\height})$)  -- (r);

        \node[tvertex, label={-90:$v_1$}] at ($(0,0)+(260:{\width} and {\height})$) (v1) {}; 
        \node[tvertex, label={-90:$v_2$}] at ($(0,0)+(280:{\width} and {\height})$) (v2)  {};
    
        \node[tvertex, label={90:$v'_1$}] at ($(0,0)+(100:{\width} and {\height})$) (vp1) {}; 
        \node[tvertex, label={90:$v'_2$}] at ($(0,0)+(80:{\width} and {\height})$) (vp2)  {};

        \node[tvertex, label={160:$u_1$}]  at ($(0,0)+(160:{\width} and {\height})$) (u1) {}; 
        \node[tvertex, label={140:$u'_1$}] at ($(0,0)+(140:{\width} and {\height})$) (up1) {}; 
        \node[label={[label distance=-5pt]-45:$f_1$}] at ($(0,0)+(150:{\width} and {\height})$) (f1label) {}; 
        \node[tvertex, label={40:$u'_2$}] at ($(0,0)+(40:{\width} and {\height})$)  (up2)  {};
        \node[tvertex, label={20:$u_2$}]  at ($(0,0)+(20:{\width} and {\height})$)  (u2)  {};
        \node[label={[label distance=-5pt]-135:$f_2$}] at ($(0,0)+(30:{\width} and {\height})$) (f2label) {};

        \draw[thick, green] (r) -- (v1); 
        \draw[thick, green] (r) -- (v2); 
        \draw[thick, green] (r) -- (vp1); 
        \draw[thick, green] (r) -- (vp2); 
    
        \node[label = {[label distance=-2mm] 100:$P_{u_2}$}] at ($(0,0)+(-50:{\width} and {\height})$) (pu2) {};
        \node[label = {[label distance=-2mm] 80:$P_{u_1}$}] at ($(0,0)+(-130:{\width} and {\height})$) (pu1) {};

        \node[label = {-90:$P_{u'_2}$}] at ($(0,0)+(75:{\width} and {\height})$) (pup2) {};
        \node[label = {-90:$P_{u'_1}$}] at ($(0,0)+(105:{\width} and {\height})$) (pup1) {};
    \end{tikzpicture}%

      \caption{Paths $P_{u_1}$, $P_{u'_1}$, $P_{u_2}$, and $P_{u'_2}$.}
      \label{fig:703}
\end{figure}

We are now ready to describe our algorithm. 
In our algorithm, we first guess four vertices $v_1$, $v_2$, $v'_1$, and $v_2$ in $V$ with $e=v_1 v_2 \in E$ and $e'=v'_1 v'_2 \in E$, construct $G^+$, and 
guess four vertices $u_1$, $u'_1$, $u_2$, and $u'_2$ in $V$ with $f_1= u_1 u'_1 \in E \cup \{e_1, e'_1\}$ and $f_2= u_2 u'_2 \in E \cup \{e_2, e'_2\}$. 
For $i=1, 2$, let $J_i$ be the $v_i$-$v'_i$ walk defined by  
\begin{equation}
    J_i := 
    \begin{cases}
    P^*_{v_i, v'_i} & \mbox{if $f_i \in \{e_i, e'_i\}$,} \\
    P^*_{v_i, u_i} \circ \{f_i\} \circ P^*_{u'_i, v'_i} & \mbox{if $f_i \in E$,}
    \end{cases} 
\label{eq:defJ}
\end{equation} 
where $P^*_{x, y}$ denotes the shortest $x$-$y$ path in $G$ for $x,y \in V$ and $\circ$ denotes the concatenation of walks. 
Define a closed walk $C$ as
\begin{equation}
    C := J_1 \circ \{e'\} \circ \overline{J_2} \circ \{e\}, 
\label{eq:defC}
\end{equation} 
where $\overline{J_2}$ is the reverse walk of $J_2$. 
Note that if a desired pseudotree $Q$ exists and if $v_1, v_2, v'_1, v'_2, u_1, u'_1, u_2$, and $u'_2$ are guessed correctly, then 
$C = \UniC{Q}$ holds by the above arguments. 
Therefore, it suffices to consider the case when $C$ is a cycle containing $r_1$ and $r_2$. 
We define $\ell(e_1), \ell(e_2), \ell(e'_1)$, and $\ell(e'_2)$ as 
(\ref{eq:defineell01})--(\ref{eq:defineell04}) in which $\UniC{Q}$ is replaced with $C$. 
Finally, we compute the shortest path tree $T$ from $r$ in $G^+$, define $Q:= (T + \{e, e', f_1, f_2\}) - \{e_1, e_2, e'_1, e'_2\}$, and 
check whether $Q$ satisfies the constraints or not. 
See Algorithm~\ref{alg:04} for a pseudocode of our algorithm. 

By the above arguments, this algorithm finds a desired pseudotree $Q$ if $v_1$, $v_2$, $v'_1$, $v'_2$, $u_1$, $u'_1$, $u_2$, and $u'_2$ are guessed correctly, 
which shows the correctness of the algorithm. 
Since the number of choices of these vertices is $O(|E|^4)$, this algorithm runs in polynomial time.

\begin{algorithm}
    \KwInput{ A graph $G$ with two distinct points $r_1$ and $r_2$ in $V \cup R$. }
    \KwOutput{ A pseudotree $Q$ with a cycle such that $\reconf{r_1}{r_2}{Q}$ and $\tuple{r_1, r_2, Q}$ is good. }
    \For{$v_1, v_2, v'_1, v'_2 \in V$ with $e = v_1 v_2 \in E $ and $e'= v'_1 v'_2 \in E$}{
        Construct $G^+$\;
        \For{$u_1, u'_1, u_2, u'_2 \in V$ with $f_1 = u_1 u'_1 \in E \cup \{e_1, e'_1\}$ and $f_2 = u_2 u'_2 \in E \cup \{e_2, e'_2\}$}{
            Define $J_1, J_2$, and $C$ as (\ref{eq:defJ}) and (\ref{eq:defC})\; 
            \If{$C$ is a cycle containing $r_1$ and $r_2$}{
                Define $\ell(e_1), \ell(e_2), \ell(e'_1)$, and $\ell(e'_2)$ as 
                    (\ref{eq:defineell01})--(\ref{eq:defineell04}) in which $\UniC{Q}$ is replaced with $C$\;  
                Compute the shortest path tree $T$ from $r$ in $G^+$\; 
                $Q:= (T + \{e, e', f_1, f_2\}) - \{e_1, e_2, e'_1, e'_2\}$\; 
                \lIf{$Q$ is a pseudotree with a cycle such that $\reconf{r_1}{r_2}{Q}$ and $\tuple{r_1, r_2, Q}$ is good}{
                    \Return{$Q$}
                }
            }
        }
    }
    \Return{no solution exists}\;
    \caption{Algorithm for finding a pseudotree $Q$}\label{alg:04}
\end{algorithm}

\section{Proof of Theorem~\ref{thm:good:sequence}}
\label{sec:goodsequence}

For points $r_1, r_2\in V \cup R$ and a pseudotree $Q$ with $\reconf{r_1}{r_2}{Q}$, we define $\f{r_1, r_2, Q} :=  (\newweight{\UniC{Q}}, \weight{r_1, r_2, Q})$, where $\weight{r_1, r_2, Q} = \sum_{v\in V}\vlabel{v}{r_1, r_2, Q}$.
Here, we denote $\newweight{\UniC{Q}} := \sum_{e \in E(\UniC{Q})} \newweight{e}$ if $Q$ contains a cycle and $\newweight{\UniC{Q}} := {\bf 0}$ otherwise.  
For two triplets $\tuple{r_1, r_2, Q}$ and $\tuple{r'_1, r'_2, Q'}$, we compare $\f{r_1, r_2, Q}$ and $\f{r'_1, r'_2, Q'}$ lexicographically, that is, 
we denote $\f{r_1, r_2, Q} < \f{r'_1, r'_2, Q'}$ if and only if either $\newweight{\UniC{Q}} < \newweight{\UniC{Q'}}$ or $\newweight{\UniC{Q}} = \newweight{\UniC{Q'}}$ and $\weight{r_1, r_2, Q} < \weight{r'_1, r'_2, Q'}$. 

To show Theorem~\ref{thm:good:sequence}, we apply the induction on $\f{r_1, r_2, Q}$ by using the 
following proposition, which roughly says that we can reduce to the case with smaller $\f{\cdot}$ if $\tuple{r_1, r_2, Q}$ is \emph{not} good.



\begin{proposition}
            \label{prop:getting:better}
      Let $r_1, r_2\in V \cup R$ be points and $Q$ be a pseudotree with $\reconf{r_1}{r_2}{Q}$. 
      If $\tuple{r_1, r_2, Q}$ is not good,
      then one of the followings holds:
      \newlist{enumcond}{enumerate}{1}
      \setlist[enumcond,1]{label=\textbf{B\arabic*.}, ref=\textbf{B\arabic*}}
      \begin{enumcond}[leftmargin=*]
            \item there exists a pseudotree $Q'$ such that $\reconf{r_1}{r_2}{Q'}$ with $\f{r_1, r_2, Q'} < \f{r_1, r_2, Q}$ or  \label{prop:stmt:case:1}
            \item there exist a point $r_0 \in V \cup R$ and  pseudotrees $Q_1, Q_2$ such that \label{prop:stmt:case:2}
                  \begin{itemize}
                        \item $\reconf{r_1}{r_0}{Q_1}$ with $\f{r_1, r_0, Q_1} < \f{r_1, r_2, Q}$ and
                        \item $\reconf{r_0}{r_2}{Q_2}$ with $\f{r_0, r_2, Q_2} < \f{r_1, r_2, Q}$.
                  \end{itemize}
      \end{enumcond}
\end{proposition}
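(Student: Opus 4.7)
The plan is to split into cases based on which of conditions \ref{enum:cond:good:a} and \ref{enum:cond:good:b} fails, and in each case perform a local surgery on $Q$ that either yields a better pseudotree with the same pair of centers (giving option \ref{prop:stmt:case:1}) or exposes an intermediate center point $r_0$ that splits the transformation into two strictly smaller pieces (option \ref{prop:stmt:case:2}).

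Suppose first that \ref{enum:cond:good:a} fails at some edge $uv \in E$, so $\vlabel{v}{Q} > \vlabel{u}{Q} + \newweight{uv}$. Without loss of generality the maximum in $\vlabel{v}{Q}$ is realized by $r_2$, which yields $\dist{r_2}{v}{Q} > \dist{r_2}{u}{Q} + \newweight{uv}$. A one-line triangle-inequality argument in $Q$ then forces $uv \notin E(Q)$. I would form $Q + uv$ (which has at most one more cycle than $Q$) and delete an edge $e$ of $Q + uv$ lying on $\unip{Q}{r_2}{v}$, as close to $v$ as possible, subject to $e$ lying on a cycle of $Q + uv$. In the resulting pseudotree $Q'$, the $r_2$-to-$v$ distance strictly drops, since the route $\unip{Q}{r_2}{u}$ followed by $uv$ has length $\dist{r_2}{u}{Q} + \newweight{uv} < \dist{r_2}{v}{Q}$, while no other $\vlabel{\cdot}{\cdot}$ can increase. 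Hence $r_1, r_2 \in \inner{Q'}$ and $\weight{r_1, r_2, Q'} < \weight{r_1, r_2, Q}$. When $e$ can be chosen so that $\newweight{\UniC{Q'}} \le \newweight{\UniC{Q}}$, this already establishes option \ref{prop:stmt:case:1}; otherwise the surgery has inflated the cycle and I fall back on option \ref{prop:stmt:case:2} with $r_0$ taken at the midpoint of $uv$, letting $Q_1$ and $Q_2$ be the two pseudotrees that keep only one ``side'' of the new chord.

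Suppose next that \ref{enum:cond:good:b} fails: $\UniC{Q}$ exists and, without loss of generality, $r_1 \notin V(\UniC{Q}) \cup R(\UniC{Q})$. Let $w \in V(\UniC{Q})$ be the attachment point of the branch containing $r_1$. Every path $\unip{Q}{r_1}{y}$ passes through $w$, and for $y \in V(\UniC{Q})$ it uses the strictly shorter of the two $w$-$y$ arcs (strictness coming from the perturbation $\newweight{\cdot}$). Hence there is a unique ``antipodal'' edge $e \in E(\UniC{Q})$ lying on no shortest $w$-to-cycle-vertex arc, and $Q' := Q - e$ is a spanning tree with $\dist{r_1}{y}{Q'} = \dist{r_1}{y}{Q}$ for every $y$; in particular $r_1 \in \inner{Q'}$. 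If moreover $r_2 \in \inner{Q'}$, then option \ref{prop:stmt:case:1} holds because $\newweight{\UniC{Q'}} = {\bf 0} < \newweight{\UniC{Q}}$. Otherwise $r_2$ must lie on $\UniC{Q}$ and some shortest $r_2$-to-$y$ path used $e$; here I plan to invoke option \ref{prop:stmt:case:2} with $r_0$ chosen on $\UniC{Q}$ between $w$ and $r_2$ (for instance, the midpoint of the arc on the opposite side of $e$ from $r_2$), with $Q_1$ and $Q_2$ obtained by deleting one carefully chosen cycle edge each, so that the first coordinate of $f$ vanishes on both sides.

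The main obstacle is the bookkeeping on the first coordinate $\newweight{\UniC{\cdot}}$ of $f$. In case \ref{enum:cond:good:a} with $Q$ already containing a cycle, the chord swap interacts with $\UniC{Q}$ through several subcases, and producing an $e$ that both shortens $\dist{r_2}{v}{\cdot}$ and does not inflate the cycle needs a delicate analysis; when no such $e$ exists, we must exhibit the intermediate $r_0$ and verify $\f{r_1, r_0, Q_1} < \f{r_1, r_2, Q}$ and $\f{r_0, r_2, Q_2} < \f{r_1, r_2, Q}$. Case \ref{enum:cond:good:b} requires an analogous analysis ruling out the possibility that no single cycle deletion preserves both $r_1$ and $r_2$ as centers. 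In every branch where option \ref{prop:stmt:case:1} fails, the fallback is to produce an $r_0$ for which both triples have strictly smaller $f$, which is exactly what will drive the induction in \cref{thm:good:sequence}.
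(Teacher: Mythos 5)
The high-level strategy (local edge surgery to decrease $\f{\cdot}$, or else exhibit an intermediate center $r_0$) is the paper's, but the proposal leaves a genuine gap exactly where the paper's proof does almost all of its work: the case where $Q$ has a cycle, both $r_1$ and $r_2$ lie on $\UniC{Q}$, and the label condition fails at a chord $uv\notin E(Q)$. Your key assertion that, after adding $uv$ and deleting an edge $e$ chosen on the shortest $r_2$-$v$ path, ``no other $\vlabel{\cdot}{\cdot}$ can increase'' is unjustified and false in general: since $\vlabel{x}{Q}=\max\{\dist{r_1}{x}{Q},\dist{r_2}{x}{Q}\}$, an edge chosen only with respect to $r_2$ can, once deleted, increase distances from $r_1$ (and from $r_2$ to vertices cut off by $e$), so $r_1,r_2\in\inner{Q'}$ does not follow. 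This is precisely why the paper introduces the farthest points $q_1,q_2$, the equidistant points $g_1,g_2$ with identity (\ref{eq:labelbymiddlepoint}), the attachment vertex $w$ of $u$'s branch, and runs a position case analysis (its Claim~\ref{prop:getting:better:claim:for:cond:two} plus Cases 1--3), some branches yielding \ref{prop:stmt:case:1} and others \ref{prop:stmt:case:2}. Your fallback for \ref{prop:stmt:case:2} --- take $r_0$ to be the midpoint of $uv$ and let $Q_1,Q_2$ ``keep one side of the chord'' --- is not supported by any argument and has no reason to work: nothing makes the midpoint of the violating chord a center point of anything, whereas in the paper $r_0$ is placed on a specific $g_1$-$g_2$ (or $g_2$-$v$) path $P$ so that $\distorig{r_0}{g_1}{P}\le\bar{\ell}_1$ and $\distorig{r_0}{g_2}{P}\le\bar{\ell}_2$, and verifying $r_0\in\inner{Q_1}\cap\inner{Q_2}$ together with the strict drop in cycle length is the substance of the proof. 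Your own closing paragraph concedes these verifications are missing; but they are not bookkeeping, they are the proposition. (Even in the spanning-tree case your monotonicity claim needs the paper's midpoint trick, i.e., bounding new distances by $\dist{r_i}{r}{Q}+\dist{r}{x}{Q}=\vlabel{x}{Q}$ rather than by old distances, though there the argument does go through.)

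The case where $\UniC{Q}$ misses a center is also only half done, and here you overcomplicate a situation the paper resolves in two lines. Deleting the cycle edge ``antipodal'' to the attachment vertex $w$ and getting \ref{prop:stmt:case:1} when $r_2$ remains a center is fine, but the ``otherwise'' branch is again only a plan (and the inference that then $r_2$ must lie on $\UniC{Q}$ is wrong --- $r_2$ may hang off the cycle in another branch). The paper needs no deletion at all: take $r_0:=w$, the unique vertex of $V(\subT{Q}{r_1})\cap V(\UniC{Q})$, and $Q_1=Q_2=Q$; then $\dist{r_0}{x}{Q}\le\vlabel{x}{Q}$ for all $x$ (strictly on $r_1$'s branch), so $r_0\in\inner{Q}$, both $\weight{r_1,r_0,Q}$ and $\weight{r_0,r_2,Q}$ strictly decrease, and $\newweight{\UniC{Q}}$ is unchanged, which already gives \ref{prop:stmt:case:2}. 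Finally, since $\reconf{r_1}{r_0}{Q_1}$ presupposes $r_0\neq r_1$, the degenerate possibilities $r_0\in\{r_1,r_2\}$ must be intercepted (the paper folds them back into \ref{prop:stmt:case:1}); your proposal does not address this.
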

\begin{proof}
      Suppose that $(r_1, r_2, Q)$ is not good.

      We first consider the case when $Q$ is a spanning tree.  
      Let $r$ be the middle point of $r_1$ and $r_2$ with respect to $\newweight{\cdot}$. Then, $\vlabel{v}{Q}= \dist{r_1}{r}{Q} + \dist{r}{v}{Q}$ for $v \in V$. 
      Since $(r_1, r_2, Q)$ is not good, there exists an edge $uv \in E$ such that $\vlabel{u}{Q} + \newweight{uv} < \vlabel{v}{Q}$. 
      Let $e$ be the unique edge on $Q[r,v]$ that is incident to $v$.  
      Then, $Q':= Q - \{ e \} + \{uv\}$ is a spanning tree such that $\dist{r}{x}{Q'} \le \dist{r}{x}{Q}$ for any $x \in V$. 
      Therefore, 
      $$
      \dist{r_i}{x}{Q'}  \le  \dist{r_i}{r}{Q'} + \dist{r}{x}{Q'}  \le \dist{r_i}{r}{Q} + \dist{r}{x}{Q} \\ 
                           =  \max\{\dist{r_1}{x}{Q},\ \dist{r_2}{x}{Q} \}  
      $$
      for any $x \in V$ and for $i=1,2$. 
      Since $\reconf{r_1}{r_2}{Q}$ implies that $\distorig{r_i}{x}{Q} \le \frac{d}{2}$, we obtain $\distorig{r_i}{x}{Q'} \le \frac{d}{2}$ for $x \in V$ and for $i=1,2$. 
      This shows that 
      $\reconf{r_1}{r_2}{Q'}$. 
      Furthermore, since $\vlabel{x}{Q'} \le \vlabel{x}{Q}$ for any $x \in V$ and $\vlabel{v}{Q'} < \vlabel{v}{Q}$, we have that $\weight{r_1, r_2, Q'} < \weight{r_1, r_2, Q}$ and $\f{r_1, r_2, Q'} < \f{r_1, r_2, Q}$. 
      Thus, $Q'$ satisfies the conditions in \cref{prop:stmt:case:1}.

      We next consider the case when $Q$ contains a cycle $\UniC{Q}$ and at least one of 
      $r_1$ and $r_2$ is not contained in $\UniC{Q}$. 
      Without loss of generality, we assume that $r_1 \notin V(\UniC{Q})$. 
      Define $T$ as the shortest path tree in $Q$ starting from $r_1$. 
      Let $\subT{Q}{r_1}$ be the connected component of $Q-E(\UniC{Q})$ that contains $r_1$. 
      If $r_2 \in V(\subT{Q}{r_1})$, then 
      $\vlabel{x}{T} = \vlabel{x}{Q}$ for any $x \in V$ and $\f{r_1, r_2, T} < \f{r_1, r_2, Q}$, and hence $Q':=T$ satisfies the conditions in \cref{prop:stmt:case:1}. 
      Otherwise, let $r_0$ be the unique vertex in $V(\subT{Q}{r_1}) \cap V(\UniC{Q})$, where we note that $r_0\not=r_1$ and $r_0 \not= r_2$ (Figure~\ref{fig:801}). 
      Since $\dist{r_0}{x}{Q} < \dist{r_2}{x}{Q}$ for $x \in V(\subT{Q}{r_1})$ and $\dist{r_0}{x}{Q} < \dist{r_1}{x}{Q}$ for $x \in V \setminus V(\subT{Q}{r_1})$, 
      it holds that $\dist{r_0}{x}{Q} < \max\{\dist{r_1}{x}{Q}, \ \dist{r_2}{x}{Q} \} = \vlabel{x}{Q}$ for $x \in V(\subT{Q}{r_1})$. 
      Therefore, $\distorig{r_0}{x}{Q} \le \frac{d}{2}$ for any $x \in V$, and hence $\reconf{r_1}{r_0}{Q}$ and $\reconf{r_0}{r_2}{Q}$. 
      Furthermore, since $\weight{r_1, r_0, Q} < \weight{r_1, r_2, Q}$ and $\weight{r_0, r_2, Q} < \weight{r_1, r_2, Q}$, 
      we obtain $\f{r_1, r_0, Q} < \f{r_1, r_2, Q}$ and $\f{r_0, r_2, Q} < \f{r_1, r_2, Q}$.  
      This shows that $Q_1:=Q_2:=Q$ and $r_0$ satisfy the conditions in \cref{prop:stmt:case:2}.
      
      \begin{figure}[t!]
      \centering

\begin{tikzpicture}

\tikzset{tvertex/.style={auto=left,circle,fill=blue,minimum size=5pt,inner sep=0pt}} 
\tikzset{tleaf/.style={}} 
\tikzset{tedge/.style={blue}} 
    \pgfmathsetmacro{\width}{2.5}
    \pgfmathsetmacro{\height}{0.5*\width}
    \draw (0,0) ellipse ({\width} and {\height});

    \node[fill=red, circle, minimum size=5pt, inner sep=0pt, label={0:$r_2$}] at ($(0,0)+( 50:{\width} and {\height})$) (t1) {}; 
    \node[minimum size=0pt, inner sep=0pt] at ($(0,0)+( 20:{\width} and {\height})$) (t2) {}; 
    \node[minimum size=0pt, inner sep=0pt] at ($(0,0)+(-70:{\width} and {\height})$) (t3) {};

    \node[fill=red, circle, minimum size=5pt, inner sep=0pt, label={0:$r_0$}] at ($(0,0)+(190:{\width} and {\height})$) (r0) {}; 
    \node[minimum size=0pt, inner sep=0pt] at ([shift=({180:1.5em})]r0) (r0c1) {}; 
    \node[fill=red, circle, minimum size=5pt, inner sep=0pt, label={[label distance=-0.5mm] -90:$r_1$}]  at ([shift=({135:1.5em})]r0c1) (r0c2) {}; 
    \node[minimum size=0pt, inner sep=0pt] at ([shift=({250:1.5em})]r0c1) (r0c3) {}; 

    \node[minimum size=0pt, inner sep=0pt] at ($(t1) + (50:2em)$) (t1c) {}; 
    \node[minimum size=0pt, inner sep=0pt] at ($(t2) + (20:2em)$) (t2c) {}; 
    \node[minimum size=0pt, inner sep=0pt] at ($(t3) + (-60:1em)$) (t3c) {}; 

    \node[minimum size=0pt, inner sep=0pt,label={-30:$C_Q$}] at ($(0,0) + (-30:{\width} and {\height})$) (cqlabel) {}; 

    \draw  (t1) -- (t1c); 
    \draw (t1c) -- +(90:1em); 
    \draw (t1c) -- +(20:1em); 

    \draw  (t2) -- (t2c); 
    \draw (t2c) -- +( 45:1em); 
    \draw (t2c) -- +(-10:1em); 

    \draw  (t3) -- (t3c); 
    \draw (t3c) -- +(-30:1em); 
    \draw (t3c) -- +(-100:1em); 

    \node[minimum size=0pt, inner sep=0pt] at ([shift=({80:1.5em})]r0c2) (r0l1) {}; 
    \node[minimum size=0pt, inner sep=0pt] at ([shift=({175:1.5em})]r0c2) (r0l2) {}; 
    \node[minimum size=0pt, inner sep=0pt] at ([shift=({225:1.5em})]r0c3) (r0l3) {}; 
    \node[minimum size=0pt, inner sep=0pt] at ([shift=({290:1.5em})]r0c3) (r0l4) {}; 
    \draw (r0) -- (r0c1); 
    \draw (r0c1) -- (r0c2); 
    \draw (r0c1) -- (r0c3); 
    \draw (r0c2) -- +(80:1em); 
    \draw (r0c2) -- +(175:1em); 
    \draw (r0c3) -- +(225:1em); 
    \draw (r0c3) -- +(290:1em); 

    \node[minimum size=0pt, inner sep=0pt] at ([shift=({315:1.5em})]r0l4) (qr1label) {$Q_{r_1}$}; 
    \draw[dashed] \convexpath{r0,r0l4,r0l3,r0l2,r0l1}{4pt}; 
\end{tikzpicture}%

      \caption{Case when $r_1 \notin V(\UniC{Q})$ and $r_2 \not\in V(\subT{Q}{r_1})$.}
      \label{fig:801}
      \end{figure}
      
      In what follows, we consider the case when $Q$ contains a cycle $\UniC{Q}$ and both $r_1$ and $r_2$ are contained in $\UniC{Q}$.
      Since $(r_1, r_2, Q)$ is not good, there exists an edge $uv \in E(Q)$ such that $\vlabel{v}{Q} > \vlabel{u}{Q} + \newweight{uv}$. 
      If $v \notin V(\UniC{Q})$, then 
      $Q' := Q - \set{e} + \set{uv}$ satisfies the conditions in \cref{prop:stmt:case:1}, where $e$ is the unique edge on $Q[r_1, v]$ that is incident to $v$. 
      This is because $\UniC{Q} = \UniC{Q'}$ and 
      $\vlabel{x}{Q'} < \vlabel{x}{Q}$ for any $x \in V$ with $v \in V(Q[r_1, x])$ (see Figure~\ref{fig:802}). 
      Thus, it suffices to consider the case when $v \in V(\UniC{Q})$. 
      For $i=1,2$, let $q_i$ be the farthest point on $\UniC{Q}$ from $r_i$, that is, $\dist{r_i}{q_i}{Q} = \frac{1}{2} \newweight{\UniC{Q}}$, 
      and let $e_i \in E(Q)$ be the edge containing $q_i$. 
      Then, $Q-e_i$ is the shortest path tree in $Q$ starting from $r_i$. 
      We may assume that $e_1\not= e_2$, since otherwise $Q':=Q-e_1$ satisfies the conditions in \cref{prop:stmt:case:1}.
      Let $g_1$ and $g_2$ be the points in $\UniC{Q}$ such that $\dist{g_i}{r_1}{Q} = \dist{g_i}{r_2}{Q} =: \ell_i$ for $i = 1, 2$ (Figure~\ref{fig:803}). 
      Then, we observe that 
      \begin{equation}
      \label{eq:labelbymiddlepoint}
          \vlabel{x}{Q} = \max \{\dist{r_1}{x}{Q},\ \dist{r_2}{x}{Q} \} =  \min \{ \ell_1 + \dist{g_1}{x}{Q},\ \ell_2 + \dist{g_2}{x}{Q} \}
      \end{equation}
      for $x \in V$.
      Let $w$ be the vertex in $V(\subT{Q}{u}) \cap V(\UniC{Q})$, where $\subT{Q}{u}$ is the connected component of $Q-E(\UniC{Q})$ containing $u$. 
      Note that if $u$ is in $V(\UniC{Q})$, then $w = u$.
      We now show the following claim, which is useful in our case analysis. 

\begin{figure}[t!]
      \centering

\begin{tikzpicture}

\tikzset{tvertex/.style={auto=left,circle,fill=blue,minimum size=5pt,inner sep=0pt}} 
\tikzset{tleaf/.style={}} 
\tikzset{tedge/.style={blue}} 
    \pgfmathsetmacro{\width}{2.5}
    \pgfmathsetmacro{\height}{0.5*\width}
    \draw (0,0) ellipse ({\width} and {\height});

    \node[label = {[label distance=-2mm] -100:$r_1$}] at ($(0,0)+(-135:{\width} and {\height})$) (r1) {$\times$};
    \node[label = {[label distance=-2mm] -80:$r_2$}] at ($(0,0)+(-45:{\width} and {\height})$) (r2) {$\times$};

    \node[minimum size=0pt, inner sep=0pt] at ($(0,0)+(190:{\width} and {\height})$) (r0) {}; 
    \node[minimum size=0pt, inner sep=0pt] at ([shift=({180:1.5em})]r0) (r0c1) {}; 
    \node[fill=black, circle, minimum size=5pt, inner sep=0pt, label={[label distance=-1mm] 100:$v$}] at ([shift=({115:1.5em})]r0c1) (r0c2) {}; 
    \node[minimum size=0pt, inner sep=0pt] at ([shift=({135:1.5em})]r0c2) (r0c4) {}; 
    \node[minimum size=0pt, inner sep=0pt] at ([shift=({200:1.5em})]r0c1) (r0c3) {};

    \node[minimum size=0pt, inner sep=0pt,label={-30:$C_Q$}] at ($(0,0) + (-30:{\width} and {\height})$) (cqlabel) {};

    \node[minimum size=0pt, inner sep=0pt] at ([shift=({80:1.5em})]r0c2) (r0l1) {}; 
    \node[minimum size=0pt, inner sep=0pt] at ([shift=({175:1.5em})]r0c2) (r0c22) {}; 
        \node[minimum size=0pt, inner sep=0pt] at ([shift=({80:1.5em})]r0c22) (r0l21) {}; 
        \node[minimum size=0pt, inner sep=0pt] at ([shift=({165:1.5em})]r0c22) (r0l22) {}; 
    \node[minimum size=0pt, inner sep=0pt] at ([shift=({225:1.5em})]r0c3) (r0l3) {}; 
    \node[minimum size=0pt, inner sep=0pt] at ([shift=({290:1.5em})]r0c3) (r0l4) {}; 
    \draw (r0) -- (r0c1); 
    \draw [red] (r0c1) -- node[midway, left]{$e$} (r0c2); 
    \draw (r0c1) -- (r0c3); 
    \draw (r0c2) -- (r0l1); 
    \draw (r0c2) -- (r0c22); 
    \draw (r0c3) -- (r0l3); 
    \draw (r0c3) -- (r0l4); 

    \draw (r0c22) -- (r0l21); 
    \draw (r0c22) -- (r0l22); 

    \node[fill=blue, circle, minimum size=5pt, inner sep=0pt, label={[text=blue] 0:$u$}] at ($(0,0)+(170:{\width} and {\height})$) (u) {}; 
    \draw[dashed, blue] (r0c2) -- (u); 

\end{tikzpicture}%

      \caption{Case when $v \notin V(\UniC{Q})$.}
      \label{fig:802}
\end{figure}
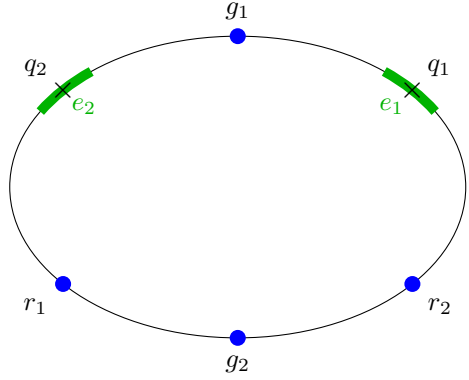
\begin{figure}[t!]
      \centering
    \begin{tikzpicture}
    \basetree
\end{tikzpicture}%

      \caption{Positions of $q_i$, $e_i$, and $g_i$.}
      \label{fig:803}
\end{figure}

      \begin{claim}
            \label{prop:getting:better:claim:for:cond:two}
            Suppose that $u, v, e_1$, and $e_2$ are defined as above. If $Q' := Q - \set{e_1, e_2} + \set{uv}$ is connected,
            then \cref{prop:stmt:case:1} or \cref{prop:stmt:case:2} holds. 
      \end{claim}
      \begin{claimproof}
            \begin{figure}[t!]
                  \newcommand{\picscale}{.55}
                  \centering
                  \begin{minipage}{0.33\hsize}
                        \centering
                         \scalebox{\picscale}{ %
    \begin{tikzpicture}
    \coordinate (c) at (0, 0); 
    \def\cRad{3cm and 2cm} 

    \def\cRone{220} 
    \def\cRtwo{320} 

    \def\cGone{90}
    \def\cGtwo{270} 

    \def\cEone{40} 
    \def\cEtwo{140} 

    \def\EdgeLength{10} 

    \draw (c) [partial ellipse=0:360:\cRad];

    \draw[myedge] (c) [partial ellipse=\cEone - \EdgeLength:\cEone + \EdgeLength:\cRad];
    \node[label={[text=black!30!green,label distance=-5pt]\cEone-180:{$e_1$}}] (e1) at ($(c)+(\cEone:\cRad)$) {};
    \draw[myedge] (c) [partial ellipse=\cEtwo - \EdgeLength:\cEtwo + \EdgeLength:\cRad];
    \node[label={[text=black!30!green,label distance=-5pt]\cEtwo-180:{$e_2$}}] (e2) at ($(c)+(\cEtwo:\cRad)$) {};

    \node[label={[label distance=-7pt]\cEone:{$q_1$}}] at (e1) {\large $\times$}; 
    \node[label={[label distance=-7pt]\cEtwo:{$q_2$}}] at (e2) {\large $\times$}; 

    \node[myvertex, label=\cRone:{$r_1$}] (r1) at ($(c)+(\cRone:\cRad)$) {};
    \node[myvertex, label=\cRtwo:{$r_2$}] (r2) at ($(c)+(\cRtwo:\cRad)$) {};

    \def\cV{120} 
    \def\cW{300} 
    \def\lU{0} 

    \draw[transform canvas={yshift=0.3mm}, green]  (c) [partial ellipse=270:\cW:\cRad];
    \draw[transform canvas={yshift=-0.3mm}, green] (c) [partial ellipse=90:\cV:\cRad];
    \node[myv, label=\cV:{$v$}] (v) at ($(c)+(\cV:\cRad)$) {};
    \node[myw, label=\cW:{$w$}] (w) at ($(c)+(\cW:\cRad)$) {};
    \path[dotted part of curve] (w)  .. controls ++(100:3) and ++(300:3) .. (v); 
    \draw[transform canvas={xshift=0.3mm}, green] (w) .. controls ++(100:3) and ++(300:3) .. (v); 

    \coordinate (r0) at (0.9, -0.1); 
    \node[myvertex, label=above:{$r_0$}] at (r0) {};

    \node[myvertex, label=\cGone:{$g_1$}] (g1) at ($(c)+(\cGone:\cRad)$) {};
    \node[myvertex, label=\cGtwo:{$g_2$}] (g2) at ($(c)+(\cGtwo:\cRad)$) {};
    \node[myu, label=\lU:{$u$}] at (u) {};
\end{tikzpicture}%
}
                  \end{minipage}
                  \begin{minipage}{0.33\hsize}
                        \centering
                         \scalebox{\picscale}{ %
    \begin{tikzpicture}

    \coordinate (c) at (0, 0); 
    \def\cRad{3cm and 2cm} 

    \def\cRone{150} 
    \def\cRtwo{30} 

    \def\cGone{90}
    \def\cGtwo{270} 

    \def\cEone{320} 
    \def\cEtwo{220} 

    \def\EdgeLength{10} 

    \draw (c) [partial ellipse=0:360:\cRad];

    \draw[myedge] (c) [partial ellipse=\cEone - \EdgeLength:\cEone + \EdgeLength:\cRad];
    \node[label={[text=black!30!green,label distance=-5pt]\cEone-180:{$e_1$}}] (e1) at ($(c)+(\cEone:\cRad)$) {};
    \draw[myedge] (c) [partial ellipse=\cEtwo - \EdgeLength:\cEtwo + \EdgeLength:\cRad];
    \node[label={[text=black!30!green,label distance=-5pt]\cEtwo-180:{$e_2$}}] (e2) at ($(c)+(\cEtwo:\cRad)$) {};

    \node[label={[label distance=-8pt]\cEone:{$q_1$}}] at (e1) {\large $\times$}; 
    \node[label={[label distance=-8pt]\cEtwo:{$q_2$}}] at (e2) {\large $\times$}; 

    \node[myvertex, label=\cRone:{$r_1$}] (r1) at ($(c)+(\cRone:\cRad)$) {};
    \node[myvertex, label=\cRtwo:{$r_2$}] (r2) at ($(c)+(\cRtwo:\cRad)$) {};

    \def\cV{120} 
    \def\cW{300} 
    \def\lU{0} 

    \draw[transform canvas={yshift=0.3mm}, green]  (c) [partial ellipse=270:\cW:\cRad];
    \draw[transform canvas={yshift=-0.3mm}, green] (c) [partial ellipse=90:\cV:\cRad];
    \node[myv, label=\cV:{$v$}] (v) at ($(c)+(\cV:\cRad)$) {};
    \node[myw, label=\cW:{$w$}] (w) at ($(c)+(\cW:\cRad)$) {};
    \path[dotted part of curve] (w)  .. controls ++(100:3) and ++(300:3) .. (v); 
    \draw[transform canvas={xshift=0.3mm}, green] (w) .. controls ++(100:3) and ++(300:3) .. (v); 

    \node[myvertex, label=\cGone:{$g_1$}] (g1) at ($(c)+(\cGone:\cRad)$) {};
    \node[myvertex, label=\cGtwo:{$g_2$}] (g2) at ($(c)+(\cGtwo:\cRad)$) {};
    \node[myu, label=\lU:{$u$}] at (u) {};

    \coordinate (r0) at (-0.6, 0.6); 
    \node[myvertex, label=below:{$r_0$}] at (r0) {};

\end{tikzpicture}%
}
                  \end{minipage}
                        \caption{Proof of Claim~\ref{prop:getting:better:claim:for:cond:two}.}
                        \label{fig:804and805}
            \end{figure}

            Let $Q_1 := Q - \set{e_1} + \set{uv}$ and $Q_2 := Q - \set{e_2} + \set{uv}$.
            It is obvious that $Q_1$ and $Q_2$ are pseudotrees. 

            By changing the roles of $g_1$ and $g_2$ if necessary, we may assume that $v$ is on the $q_1$-$q_2$ path in $Q$ that contains $g_1$. 
            Since $Q' = Q - \set{e_1, e_2} + \set{uv}$ is connected, $w$ is on the $q_1$-$q_2$ path in $Q$ that contains $g_2$ (Figure~\ref{fig:804and805}).

            We first consider the lengths of $\UniC{Q_1}$ and $\UniC{Q_2}$.
            Since $\ell_1 + \dist{g_1}{v}{Q} = \vlabel{v}{Q} \le \frac{1}{2} \newweight{\UniC{Q}} = \ell_1 + \ell_2$, it holds that $\dist{g_1}{v}{Q} \le \ell_2$. 
            Since $\vlabel{u}{Q} = \ell_2 + \dist{g_2}{u}{Q}$, $\vlabel{v}{Q} \le \ell_1 + \ell_2$, and $\vlabel{u}{Q} + \newweight{uv} < \vlabel{v}{Q}$, it holds that 
            $\dist{g_2}{u}{Q} + \newweight{uv} < \ell_1$. 
            By combining these two inequalities, we obtain 
            $\dist{g_2}{u}{Q} + \newweight{uv} + \dist{g_1}{v}{Q} < \ell_1 + \ell_2$. 
            Let $P$ be the unique $g_1$-$g_2$ path in $Q - \{e_1, e_2\} +\set{uv}$. 
            Then, this inequality shows that $\newweight{P} < \ell_1 + \ell_2$.
            Therefore, 
            $\newweight{\UniC{Q_i}} \le \ell_1 + \ell_2 + \newweight{P} < 2(\ell_1 + \ell_2) = \newweight{\UniC{Q}}$ for $i=1,2$, that is, 
            the lengths of $\UniC{Q_1}$ and $\UniC{Q_2}$  are shorter than that of $\UniC{Q}$.

            Recall that $\bar{\ell_1}$ (resp.~$\bar{\ell_2}$) denotes the first coordinate of $\ell_1$ (resp.~$\ell_2$), 
            which corresponds to the length before the perturbation. 
            If $\bar \ell (P) \ge \bar{\ell}_1$, then 
            let $r_0$ be the point on $P$ such that $\distorig{r_0}{g_1}{P} = \bar{\ell_1}$. 
            Then, $r_0$ is in $V \cup R$, because $\distorig{r_0}{v}{P}$ is equal to $\distorig{r_1}{v}{Q}$ or $\distorig{r_2}{v}{Q}$, which is half-integral. 
            If $\bar \ell (P) < \bar{\ell}_1$, then
            let $r_0$ be the point on $P$ that is closest to $g_2$ subject to $r_0 \in V \cup R$. 
            This construction shows that there exists a point $r_0 \in V\cup R$ on $P$ such that
            $\distorig{r_0}{g_1}{P} \le \bar{\ell_1}$ and
            $\distorig{r_0}{g_2}{P}\le \bar{\ell_2}$. 
            Let $V_1 \subseteq V$ (resp.~$V_2 \subseteq V$) be the vertex set of the connected component of $Q-\set{e_1, e_2}$ that contains $v$ (resp.~$w$). 
            For $i=1,2$ and for $x \in V_1$, we obtain 
            \begin{align*}
            \distorig{r_0}{x}{Q_i} &\le \distorig{r_0}{g_1}{P} + \distorig{g_1}{x}{Q}  
                                   \le \bar{\ell_1} + \distorig{g_1}{x}{Q} \\
                                   &= \max \{\distorig{r_1}{x}{Q},\ \distorig{r_2}{x}{Q} \} \le \frac{d}{2},     
            \end{align*}
            where we use (\ref{eq:labelbymiddlepoint}) in the equality. 
            Similarly, for $i=1,2$ and for $x \in V_2$, it holds that
            \begin{align*}
            \distorig{r_0}{x}{Q_i} &\le \distorig{r_0}{g_2}{P} + \distorig{g_2}{x}{Q} 
                                   \le \bar{\ell_2} + \distorig{g_2}{x}{Q} \\ 
                                   &= \max \{\distorig{r_1}{x}{Q},\ \distorig{r_2}{x}{Q} \} \le \frac{d}{2}. 
            \end{align*}
            The above inequalities show that $r_0 \in \inner{Q_i}$ for $i=1,2$. 
            Since $r_i \in \inner{Q_i}$ for $i=1,2$, we obtain $\reconf{r_1}{r_0}{Q_1}$ if $r_0 \not= r_1$ and $\reconf{r_0}{r_2}{Q_2}$ if $r_0 \not= r_2$.

            Since the lengths of $\UniC{Q_1}$ and $\UniC{Q_2}$  are shorter than that of $\UniC{Q}$, we have that 
            $\f{r_1, r_0, Q_1} < \f{r_1, r_2, Q}$ and $\f{r_0, r_2, Q_2} < \f{r_1, r_2, Q}$. 
            Therefore, if $r_0 \not\in \{r_1, r_2\}$, then $r_0$, $Q_1$, and $Q_2$ satisfy the conditions in \cref{prop:stmt:case:2}. 
            If $r_0 = r_1$ (resp.~$r_0=r_2$), then $Q_2$ (resp.~$Q_1$) satisfies the conditions in \cref{prop:stmt:case:1}. 
            This completes the proof. 
      \end{claimproof}

     In what follows, by changing the roles of $g_1$ and $g_2$ if necessary, we may assume that $\dist{g_1}{r_1}{Q} > \dist{g_2}{r_1}{Q}$.
     Then, $g_1$, $q_2$, $r_1$, $g_2$, $r_2$, and $q_1$ appear in this order along $\UniC{Q}$. 
     Let $V_2 \subseteq V$ be the vertex set of the connected component of $G-\{e_1, e_2\}$ that contains $r_1$ and $r_2$. Let $V_1 := V \setminus V_2$. 
     Then, we see that a vertex $x\in V$ is in $V_1$ if and only if $\ell_1 + \dist{g_1}{x}{Q} < \ell_2 + \dist{g_2}{x}{Q}$.
     By the symmetry of $r_1$ and $r_2$, we may assume that $v$ is contained in one of  
     $Q[g_1, q_2]$, $Q[q_2, r_1]$, and $Q[r_1, g_2]$. 
     In the remaining of the proof, we consider each case separately. 

      \newlist{caseenum}{enumerate}{2}
      \setlist[caseenum,1]{label=\textbf{Case \arabic*.}, ref=\textbf{Case \arabic*.}}
      \setlist[caseenum,2]{label=\textbf{\alph*.}, ref=\thecaseenumi\textbf{\alph*.}}
      \begin{caseenum}[leftmargin=*]

                  \begin{figure}[t!]
                        \newcommand{\picscale}{.55}
                        \centering
                        \begin{minipage}{0.3\hsize}
                              \centering
                               \scalebox{\picscale}{ %
    \begin{tikzpicture}
    \basetree; 

    \def\cV{120}; 
    \def\cW{100}; 
    \def\lU{300}; 

    \def\eLength{5}
    \def\cE{\cV-\eLength}; 
    \draw[myedge] (c) [partial ellipse=\cE-\eLength:\cE+\eLength:\cRad];
    \node[label=\cE+180:{$e$}] (e) at ($(c)+(\cE:\cRad)$) {};

    \node[myv, label=\cV:{$v$}] (v) at ($(c)+(\cV:\cRad)$) {};
    \node[myw, label=\cW:{$w$}] (w) at ($(c)+(\cW:\cRad)$) {};
    \path[dotted part of curve] (w)  .. controls ++(270:3) and ++(270:3) .. (v); 
    \node[myu, label=\lU:{$u$}] at (u) {};

\end{tikzpicture}%
}
                              \caption{$w$ is on $Q[g_1, v]$.}
                              \label{fig:getting:better:2:1}
                        \end{minipage}
                        \begin{minipage}{0.3\hsize}
                              \centering
                               \scalebox{\picscale}{ %
\begin{tikzpicture}
    \basetree; 

    \def\cV{120}; 
    \def\cW{70}; 
    \def\lU{300}; 

    \def\eLength{5}
    \def\cE{\cV-\eLength}; 
    \draw[myedge] (c) [partial ellipse=\cE-\eLength:\cE+\eLength:\cRad];
    \node[label=\cE+165:{$e$}] (e) at ($(c)+(\cE:\cRad)$) {};

    \node[myv, label=90:{$v$}] (v) at ($(c)+(\cV:\cRad)$) {};
    \node[myv, label=90:{$v'$}] (vp) at ($(c)+(\cV-2*\eLength:\cRad)$) {};
    \node[myw, label=\cW:{$w$}] (w) at ($(c)+(\cW:\cRad)$) {};
    \path[dotted part of curve] (w)  .. controls ++(270:3) and ++(270:3) .. (v); 
    \node[myu, label=\lU:{$u$}] at (u) {};

\end{tikzpicture}%
}
                              \caption{$w$ is on $Q[q_1, g_1]$.}
                              \label{fig:getting:better:2:7}
                        \end{minipage}
                        \begin{minipage}{0.3\hsize}
                              \centering
                              \scalebox{\picscale}{ %
    \begin{tikzpicture}
    \basetree; 

    \node[myv, label=105:{$v$}] (v) at ($(c)+(105:\cRad)$) {};
    \node[myw, label=120:{$w$}] (w) at ($(c)+(120:\cRad)$) {};
    \path[dotted part of curve] (w)  .. controls ++(270:3) and ++(270:3) .. (v); 
    \node[myu, label=180:{$u$}] at (u) {}; 

\end{tikzpicture}%
}
                              \caption{$w$ is on $Q[v, g_2]$.}
                              \label{fig:getting:better:2:2}
                        \end{minipage}
                  \end{figure}
            \item \label{prop:getting:better:2} 
                  Suppose that $v$ appears on $Q[g_1, q_2]$.
                  We consider the following cases separately. 
                  \begin{caseenum}
                        \item \label{prop:getting:better:2:1}
                              Suppose that $w$ appears on $Q[g_1, v]$ or $Q[q_1, g_1]$ (Figures~\ref{fig:getting:better:2:1} and~\ref{fig:getting:better:2:7}). 
                              Let $e$ be the edge on $Q[g_1, v]$ that is incident to $v$ and  
                              let $Q' := Q - \set{e} + \set{uv}$, which is a pseudotree. 
                              Since $\vlabel{u}{Q} + \newweight{uv} < \vlabel{v}{Q}$, $\vlabel{u}{Q}= \ell_1 + \dist{g_1}{u}{Q}$, and $\vlabel{v}{Q}= \ell_1 + \dist{g_1}{v}{Q}$, 
                              it holds that $\dist{g_1}{u}{Q} + \newweight{uv} < \dist{g_1}{v}{Q}$. 
                              Hence, $\newweight{\UniC{Q'}} < \newweight{\UniC{Q}}$ holds. 
                              For $i=1,2$ and for $x \in V_1$, we obtain     
                                \begin{align*}
                                \dist{r_i}{x}{Q'} &\le \dist{r_i}{g_1}{Q'} + \dist{g_1}{x}{Q'}  \\ 
                                       &\le  \ell_1 + \dist{g_1}{x}{Q} = \max \{\dist{r_1}{x}{Q},\ \dist{r_2}{x}{Q} \}
                                \end{align*}
                              by (\ref{eq:labelbymiddlepoint}), which implies that $\distorig{r_i}{x}{Q'} \le \frac{d}{2}$. 
                              Similarly, for $i=1,2$ and for $x \in V_2$, we obtain     
                                \begin{align*}
                                \dist{r_i}{x}{Q'} &\le \dist{r_i}{g_2}{Q'} + \dist{g_2}{x}{Q'}  \\ 
                                       &=  \ell_2 + \dist{g_2}{x}{Q} = \max \{\dist{r_1}{x}{Q},\ \dist{r_2}{x}{Q} \}, 
                                \end{align*}
                               which implies that $\distorig{r_i}{x}{Q'} \le \frac{d}{2}$. 
                              This shows that $\reconf{r_1}{r_2}{Q'}$, and hence $Q'$ satisfies the conditions in \cref{prop:stmt:case:1}. 

                        \item \label{prop:getting:better:2:2}
                              Suppose that $w$ appears on $Q[v, q_2]$ (Figure~\ref{fig:getting:better:2:2}). 
                              Then,  $\vlabel{v}{Q} < \vlabel{w}{Q} \le \vlabel{u}{Q}$, which contradicts the assumption that
                              $\vlabel{v}{Q} > \vlabel{u}{Q} + \newweight{uv}$. 

                        \item \label{prop:getting:better:2:3-6}
                              Suppose that $w$ appears on $Q[q_2, r_1]$, $Q[r_1, g_2]$, $Q[g_2, r_2]$, or $Q[r_2, q_1]$.
                              In this case, \cref{prop:stmt:case:1} or \cref{prop:stmt:case:2} holds by \cref{prop:getting:better:claim:for:cond:two}.
                  \end{caseenum}
            \item \label{prop:getting:better:3} 
                 Suppose that $v$ appears on on $Q[q_2, r_1]$.
                 We consider the following cases separately. 

                  \begin{figure}[t!]
                        \newcommand{\picscale}{.55}
                        \centering
                        \begin{minipage}{0.4\hsize}
                              \centering
                               \scalebox{\picscale}{ %
\begin{tikzpicture}
    \basetree;

    \def\cV{160}; 
    \def\cW{200}; 
    \def\lU{270}; 

    \def\eLength{5}
    \def\cE{\cV+\eLength}; 
    \draw[myedge] (c) [partial ellipse=\cE-\eLength:\cE+\eLength:\cRad];
    \node[label=180:{$e$}] (e) at ($(c)+(\cE:\cRad)$) {};

    \node[myv, label=\cV:{$v$}] (v) at ($(c)+(\cV:\cRad)$) {};
    \node[myw, label=\cW:{$w$}] (w) at ($(c)+(\cW:\cRad)$) {};

    \path[dotted part of curve] (w)  .. controls ++(0:2) and ++(360:2) .. (v); 
    \node[myu, label=\lU:{$u$}] at (u) {};


\end{tikzpicture}%
}
                              \caption{$w$ is on $Q[v, r_1]$ }
                              \label{fig:getting:better:3:3}
                        \end{minipage}
                        \begin{minipage}{0.4\hsize}
                              \centering
                               \scalebox{\picscale}{ %
    \begin{tikzpicture}
    \basetree;

    \def\cV{180}; 
    \def\cW{250}; 
    \def\lU{270}; 

    \def\cEthree{190}; 
    \draw[myedge] (c) [partial ellipse=\cEthree-\EdgeLength:\cEthree+\EdgeLength:\cRad];
    \node[label=\cEthree:{$e$}] (e) at ($(c)+(\cEthree:\cRad)$) {};

    \node[myv, label=\cV:{$v$}] (v) at ($(c)+(\cV:\cRad)$) {};
    \node[myw, label=\cW:{$w$}] (w) at ($(c)+(\cW:\cRad)$) {};

    \path[opacity=0, get path length=\b] (w)  .. controls ++(90:2) and ++(0:2) .. (v); 
    \draw[opacity=0, get path length=\rw] (c) [partial ellipse=\cRone:\cW:\cRad]; 
    \getlength{\cb}{\b}; 
    \getlength{\crw}{\rw}; 
    \pgfmathsetmacro\posRthree{\crw/\cb}; 

    \path[dotted part of curve] (w)  .. controls ++(90:2) and ++(0:2) .. (v) 
        node[myvertex, label=0:{$r_0$}, pos=\posRthree] {}; 
    \node[myu, label=\lU:{$u$}] at (u) {};

\end{tikzpicture}%
}
                              \caption{$w$ is on $Q[r_1, g_2]$ }
                              \label{fig:getting:better:3:4}
                        \end{minipage}
                  \end{figure}

                  \begin{figure}[t!]
                        \newcommand{\picscale}{.55}
                        \centering
                        \begin{minipage}{0.4\hsize}
                               \scalebox{\picscale}{ %
    \begin{tikzpicture}
    \basetree;


    \def\cV{180}; 
    \def\cW{300}; 
    \def\lU{270}; 

    \def\cEthree{190}; 
    \draw[myedge] (c) [partial ellipse=\cEthree-\EdgeLength:\cEthree+\EdgeLength:\cRad];
    \node[label=\cEthree:{$e$}] (e) at ($(c)+(\cEthree:\cRad)$) {};

    \node[myv, label=\cV:{$v$}] (v) at ($(c)+(\cV:\cRad)$) {};
    \node[myw, label=\cW:{$w$}] (w) at ($(c)+(\cW:\cRad)$) {};

    \path[opacity=0, get path length=\b] (w)  .. controls ++(90:2) and ++(0:2) .. (v); 
    \draw[opacity=0, get path length=\rw] (c) [partial ellipse=\cRone:\cV:\cRad]; 
    \getlength{\cb}{\b}; 
    \getlength{\crw}{\rw}; 
    \pgfmathsetmacro\posRthree{1-\crw/\cb}; 

    \path[dotted part of curve] (w)  .. controls ++(90:2) and ++(0:2) .. (v) 
        node[myvertex, label=90:{$r_0$}, pos=\posRthree] {}; 
    \node[myu, label=\lU:{$u$}] at (u) {};
\end{tikzpicture}%
}
                              \caption{$w$ is on $Q[g_2, r_2]$ }
                              \label{fig:getting:better:3:5}
                        \end{minipage}
                        \begin{minipage}{0.4\hsize}
                               \scalebox{\picscale}{ %
    \begin{tikzpicture}
    \basetree;


    \def\cV{180}; 
    \def\cW{350}; 
    \def\lU{270}; 

    \def\cEthree{190}; 
    \draw[myedge] (c) [partial ellipse=\cEthree-\EdgeLength:\cEthree+\EdgeLength:\cRad];
    \node[label=\cEthree:{$e$}] (e) at ($(c)+(\cEthree:\cRad)$) {};

    \node[myv, label=\cV:{$v$}] (v) at ($(c)+(\cV:\cRad)$) {};
    \node[myw, label=\cW:{$w$}] (w) at ($(c)+(\cW:\cRad)$) {};

    \path[opacity=0, get path length=\b] (w)  .. controls ++(180:2) and ++(0:2) .. (v); 
    \draw[opacity=0, get path length=\rw] (c) [partial ellipse=\cRone:\cV:\cRad]; 
    \getlength{\cb}{\b}; 
    \getlength{\crw}{\rw}; 
    \pgfmathsetmacro\posRthree{1-\crw/\cb}; 

    \path[dotted part of curve] (w)  .. controls ++(180:2) and ++(0:2) .. (v) 
        node[myvertex, label=90:{$r_0$}, pos=\posRthree] {}; 
    \node[myu, label=\lU:{$u$}] at (u) {};
\end{tikzpicture}%
}
                              \caption{$w$ is on $Q[r_2, q_1]$ }
                              \label{fig:getting:better:3:6}
                        \end{minipage}
                  \end{figure}

                  \begin{caseenum}
                        \item \label{prop:getting:better:3:1:7}
                              Suppose that $w$ appears on $Q[q_1, g_1]$ or $Q[g_1, q_2]$.
                              In this case, \cref{prop:stmt:case:1} or \cref{prop:stmt:case:2} holds by \cref{prop:getting:better:claim:for:cond:two}.

                        \item \label{prop:getting:better:3:2}
                              Suppose that $w$ appears on $Q[q_2, v]$. 
                              Then,  $\vlabel{v}{Q} < \vlabel{w}{Q} \le \vlabel{u}{Q}$, which contradicts the assumption that
                              $\vlabel{v}{Q} > \vlabel{u}{Q} + \newweight{uv}$. 

                        \item \label{prop:getting:better:3:4}
                              Suppose that $w$ appears on $Q[v, r_1]$, $Q[r_1, g_2]$, $Q[g_2, r_2]$, or $Q[r_2, q_1]$ (Figures~\ref{fig:getting:better:3:3}--\ref{fig:getting:better:3:6}). 
                              Let $e$ be the edge on $Q[g_2, v]$ that is incident to $v$. 
                              Let $Q_0 := Q - \set{e} + \set{uv}$ and $Q_1 := Q - \set{e_1} + \set{uv}$.  
                              Since $\vlabel{u}{Q} + \newweight{uv} < \vlabel{v}{Q}$, $\vlabel{u}{Q}= \ell_2 + \dist{g_2}{u}{Q}$, and $\vlabel{v}{Q}= \ell_2 + \dist{g_2}{v}{Q}$, 
                              it holds that $\dist{g_2}{u}{Q} + \newweight{uv} < \dist{g_2}{v}{Q}$. 
                              Hence, $\newweight{\UniC{Q_0}} < \newweight{\UniC{Q}}$ holds. 
                              We also have that $\newweight{\UniC{Q_1}} \le  \dist{g_2}{v}{Q} + \dist{g_2}{u}{Q} + \newweight{uv} < 2 \dist{g_2}{v}{Q} < \newweight{\UniC{Q}}$. 
                              Therefore, the lengths of $\UniC{Q_0}$ and $\UniC{Q_1}$  are shorter than that of $\UniC{Q}$.

                              Let $P$ be the unique $g_2$-$v$ path in $Q - \{e, e_1\} + \set{uv}$. 
                              Since $\size{E(P)} \le \distorig{g_2}{v}{Q} = \bar \ell_2 + \distorig{r_1}{v}{Q}$,
                              there exists a point $r_0 \in V\cup R$ on $P$ such that
                              $\distorig{r_0}{g_2}{P} \le \bar{\ell}_2$ and $\distorig{r_0}{v}{P}\le \distorig{r_1}{v}{Q}$.
                              Note that such $r_0$ exists in $V \cup R$, because $\distorig{r_1}{v}{Q}$ is half-integral. 
                              
                              Then, we see that $r_0 \in \inner{Q_i}$ for $i=0,1$,  
                              $r_2 \in \inner{Q_0}$, and $r_1 \in \inner{Q_1}$ by the following observations. 
                              \begin{itemize}
                                  \item 
                                  For $i=0,1$ and for $x \in V_1$,  
                                  \begin{align*}
                                     \distorig{r_0}{x}{Q_i} &\le \distorig{r_0}{v}{P} + \distorig{v}{g_1}{Q} + \distorig{g_1}{x}{Q}  \\ 
                                                            &\le \distorig{r_1}{v}{Q} + \distorig{v}{g_1}{Q} + \distorig{g_1}{x}{Q} \\ 
                                                            &=  \bar{\ell_1} + \distorig{g_1}{x}{Q} = \max\{\distorig{r_1}{x}{Q},\ \distorig{r_2}{x}{Q} \} \le \frac{d}{2}.      
                                  \end{align*}

                                  \item
                                  For $i=0,1$ and for $x \in V_2$, 
                                  \begin{align*}
                                     \distorig{r_0}{x}{Q_i} &\le \distorig{r_0}{g_2}{P} + \distorig{g_2}{x}{Q_i} \\ 
                                         &\le \bar{\ell}_2 + \distorig{g_2}{x}{Q} = \max\{\distorig{r_1}{x}{Q},\ \distorig{r_2}{x}{Q} \} \le \frac{d}{2}.     
                                  \end{align*}
                                  
                                  \item
                                  For $x \in V_j$ with $j \in \{1, 2\}$, 
                                  \begin{align*}
                                     \distorig{r_2}{x}{Q_0} &\le \distorig{r_2}{g_j}{Q_0} + \distorig{g_j}{x}{Q_0}  \\ 
                                                            &\le \bar{\ell_j} + \distorig{g_j}{x}{Q} = \max\{\distorig{r_1}{x}{Q},\ \distorig{r_2}{x}{Q} \} \le \frac{d}{2}.      
                                  \end{align*}

                                  \item
                                  For $x \in V$, it holds that $\distorig{r_1}{x}{Q_1} \le \distorig{r_1}{x}{Q - e_1} = \distorig{r_1}{x}{Q} \le \frac{d}{2}$, because $Q - e_1$ is the shortest path tree starting from $r_1$ in $Q$.  
                              \end{itemize}
                              Therefore, if $r_0 \not\in \{r_1, r_2\}$, then we obtain $\reconf{r_1}{r_0}{Q_1}$ and $\reconf{r_0}{r_2}{Q_0}$, and hence
                              $r_0$, $Q_1$, and $Q_0$ satisfy the conditions in \cref{prop:stmt:case:2}. 
                              If $r_0 = r_1$ (resp.~$r_0=r_2$), then $Q_0$ (resp.~$Q_1$) satisfies the conditions in \cref{prop:stmt:case:1}. 
                  \end{caseenum}
                  
            \item \label{prop:getting:better:4} 
                   Suppose that $v$ appears on on $Q[r_1, g_2]$.
                   We consider the following cases separately. 
                  \begin{figure}[t!]
                        \centering
                        \newcommand{\picscale}{.55}
                        \begin{minipage}{0.4\hsize}
                               \scalebox{\picscale}{ %
\begin{tikzpicture}
    \basetree;


    \def\cV{230}; 
    \def\cW{250}; 
    \def\lU{0}; 

    \def\eLength{5}
    \def\cE{\cV+\eLength}; 
    \draw[myedge] (c) [partial ellipse=\cE-\eLength:\cE+\eLength:\cRad];
    \node[label=270:{$e$}] (e) at ($(c)+(\cE:\cRad)$) {};

    \node[myv, label=\cV:{$v$}] (v) at ($(c)+(\cV:\cRad)$) {};
    \node[myw, label=\cW+20:{$w$}] (w) at ($(c)+(\cW:\cRad)$) {};

    \path[dotted part of curve] (w)  .. controls ++(90:2.5) and ++(90:2.5) .. (v); 
    \node[myu, label=\lU:{$u$}] at (u) {};

\end{tikzpicture}%
}
                              \caption{$w$ is on $Q[v, g_2]$.}
                              \label{fig:getting:better:4:4}
                        \end{minipage}
                        \begin{minipage}{0.4\hsize}
                               \scalebox{\picscale}{ %
\begin{tikzpicture}
    \basetree;


    \def\cV{240}; 
    \def\cW{300}; 
    \def\lU{0}; 

    \def\eLength{5}
    \def\cE{\cV+\eLength}; 
    \draw[myedge] (c) [partial ellipse=\cE-\eLength:\cE+\eLength:\cRad];
    \node[label=270:{$e$}] (e) at ($(c)+(\cE:\cRad)$) {};

    \node[myv, label=\cV:{$v$}] (v) at ($(c)+(\cV:\cRad)$) {};
    \node[myw, label=\cW+20:{$w$}] (w) at ($(c)+(\cW:\cRad)$) {};

    \path[dotted part of curve] (w)  .. controls ++(90:2.5) and ++(90:2.5) .. (v); 
    \node[myu, label=\lU:{$u$}] at (u) {};

\end{tikzpicture}%
}
                              \caption{$w$ is on $Q[g_2, r_2]$.}
                              \label{fig:getting:better:4:5}
                        \end{minipage}
                  \end{figure}
                  \begin{caseenum}
                        \item \label{prop:getting:better:4:1:7}
                              Suppose that $w$ appears on $Q[q_1, g_1]$ or $Q[g_1, q_2]$.
                              In this case, \cref{prop:stmt:case:1} or \cref{prop:stmt:case:2} holds by \cref{prop:getting:better:claim:for:cond:two}.

                        \item \label{prop:getting:better:4:2:3}
                              Suppose that $w$ appears on $Q[q_2, v]$. 
                              Then,  $\vlabel{v}{Q} < \vlabel{w}{Q} \le \vlabel{u}{Q}$, which contradicts the assumption that
                              $\vlabel{v}{Q} > \vlabel{u}{Q} + \newweight{uv}$. 

                        \item \label{prop:getting:better:4:4}
                              Suppose that $w$ appears on $Q[v, g_2]$ or $Q[g_2, r_2]$  (Figures~\ref{fig:getting:better:4:4} and~\ref{fig:getting:better:4:5}). 
                              Let $e$ be the edge on $Q[v, g_2]$ that is incident to $v$ 
                              and let $Q' := Q - \set{e} + \set{uv}$.  
                              Since $\vlabel{u}{Q} + \newweight{uv} < \vlabel{v}{Q}$, 
                              $\vlabel{u}{Q}= \ell_2 + \dist{g_2}{u}{Q}$, and $\vlabel{v}{Q}= \ell_2 + \dist{g_2}{v}{Q}$, 
                              it holds that $\dist{g_2}{u}{Q} + \newweight{uv} < \dist{g_2}{v}{Q}$. 
                              Hence, $\newweight{\UniC{Q'}} < \newweight{\UniC{Q}}$ holds. 
                              For $i=1,2$ and for $x \in V_1$, we obtain     
                                \begin{align*}
                                \dist{r_i}{x}{Q'} &\le \dist{r_i}{g_1}{Q'} + \dist{g_1}{x}{Q'}  \\ 
                                                  &=  \ell_1 + \dist{g_1}{x}{Q} = \max \{\dist{r_1}{x}{Q},\ \dist{r_2}{x}{Q} \}
                                \end{align*}
                              by (\ref{eq:labelbymiddlepoint}), which implies that $\distorig{r_i}{x}{Q'} \le \frac{d}{2}$. 
                              Similarly, for $i=1,2$ and for $x \in V_2$, we obtain     
                                \begin{align*}
                                \dist{r_i}{x}{Q'} &\le \dist{r_i}{g_2}{Q'} + \dist{g_2}{x}{Q'}  \\ 
                                       &\le  \ell_2 + \dist{g_2}{x}{Q} = \max \{\dist{r_1}{x}{Q},\ \dist{r_2}{x}{Q} \}, 
                                \end{align*}
                               which implies that $\distorig{r_i}{x}{Q'} \le \frac{d}{2}$. 
                              This shows that $\reconf{r_1}{r_2}{Q'}$, and hence $Q'$ satisfies the conditions in \cref{prop:stmt:case:1}.

                        \item \label{prop:getting:better:4:6}
                              Suppose that $w$ appears on $Q[r_2, q_1]$. 
                              Then,  $\vlabel{v}{Q} \le \vlabel{r_1}{Q} = \vlabel{r_2}{Q} \le \vlabel{w}{Q} \le \vlabel{u}{Q}$, which contradicts the assumption that
                              $\vlabel{v}{Q} > \vlabel{u}{Q} + \newweight{uv}$. 
                  \end{caseenum}
      \end{caseenum}
        By the above case analysis, the proposition holds. 
\end{proof}

By this proposition, we can derive Theorem~\ref{thm:good:sequence}.

\begin{proof}[Proof of Theorem~\ref{thm:good:sequence}]
      It suffices to prove the following statement by the induction on $\f{r_1, r_2, Q}$: for a triplet $\tuple{r_1, r_2, Q}$ with $\reconf{r_1}{r_2}{Q}$, $\recgraph'$ contains an $r_1$-$r_2$ path. 
      
      Since $\f{\cdot}$ can take only finitely many values, there exists a triplet $\tuple{r^*_1, r^*_2, Q^*}$ with $\reconf{r^*_1}{r^*_2}{Q^*}$ that minimizes $\f{r^*_1, r^*_2, Q^*}$.  
      If $\tuple{r^*_1, r^*_2, Q^*}$ is not good, then we can apply \cref{prop:getting:better} to obtain a tripe with smaller $\f{\cdot}$, which contradicts the minimality of $\f{r^*_1, r^*_2, Q^*}$. 
      Therefore, $\tuple{r^*_1, r^*_2, Q^*}$ is good, and hence $r^*_1 r^*_2 \in E(\recgraph')$, which shows the base case of the statement. 

     To show the induction step, suppose that $\tuple{r_1, r_2, Q}$ is a triplet with $\reconf{r_1}{r_2}{Q}$. 
     If $\tuple{r_1, r_2, Q}$ is good, then $r_1 r_2 \in E(\recgraph')$, and hence we are done. 
     Otherwise, we apply \cref{prop:getting:better} to obtain \cref{prop:stmt:case:1} or \cref{prop:stmt:case:2}. 
     When we obtain \cref{prop:stmt:case:1}, since $\reconf{r_1}{r_2}{Q'}$ and $\f{r_1, r_2, Q'} < \f{r_1, r_2, Q}$, $\recgraph'$ contains an $r_1$-$r_2$ path by the induction hypothesis. 
     When we obtain \cref{prop:stmt:case:2}, the induction hypothesis shows that $\recgraph'$ contains an $r_1$-$r_0$ path and an $r_0$-$r_2$ path, which shows the existence of an $r_1$-$r_2$ path in $\recgraph'$.
\end{proof}


\section{Concluding Remarks}
\label{sec:conclusion}

   In this paper, we have investigated the computational complexity of \textsc{RST with Small ({\rm or} Large) Maximum Degree} and {\sc RST with Small ({\rm or} Large) Diameter}.
   
    We have proved in Theorem~\ref{thm:smd_hard} that {\sc RST with Small Maximum Degree} is PSPACE-complete for $d \ge 3$. 
    One can naturally ask what happens for the case of maximum degree at most $2$. 
    In this case, the problem becomes the \textsc{Hamiltonian Path Reconfiguration} problem, in which a feasible solution is a Hamiltonian path. 
    We were not able to determine the complexity of this problem and we left it as an open problem. 
    Note that \textsc{Hamiltonian Path Reconfiguration} problem can be also seen as a special case of {\sc RST with Large Diameter} in which the lower bound on the diameter is $|V(G)|-1$. 
    Note also that, for the Hamiltonian \emph{cycle} case, the \textsc{Hamiltonian Cycle Reconfiguration} problem is known to be PSPACE-complete~\cite{Takaoka18}, in which two edge flips are executed in one step.



    We have proved in Theorem~\ref{thm:ldiam_hard} that {\sc RST with Large Diameter} is NP-hard, but it is unclear whether this problem belongs to the class NP. 
    We conjecture that the problem is PSPACE-complete, and left this question as another open problem.

\newpage
\bibliography{ref}

\end{document}